\pdfoutput=1
\documentclass[a4paper, USenglish, cleveref, autoref, thm-restate]{lipics-v2021}

\title{A Direct Translation from LTL with Past to Deterministic Rabin Automata}

\author{Shaun Azzopardi}{University of Malta, Malta}{shaun.azzopardi@gmail.com}{https://orcid.org/0000-0002-2165-3698}{}
\author{David Lidell}{University of Gothenburg, Sweden}{david.lidell@gu.se}{https://orcid.org/0000-0003-2760-339X}{}
\author{Nir Piterman}{University of Gothenburg, Sweden}{nir.piterman@gu.se}{https://orcid.org/0000-0002-8242-5357}{}

\authorrunning{S. Azzopardi, D. Lidell, N. Piterman}
\hideLIPIcs

\keywords{Linear temporal logic, \texorpdfstring{$\omega$}{Omega}-automata, determinization}
\funding{This research is supported by the Swedish research council (VR) project (No. 2020-04963) and the ERC Consolidator grant DSynMA (No. 772459).}
\acknowledgements{We are grateful to Gerardo Schneider and the reviewers for their very useful feedback.}

\usepackage{mathtools}
\usepackage{hyperref}

\nolinenumbers

\newcommand{\ltlXsymb}{\mathbf{X}}
\newcommand{\ltlUsymb}{\mathbf{U}}
\newcommand{\ltlWsymb}{\mathbf{W}}
\newcommand{\ltlRsymb}{\mathbf{R}}
\newcommand{\ltlMsymb}{\mathbf{M}}
\newcommand{\ltlSsymb}{\mathbf{S}}
\newcommand{\ltlWSsymb}{\mathbf{\tilde{S}}}
\newcommand{\ltlNWSsymb}{\mathbf{B}}
\newcommand{\ltlNSsymb}{\mathbf{\tilde{B}}}
\newcommand{\ltlX}[0]{\ltlXsymb}
\newcommand{\ltlG}[0]{\mathbf{G}}
\newcommand{\ltlF}[0]{\mathbf{F}}
\newcommand{\ltlU}[0]{\, \ltlUsymb \,}
\newcommand{\ltlW}[0]{\, \ltlWsymb \,}
\newcommand{\ltlR}[0]{\, \ltlRsymb \,}
\newcommand{\ltlM}[0]{\, \ltlMsymb \,}
\newcommand{\ltlY}[0]{\mathbf{Y}}
\newcommand{\ltlWY}[0]{\mathbf{\tilde{Y}}}
\newcommand{\ltlO}[0]{\mathbf{O}}
\newcommand{\ltlH}[0]{\mathbf{H}}
\newcommand{\ltlS}[0]{\, \ltlSsymb \,}
\newcommand{\ltlWS}[0]{\, \ltlWSsymb \,}
\newcommand{\ltlNS}[0]{\, \ltlNSsymb \,}
\newcommand{\ltlNWS}[0]{\, \ltlNWSsymb \,}
\newcommand{\ltlUNOP}[0]{\mathit{op} \,}
\newcommand{\ltlBINOP}[0]{\, \mathit{op} \,}
\newcommand{\rw}[1]{\langle #1 \rangle}
\newcommand{\wc}[0]{\mathit{wc}}
\newcommand{\cwc}[0]{\mathit{rc}}
\newcommand{\sff}[0]{\mathit{sf}}
\newcommand{\psf}[0]{\mathit{psf}}
\newcommand{\af}[0]{\mathit{af}}
\newcommand{\afc}[0]{\mathit{af_{\ell}}}
\newcommand{\puc}[0]{\mathit{pu_{\ell}}}
\newcommand{\G}[3]{\mathcal{G}_{#2, #3}^{#1}}
\newcommand{\F}[3]{\mathcal{F}_{#2, #3}^{#1}}
\newcommand{\GF}[3]{\mathcal{GF}_{#2, #3}^{#1}}
\newcommand{\FG}[3]{\mathcal{FG}_{#2, #3}^{#1}}
\newcommand{\Var}{\mathit{Var}}
\newcommand{\placeholder}{\makebox[1ex]{\textbf{$\cdot$}}}

\newcommand{\entcon}[2]{C_{#1, #2}}
\newcommand{\entseq}[2]{\vec{C}_{#1, #2}}
\newcommand{\compseq}[2]{\circ \, \entseq{#1}{#2}}
\newcommand{\entconw}[3]{C^{#3}_{#1, #2}}

\ccsdesc[500]{Theory of computation~Automata over infinite objects}
\ccsdesc[500]{Theory of computation~Modal and temporal logics}

\begin{document}

\maketitle

\begin{abstract}
We present a translation from linear temporal logic \emph{with past} to
deterministic Rabin automata. The translation is direct in the sense
that it does not rely on intermediate non-deterministic automata,
and asymptotically optimal, resulting in Rabin automata of doubly
exponential size. It is based on two main notions. One is that it
is possible to encode the history contained in the prefix of a word, as relevant for the formula under consideration, by performing simple rewrites of the formula itself. As a consequence, a formula involving past operators can (through such rewrites, which involve alternating between weak and strong versions of past operators in the formula’s syntax tree) be correctly evaluated at an arbitrary point in the future without requiring backtracking through the word. The other is that this allows us to generalize to linear temporal logic with past the result that the language of a pure-future formula can be decomposed into a Boolean combination of simpler languages, for which deterministic automata with simple acceptance conditions are easily constructed.
\end{abstract}

\section{Introduction}
Finite-state automata over infinite words, commonly referred to as $\omega$-automata, have been studied as models of computation since their introduction in the 1960s. Their introduction was followed by extensive investigations into their expressive power, closure properties, and the decidability and complexity of related decision problems, such as non-emptiness and language containment. In particular, it was soon established that determinization constructions of such automata are considerably more complex than they are for their finite word counterparts, for which a simple subset construction is sufficient.

In the 1970s, Pnueli~\cite{pnueli_ltl} proposed linear temporal logic ($\mathit{LTL})$ as a specification language for the analysis and verification of programs, based on the idea that a program can be viewed in the context of a stream of interactions between it and its environment. It has since become ubiquitous in both academia and the industry due to the perceived balance of its expressive power and the computational complexity of its related decision procedures. The intimate semantic connection between $\mathit{LTL}$ and $\omega$-automata further motivated research into the properties of both, also from a practical point of view. In addition to being of theoretical interest, the fact that multiple automata-based applications of $\mathit{LTL}$ as a specification language \textendash{} such as reactive synthesis and probabilistic model checking \textendash{} require deterministic automata, has drawn additional attention to the study of efficient translation procedures from $\mathit{LTL}$ to deterministic $\omega$-automata.

In the classic approach to determinization, the given $\mathit{LTL}$ formula is first translated to a non-deterministic Büchi automaton. This automaton is subsequently determinized, for example using Safra's procedure~\cite{safra_determinization}, or the more modern Safra/Piterman variant~\cite{piterman_determinization}. Such constructions are both conceptually complex and difficult to implement. Moreover, information about the structure of the given formula is lost in the initial translation to Büchi automata; in particular, because of the generality of such determinization procedures, they cannot take advantage of the fact that $\mathit{LTL}$ is less expressive than $\omega$-automata.

In 2020, Esparza et al.~\cite{esparza_unified_translation} presented a novel translation from $\mathit{LTL}$ to various automata, that is asymptotically optimal in both the deterministic and non-deterministic cases. The translation is direct in the sense that it avoids the intermediate steps of the classic approach, which involve employing a variety of separate translation procedures. In particular, for deterministic automata, it forgoes Safra-based constructions. Instead, the language of the formula under consideration is decomposed into a Boolean combination of simpler languages, for which deterministic automata with simple acceptance conditions can easily be constructed using what the authors have dubbed the ``after-function''; that such a decomposition exists is a fundamental result named the ``Master Theorem''. These simpler automata are then combined into the desired final automaton using basic product or union operations according to the structure of the decomposition.

In this paper, we consider past linear temporal logic ($\mathit{pLTL}$); the extension of $\mathit{LTL}$ that includes the past operators ``Yesterday'' and ``Since'', analogous to the standard operators ``Next'' and ``Until'', respectively. We adapt the Master Theorem and generalize the derived $\mathit{LTL}$-to-deterministic-Rabin-automata translation to $\mathit{pLTL}$, while maintaining its optimal asymptotic complexity. The merits of $\mathit{LTL}$ extended with past operators were argued by Lichtenstein et al.~\cite{lichtenstein_glory}, who also showed that their addition does not result in a more expressive logic with respect to initial equivalence of formulae. At the same time, the complexity of satisfiability/validity- and model checking remains PSPACE-complete for $\mathit{pLTL}$~\cite{sistla_complexity_pltl}. When it comes to determinization, as Esparza's approach \cite{esparza_unified_translation} applies only to (future) $\mathit{LTL}$, the only option is the two-step approach: translate $\mathit{pLTL}$ to nondeterministic B\"uchi automata \cite{piterman_handbook} and convert to deterministic parity/Rabin automata \cite{safra_determinization,piterman_determinization}. 
Our generalization to $\mathit{pLTL}$ is of both theoretical and practical interest, for two main reasons. First, certain properties are more naturally and elegantly expressed with the help of past operators. Secondly, there exist formulae in $\mathit{pLTL}$ such that all (initially) equivalent $\mathit{LTL}$ formulae are exponentially larger~\cite{markey_succinct}. Both of these properties can be exemplified by considering the natural language specification \textit{``At any point in time, $p$ should occur if and only if $q$ and $r$ have occurred at least once in the past''}. Expressing this in $\mathit{LTL}$ requires explicitly describing the possible desired orders of occurrences of $p$, $q$, and $r$:
\begin{equation*}
\begin{gathered}    
\big( (\neg p \land \neg q) \ltlW (r \land ((\neg p \land \neg q) \ltlW (p \land q))) \lor
(\neg p \land \neg r) \ltlW (q \land ((\neg p \land \neg r) \ltlW (p \land r))) \big) \\ 
\land \, \ltlG (p \Rightarrow \ltlX \ltlG p).
\end{gathered}
\end{equation*}
The same specification is very intuitively and succinctly expressed in $\mathit{pLTL}$ by the formula $\ltlG (p \Leftrightarrow \ltlO q \land \ltlO r)$.

The main contributions of this paper are an adaptation of the Master Theorem for $\mathit{pLTL}$ and a utilization of this in the form of an asymptotically optimal direct translation from $\mathit{pLTL}$ to deterministic Rabin automata. The paper is structured in the following manner: In Section~\ref{section_preliminaries}, we define the syntax and semantics of linear temporal logic with past, infinite words and $\omega$-automata, and the notion of propositional equivalence. As the automata we aim to construct are one-way,
we need a way to encode information about the input history directly into the formula being translated; the machinery required to accomplish this is introduced in Section~\ref{section_encodingthepast}. The foundation of the translation from $\mathit{pLTL}$ to Rabin automata is the after-function of Section~\ref{section_afterfunction}. The subsequent Sections~\ref{section_stability} and~\ref{section_toautomata} are adaptations of the corresponding sections in Esparza et al.~\cite{esparza_unified_translation}. The decomposition of the language of the formula to be translated into a Boolean combination of simpler languages requires considering the limit-behavior of the formula. This notion is made precise in Section~\ref{section_stability}, which finishes with a presentation of the Master Theorem for $\mathit{pLTL}$. Section~\ref{section_toautomata} describes how to create deterministic automata from the simpler languages of the decomposition, and how to combine them into a deterministic Rabin automaton. 
Finally, we conclude with a brief discussion in Section~\ref{section_discussion}.

\section{Preliminaries}
\label{section_preliminaries}
\subsection{Infinite Words and \texorpdfstring{$\omega$}{Omega}-automata}
An infinite word $w$ over a non-empty finite alphabet $\Sigma$ is an infinite sequence $\sigma_0, \sigma_1, \dots$ of letters from $\Sigma$. Given an infinite word $w$, we denote the finite infix $\sigma_t, \sigma_{t+1}, \dots, \sigma_{t + s - 1}$ of $w$ by $w_{ts}$. If $t = s$, then $w_{ts}$ is defined as representing the empty word $\epsilon$. 
Note that no ambiguity will arise as we use parentheses whenever required; for example, $w_{(st)(en)}$ rather than $w_{sten}$
We denote the infinite suffix $\sigma_t, \sigma_{t+1}, \dots$ of $w$ by $w_t$. We will also consider finite words, using the same infix- and suffix notation.

An \textit{$\omega$-automaton} over an alphabet $\Sigma$ is a quadruple $(Q, Q_0, \delta, \alpha)$, where $Q$ is a finite set of states, $Q_0 \subseteq Q$ a non-empty set of initial states, $\delta \in Q \times \Sigma \to 2^Q$ a (partial) transition function, and $\alpha$ a set constituting its acceptance condition. In the case where $\vert Q_0 \vert = 1$ and $\vert \delta(q, \sigma) \vert \leq 1$ for all $q \in Q$ and all $\sigma \in \Sigma$, the automaton is called \textit{deterministic}.  
For deterministic automata we write $\delta:Q \times \Sigma \rightarrow Q$.
Given an $\omega$-automaton $\mathcal{A} = (Q, Q_0, \delta, \alpha)$ and an infinite word $w = \sigma_0, \sigma_1, \dots$, both over the same alphabet, a \textit{run} of $\mathcal{A}$ on $w$ is a sequence of states $r = r_0, r_1, \dots$ of $Q$ such that $r_0 \in Q_0$ and $r_{i + 1} \in \delta(r_i, \sigma_i)$ for all $i \geq 0$. Given such a run $r$, we write $\mathit{Inf}(r)$ to denote the set of states appearing infinitely often in $r$.

In this paper, we consider three particular classes of $\omega$-automata, which differ only by their acceptance condition $\alpha$. \textit{Büchi}- and \textit{co-Büchi} automata are $\omega$-automata with $\alpha$ as a set $Q' \subseteq Q$. A Büchi automaton \textit{accepts} the infinite word $w$ iff there exists a run $r$ on $w$ such that $\mathit{Inf}(r) \cap Q' \neq \varnothing$, while a co-Büchi automaton accepts $w$ iff there exists a run $r$ on $w$ such that $\mathit{Inf}(r) \cap Q' = \varnothing$. We will also consider \textit{Rabin automata}. A Rabin automaton has a set of subsets $R \subseteq 2^Q \times 2^Q$ as acceptance condition, and accepts $w$ iff there exists a run $r$ on $w$ and a pair $(A, B) \in R$ such that $\mathit{Inf}(r) \cap A = \varnothing$ and $\mathit{Inf}(r) \cap B \neq \varnothing$.
We write $DBA$, $DCA$, and $DRA$ to refer to deterministic Büchi-, co-Büchi-, and Rabin automata, respectively.

We use a specialized form of cascade composition of two automata: \href{https://en.wikipedia.org/wiki/Millstone}{bed automaton and runner automaton}.
Bed automata are automata without an acceptance condition.
Runner automata read input letters and (next) states of bed automata: given a bed automaton $\mathcal{A}$ with $S$ as set of states, a runner automaton is $\mathcal{B} = (S,Q,Q_0,\delta,\alpha)$, where $Q$, $Q_0$, and $\alpha$ are as before and $\delta:Q\times S \times \Sigma \rightarrow 2^Q$ is the transition function.
The composition $\mathcal{A}{\ltimes}\mathcal{B}$ is the automaton $(Q \times S, Q_0 \times S_0, \overline{\delta}, \overline{\alpha})$, where $\overline{\delta}(q,s,\sigma)=\{(q', s') ~|~ q' \in \delta_B(q,s',\sigma), s' \in \delta_\mathcal{A}(s,\sigma) \}$ and either $\overline{\alpha} = \alpha \times S$ if $\mathcal{B}$ is a Büchi- or co-Büchi automaton, or $\overline{\alpha} = \bigcup \{ (A \times S) \times (B \times S) \mid (A, B) \in \alpha \}$ if $\mathcal{B}$ is a Rabin automaton.
Note that $\mathcal{A}{\ltimes}\mathcal{B}$ is an $\omega$-automaton with $\mathcal{B}$'s acceptance, e.g., if $\mathcal{A}$ is deterministic and $\mathcal{B}$ is a deterministic co-B\"uchi runner automaton, then $\mathcal{A}{\ltimes} \mathcal{B}$ is a DCA.

\subsection{Linear Temporal Logic with Past}
\label{subsection_pltl}
Given a non-empty finite set of propositional variables $AP$, the well-formed formulae $\varphi$ of $\mathit{pLTL}$ are generated by the following grammar:
\begin{equation*}
    \varphi \Coloneqq \top \mid \bot \mid p \mid \neg \varphi \mid \varphi \land \varphi \mid \varphi \lor \varphi \mid \ltlX \varphi \mid \varphi \ltlU \varphi \mid \ltlY \varphi \mid \varphi \ltlS \varphi,
\end{equation*}
where $p \in AP$. Given a formula $\varphi$, we write $\Var(\varphi)$ to denote the set of all atomic propositions appearing in $\varphi$. Given a formula $\varphi$, natural number $t$, and infinite word $w = \sigma_0, \sigma_1, \dots$ over $2^{\Var(\varphi)}$, we write $(w, t) \models \varphi$ to denote that $w$ satisfies $\varphi$ at index $t$. The meaning of this is made precise by the following inductive definition:
\begin{equation*}
\begin{aligned}
(w, t) &\models \top &
(w, t) &\not\models \bot \\
(w, t) &\models p \mbox{ iff } p \in \sigma_t &
(w, t) &\models \neg \varphi \mbox{ iff } (w, t) \not\models \varphi \\
(w, t) &\models \varphi \land \psi \mbox{ iff } (w, t) \models \varphi \mbox{ and } (w, t) \models \psi &
(w, t) &\models \varphi \lor \psi \mbox{ iff } (w, t) \models \varphi \mbox{ or } (w, t) \models \psi \\
(w, t) &\models \ltlX \varphi \mbox{ iff } (w, t + 1) \models \varphi &
(w, t) &\models \ltlY \varphi \mbox{ iff } t > 0 \mbox{ and } (w, t - 1) \models \varphi \\
(w, t) &\models \varphi \ltlU \psi \mbox{ iff } \exists r \geq t \, . \, \mathrlap{\left( (w, r) \models \psi \mbox{ and } \forall s \in [t, r) \, . \, (w, s) \models \varphi \right)} \\
(w, t) &\models \varphi \ltlS \psi \mbox{ iff } \exists r \leq t \, . \, \mathrlap{\left( (w, r) \models \psi \mbox{ and } \forall s \in (r, t] \, . \, (w, s) \models \varphi \right).}
\end{aligned}
\end{equation*}
When $t = 0$ we omit the index and simply write $w \models \varphi$. Observe that $\ltlY$ is almost exactly the past analogy of $\ltlX$, with a similar relationship between $\ltlSsymb$ and $\ltlUsymb$. They differ in that the past is bounded; in particular, a formula $\ltlY \varphi$ is never satisfied at $t = 0$.
The \textit{language} of a formula $\varphi$, denoted $\mathcal{L}(\varphi)$, is the set of all infinite words $w$ such that $w \models \varphi$. Two $\mathit{pLTL}$ formulae $\varphi$ and $\psi$ are semantically equivalent, denoted $\varphi \equiv \psi$, iff $\mathcal{L}(\varphi) = \mathcal{L}(\psi)$.

In addition to the above, we will also consider the following derived operators:
\begin{equation*}
\begin{aligned}
    \ltlF \varphi &\coloneqq \top \ltlU \varphi & \ltlO \varphi &\coloneqq \top \ltlS \varphi & \ltlG \varphi &\coloneqq \neg \ltlF \neg \varphi \\
    \ltlH \varphi &\coloneqq \neg \ltlO \neg \varphi &
    \varphi \ltlW \psi &\coloneqq \varphi \ltlU \psi \lor \ltlG \varphi & \varphi \ltlWS \psi &\coloneqq \varphi \ltlS \psi \lor \ltlH \varphi \\
    \varphi \ltlM \psi &\coloneqq \psi \ltlU (\varphi \land \psi) & \varphi \ltlNWS \psi &\coloneqq \psi \ltlS (\varphi \land \psi) &
    \varphi \ltlR \psi &\coloneqq \psi \ltlW (\varphi \land \psi) \\ \varphi \ltlNS \psi &\coloneqq \psi \ltlWS (\varphi \land \psi) &
    \ltlWY \varphi &\coloneqq \ltlY \varphi \lor \neg \ltlY \top.
\end{aligned}
\end{equation*}
The past operators above are defined in analogy with their standard future counterparts. An important exception is the \textit{weak yesterday} operator $\ltlWY$, which is similar to $\ltlY$. However, a formula $\ltlWY \varphi$ is always satisfied at $t = 0$.

A $\mathit{pLTL}$ formula that is neither atomic nor whose syntax tree is rooted with a Boolean operator is called a \textit{temporal} formula, and a $\mathit{pLTL}$ formula whose syntax tree is rooted with an element of $\{ \ltlY, \ltlWY, \ltlSsymb, \ltlWSsymb, \ltlNWSsymb, \ltlNSsymb
\}$ a \textit{past} formula. Finally, a $\mathit{pLTL}$ formula that is a proposition or the negation thereof is \textit{propositional}. We write $\sff(\varphi)$ to denote the set of propositional and temporal subformulae of $\varphi$, and $\psf(\varphi)$ the set of past subformulae of $\varphi$. The \textit{size} of a $\mathit{pLTL}$ formula $\varphi$, denoted $\vert \varphi \vert$, is defined as the number of nodes of its syntax tree that are either temporal or propositional.

A $\mathit{pLTL}$ formula where negations only appear before atomic propositions is in \textit{negation normal form}. Observe that with the derived operators above, an arbitrary $\mathit{pLTL}$ formula can be rewritten in negation normal form with a linear increase in size. For the remainder of the paper, when we write ``formula'' we implicitly refer to $\mathit{pLTL}$ formulae in negation normal form, with no occurrences of $\ltlF, \ltlG, \ltlO$ or $\ltlH$. Subformulae rooted with either of these four operators can be replaced by equivalent formulae of the same size: every subformula of the form $\ltlF \psi$ can be replaced with $\top \ltlU \psi$ and every subformula of the form $\ltlG \psi$ with $\psi \ltlW \bot$, and analogously for $\ltlO \psi$ and $\ltlH \psi$. While these four derived operators are not part of the syntax under consideration \textendash{} for the purpose of keeping the presentation more concise \textendash{} we will occasionally use them as convenient shorthand. When we write ``word'' we implicitly refer to infinite words, unless otherwise stated.

We conclude this section by defining the notion of \textit{propositional equivalence}, which is a stronger notion of equivalence than that given by the semantics of \textit{pLTL}. It is relatively simple to determine whether two formulae are propositionally equivalent, which makes the notion useful in defining the state spaces of automata as equivalence classes of formulae. We also state a lemma that allows us to lift functions defined on formulae to the propositional equivalence classes they belong to. Both are due to Esparza et al.~\cite{esparza_unified_translation}.

\begin{definition}[Propositional Semantics of $\mathit{pLTL}$]
Let $\mathcal{I}$ be a set of formulae and $\varphi$ a formula. The propositional satisfaction relation $\mathcal{I} \models_p \varphi$ is inductively defined as
\begin{equation*}
\begin{aligned}
    \mathcal{I} &\models_p \top \quad &
    \mathcal{I} &\models_p \psi \land \xi \mbox{ iff } \mathcal{I} \models_p \psi \mbox{ and } \mathcal{I} \models_p \xi \\
    \mathcal{I} &\not\models_p \bot &
    \mathcal{I} &\models_p \psi \lor \xi \mbox{ iff } \mathcal{I} \models_p \psi \mbox{ or } \mathcal{I} \models_p \xi,
\end{aligned}
\end{equation*}
with $\mathcal{I} \models_p \varphi \mbox{ iff } \varphi \in \mathcal{I}$ for all other cases. Two formulae $\varphi$ and $\psi$ are propositionally equivalent, denoted $\varphi \sim \psi$, if $\mathcal{I} \models_p \varphi \Leftrightarrow \mathcal{I} \models_p \psi$ for all sets of formulae $\mathcal{I}$. The \textit{(propositional) equivalence class} of a formula $\varphi$ is denoted $[\varphi]_\sim$. The \textit{(propositional) quotient set} of a set of formulae $\Psi$ is denoted $\Psi_{/\sim}$.
\end{definition}
\begin{restatable}{lemma}{lemmapropequivpreserved}
\label{lemma_propequiv_preserved}
    Let $f$ be a function on formulae such that $f(\top) = \top$, $f(\bot) = \bot$, and for all formulae $\varphi$ and $\psi$, $f(\varphi \land \psi) = f(\varphi) \land f(\psi)$ and $f(\varphi \lor \psi) = f(\varphi) \lor f(\psi)$. Then, for all pairs of formulae $\varphi$ and $\psi$, if $\varphi \sim \psi$ then $f(\varphi) \sim f(\psi)$.
\end{restatable}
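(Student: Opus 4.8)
The plan is to show that, for every set of formulae $\mathcal{I}$, whether $f(\varphi)$ is propositionally satisfied by $\mathcal{I}$ depends on $\varphi$ only through its propositional equivalence class. The mechanism is to ``pull back'' $\mathcal{I}$ along $f$. So, first I would fix an arbitrary set of formulae $\mathcal{I}$ and define the auxiliary set
$\mathcal{I}' \coloneqq \{ \chi \mid \mathcal{I} \models_p f(\chi) \}$.

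The heart of the argument is the claim that $\mathcal{I} \models_p f(\xi) \Leftrightarrow \mathcal{I}' \models_p \xi$ for every formula $\xi$, which I would prove by structural induction on $\xi$. In the base cases $\xi = \top$ and $\xi = \bot$ one uses the hypotheses $f(\top) = \top$ and $f(\bot) = \bot$ together with the matching clauses of $\models_p$. When $\xi$ is any other formula whose root is not $\land$ or $\lor$ (a proposition, the negation of a proposition, or a temporal/past formula), the equivalence is immediate: by the catch-all clause of $\models_p$ we have $\mathcal{I}' \models_p \xi$ iff $\xi \in \mathcal{I}'$, which holds iff $\mathcal{I} \models_p f(\xi)$ by the definition of $\mathcal{I}'$. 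In the inductive cases $\xi = \xi_1 \land \xi_2$ and $\xi = \xi_1 \lor \xi_2$, I would rewrite the left-hand side using the homomorphism hypotheses $f(\xi_1 \land \xi_2) = f(\xi_1) \land f(\xi_2)$ and $f(\xi_1 \lor \xi_2) = f(\xi_1) \lor f(\xi_2)$, unfold $\models_p$ one step on both sides, and invoke the induction hypothesis on $\xi_1$ and $\xi_2$.

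With the claim in hand, the lemma follows by instantiating it at $\varphi$ and at $\psi$: since $\varphi \sim \psi$, we have $\mathcal{I}' \models_p \varphi \Leftrightarrow \mathcal{I}' \models_p \psi$, and therefore $\mathcal{I} \models_p f(\varphi) \Leftrightarrow \mathcal{I}' \models_p \varphi \Leftrightarrow \mathcal{I}' \models_p \psi \Leftrightarrow \mathcal{I} \models_p f(\psi)$. As $\mathcal{I}$ was arbitrary, $f(\varphi) \sim f(\psi)$.

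The only subtlety worth flagging is the scope of the induction. The hypothesis constrains $f$ only on the Boolean connectives; on $\ltlX$-, $\ltlU$-, $\ltlY$- and $\ltlS$-rooted formulae (and on propositional formulae) $f$ may behave arbitrarily. This is precisely why the induction recurses only through $\land$ and $\lor$ and treats every other formula as an opaque atom, and why $\mathcal{I}'$ is introduced at all: it absorbs the unknown action of $f$ on those atoms, so that no property of $f$ beyond the four stated equations is ever needed. I do not expect any genuine obstacle beyond being careful that the induction is on the full syntax tree of $\xi$ but only ``branches'' at Boolean nodes, so that it is well-founded.
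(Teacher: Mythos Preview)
Your argument is correct. The pullback set $\mathcal{I}' = \{\chi \mid \mathcal{I} \models_p f(\chi)\}$ together with the structural induction showing $\mathcal{I} \models_p f(\xi) \Leftrightarrow \mathcal{I}' \models_p \xi$ is exactly the right mechanism, and your case analysis is complete: $\top$ and $\bot$ use the two equational hypotheses, $\land$ and $\lor$ use the other two plus the induction hypothesis, and every remaining syntactic form is absorbed by the catch-all clause of $\models_p$ and the definition of $\mathcal{I}'$.

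As for comparison with the paper: the paper does not spell out a proof at all. It simply records that the argument of Esparza et al.\ carries over verbatim, the only thing to check being that the new past operators fall under the catch-all clause of $\models_p$ just like the future ones do. Your write-up is therefore more detailed than what the paper provides; the pullback argument you give is the standard one for this kind of statement and is in all likelihood what the cited source does as well.
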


\section{Encoding the Past}
\label{section_encodingthepast}
Informally, given a formula $\varphi$ and word $w$, our aim is to define a function that consumes a given finite prefix of $w$, of arbitrary length $t$, and produces a new formula $\varphi'$, such that the suffix $w_t$ satisfies $\varphi'$ iff $w$ satisfies $\varphi$. This function will serve as the foundation for defining the state spaces and transition relations of the automata that we are to construct. For standard $LTL$, defining such a function is straightforward using the local semantics of $LTL$. With the introduction of past operators the situation becomes more complicated. As a prefix of $w$ is consumed we lose the information about the past therein, and must instead encode this information in the rewritten formula. The key insight is that this can be accomplished by rewriting strong past operators of $\varphi$ \textendash{} the operators $\ltlY, \ltlS$, and $\ltlNWS$ \textendash{} into their weak counterparts $\ltlWY, \ltlWS$, and $\ltlNS$, respectively, and vice versa, based on the consumed input. This section makes this idea precise.
\begin{definition}[Weakening and strengthening formulae]
The \textit{weakening} $\varphi_\mathcal{W}$ and strengthening $\varphi_\mathcal{S}$ of a formula $\varphi$ is defined by case distinction on $\varphi$ as
\begin{equation*}
\begin{aligned}  
    (\ltlY \psi)_\mathcal{W} &\coloneqq \ltlWY \psi & (\psi \ltlS \xi)_\mathcal{W} &\coloneqq \psi \ltlWS \xi & (\psi \ltlNWS \xi)_\mathcal{W} &\coloneqq \psi \ltlNS \xi \\
    (\ltlWY \psi)_\mathcal{S} &\coloneqq \ltlY \psi & (\psi \ltlWS \xi)_\mathcal{S} &\coloneqq \psi \ltlS \xi & (\psi \ltlNS \xi)_\mathcal{S} &\coloneqq \psi \ltlNWS \xi.
\end{aligned}
\end{equation*}
For all other cases we have $\varphi_\mathcal{W} \coloneqq \varphi$ and $\varphi_\mathcal{S} \coloneqq \varphi$.
\end{definition}
\begin{definition}[Rewriting past operators under sets]
Given a formula $\varphi$ and set of past formulae $C$, we write $\varphi\rw{C}$ to denote the result of weakening or strengthening the past operators in the syntax tree of $\varphi$ according to $C$ while otherwise maintaining its structure, as per the following inductive definition:
\begin{alignat*}{2}
    a\rw{C} &\coloneqq a && \mbox{($a$ atomic)} \\
    (\ltlUNOP \psi)\rw{C} &\coloneqq 
    \begin{cases}
    \begin{aligned}
        &(\ltlUNOP \psi\rw{C})_\mathcal{W} && (\ltlUNOP \psi \in C) \\
        &(\ltlUNOP  \psi\rw{C})_\mathcal{S} && \mbox{(otherwise)}
    \end{aligned}
    \end{cases}
    && \mbox{($\ltlUNOP$unary)} \\
    (\psi \ltlBINOP \xi)\rw{C} &\coloneqq 
    \begin{cases}
    \begin{aligned}
        &(\psi\rw{C} \ltlBINOP \xi\rw{C})_\mathcal{W} && (\psi \ltlBINOP \xi \in C) \\
        &(\psi\rw{C} \ltlBINOP \xi\rw{C})_\mathcal{S} && \mbox{(otherwise)}.
    \end{aligned}
    \end{cases}
    && \mbox{($\ltlUNOP$binary)}.
\end{alignat*}
We overload this definition to sets: given a set of formulae $S$, we define $S\rw{C} \coloneqq \{ s\rw{C} \mid s \in S \}$. 
\end{definition}
\begin{example}
Consider the formula $\varphi = \ltlY (p \ltlWS q)$ and set $C = \{ \varphi \}$. We then have $\varphi\rw{C} = \ltlWY (p \ltlS q)$. For the same formula $\varphi$ and set $C = \{ p \ltlWS q)$, we instead have $\varphi\rw{C} = \varphi$.
\end{example}
\begin{definition}[Weakening conditions]
Given a past formula $\varphi$, we define the weakening condition function $\wc(\varphi)$:
\begin{equation*}
\begin{aligned}
\wc(\ltlY \psi) &\coloneqq \psi & \wc(\psi \ltlS \xi) &\coloneqq \xi & \wc(\psi \ltlNWS \xi) &\coloneqq \psi \land \xi \\
\wc(\ltlWY \psi) &\coloneqq \psi & \wc(\psi \ltlWS \xi) &\coloneqq \psi \lor \xi & \wc(\psi \ltlNS \xi) &\coloneqq \xi.
\end{aligned}
\end{equation*}
\end{definition}
The weakening condition for a past formula serves as a requirement that must hold immediately in order to justify weakening the formula for the next time step. More precisely, if $w \models \wc(\varphi)$ then it is enough to check that $w_1 \models \varphi_\mathcal{W}$ to conclude that $(w, 1) \models \varphi$.
\begin{example}
    Let $\varphi = \ltlX (p \ltlS q)$ and $w$ be a word such that $w_{02} = \{ q \} \{ p \}$. By establishing that $w \models \wc(p \ltlS q) = q$, we can ``forget'' about the initial letter $\{ q \}$; it is enough to check that $w_1 \models p \ltlWS q$ to conclude that $(w, 1) \models p \ltlS q$, and hence that $w \models \varphi$. 
\end{example}
This suggests that there exists a set of past subformulae of $\varphi$ that precisely captures the information contained in the initial letter of $w$ required to evaluate all past subformulae of $\varphi$ at every point in time. This is the main result of this section, which we now summarize.
\begin{definition}[Sets of entailed subformulae]
\label{definition_set_entailed}
Let $\varphi$ be a formula, $w$ a word, and $t \in \mathbb{N}$. The set of past subformulae of $\varphi$ entailed by $w$ at $t$ is inductively defined as
\begin{equation*}
    \entconw{\varphi}{0}{w} \coloneqq \{ \psi \in \psf(\varphi) \mid \psi = \psi_\mathcal{W} \}
    \qquad
    \entconw{\varphi}{t}{w} \coloneqq \{ \psi \in \psf(\varphi\rw{C^w_{\varphi, t - 1}}) \mid w_{t-1} \models \wc(\psi) \} \quad (t > 0).
\end{equation*}
\end{definition}
When the word $w$ above is clear from the context, we simply write $\entcon{\varphi}{t}$. Given $t \in \mathbb{N}$ we denote the sequence $\entcon{\varphi}{0}, \dots, \entcon{\varphi}{t}$ of length $t + 1$ by $\entseq{\varphi}{t}$. Given a sequence $\vec{C} = C_0, C_1, \dots, C_t$ of sets of past formulae, there exists a set $C \subseteq \psf(\varphi)$ that has the same effect in a rewrite as the sequential application of rewrites of $\vec{C}$, i.e., such that $\varphi\rw{C} = \varphi\rw{C_0}\rw{C_1}\dots\rw{C_t}$. This is the set $\{ \psi \in \psf(\varphi) \mid \psi\rw{C_0}\rw{C_1}\dots\rw{C_t} = (\psi\rw{C_0}\rw{C_1}\dots\rw{C_t})_\mathcal{W} \}$, which we denote by $\circ \, \vec{C}$. In particular, there exists a set that captures the sequence of sets of entailed subformulae, denoted $\compseq{\varphi}{t}$.
\begin{restatable}{lemma}{lemmaentailedsequence}
\label{lemma_entailed_sequence}
Given a formula $\varphi$, word $w$, and $t \in \mathbb{N}$, we have $(w, t) \models \varphi$ iff $w_t \models \varphi\rw{\compseq{\varphi}{t}}$.
\end{restatable}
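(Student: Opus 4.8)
The plan is to argue by induction on $t$, with the real content isolated in two auxiliary facts: a \emph{one-step} correctness statement for the rewrite $\rw{\cdot}$, and a purely syntactic identity saying that a composed sequence of rewrites amounts to consuming input letters one at a time. Throughout I will use that $\chi\rw{C}$ only ever tests membership in $C$ of past subformulae of $\chi$, so $\chi\rw{C}$ depends only on $C\cap\psf(\chi)$, and that $\rw{\cdot}$ commutes with $\land$ and $\lor$. For the base case $t=0$, observe that $\entcon{\varphi}{0}$ is exactly the set of past subformulae of $\varphi$ already in weak form, so rewriting $\varphi$ under it leaves every weak past subformula weak (it lies in $\entcon{\varphi}{0}$) and every strong one strong (it does not, and strengthening an already-strong operator is a no-op), while the rest of the syntax tree is untouched; hence $\varphi\rw{\entcon{\varphi}{0}}=\varphi$, and since $\compseq{\varphi}{0}$ composes this single rewrite it is the same set, so $\varphi\rw{\compseq{\varphi}{0}}=\varphi$ and the claim reduces to $(w,0)\models\varphi\Leftrightarrow w_0\models\varphi$, which holds as $w_0=w$. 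The same observation gives $\entconw{\varphi}{1}{w}=\{\psi\in\psf(\varphi)\mid w\models\wc(\psi)\}$; write $\psi_1\coloneqq\varphi\rw{\entconw{\varphi}{1}{w}}$ below.

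For the inductive step I would first establish the syntactic identity $\psi_1\rw{\compseqw{\psi_1}{t}{w_1}}=\varphi\rw{\compseqw{\varphi}{t+1}{w}}$. Unfolding $\circ$ turns both sides into finite left-to-right compositions of rewrites; on each side the first rewrite is under the ``already weak'' set and hence the identity, so it drops out; and a short induction on $j$ shows $\entconw{\psi_1}{j}{w_1}=\entconw{\varphi}{j+1}{w}$ — once the first rewrite is performed, forming the next entailment set and forming the next rewrite commute with replacing $(\varphi,w)$ by $(\psi_1,w_1)$. Granting this identity and the one-step lemma, the step closes by
\begin{align*}
(w,t+1)\models\varphi
&\iff (w_1,t)\models\psi_1
\iff (w_1)_t\models\psi_1\rw{\compseqw{\psi_1}{t}{w_1}} \\
&\iff w_{t+1}\models\varphi\rw{\compseqw{\varphi}{t+1}{w}},
\end{align*}
where the first step is the one-step lemma at index $t+1$, the second is the induction hypothesis applied to $\psi_1$ and $w_1$, and the third is the identity together with $(w_1)_t=w_{t+1}$. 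I stress that no naive ``forget the first letter'' equivalence $(w,t+1)\models\varphi\Leftrightarrow(w_1,t)\models\varphi$ is used — that is false once $\varphi$ contains past operators, and the rewrite $\psi_1$ is precisely its correction.

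It remains to prove the one-step lemma: for every formula $\varphi$, word $w$, and index $n\geq1$, $(w,n)\models\varphi\Leftrightarrow(w_1,n-1)\models\varphi\rw{\entconw{\varphi}{1}{w}}$. I would prove it by structural induction on $\varphi$, with a nested induction on $n$ in the past-operator cases. Atomic and Boolean cases are immediate. The $\ltlX$ and $\ltlUsymb$ cases follow by pushing the evaluation point forward and invoking the structural hypothesis at the shifted indices, which remain $\geq1$ — which is exactly why the lemma must range over all $n\geq1$ rather than just $n=1$. For a past formula, say $\varphi=\psi\ltlS\xi$, I would use the fixpoint unfolding ``$(w,n)\models\psi\ltlS\xi$ iff $(w,n)\models\xi$, or $(w,n)\models\psi$ and $n\geq1$ and $(w,n-1)\models\psi\ltlS\xi$'' together with the parallel unfolding of $\varphi\rw{\entconw{\varphi}{1}{w}}$ over $w_1$, and split on whether $\varphi\in\entconw{\varphi}{1}{w}$, i.e.\ whether $w\models\wc(\varphi)$: this is exactly the condition deciding whether $\varphi\rw{\entconw{\varphi}{1}{w}}$ carries a weak or a strong top operator, and, semantically, whether $\varphi$ held at the letter about to be consumed. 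Recombining then uses the structural hypothesis for $\psi$ and $\xi$ and the nested hypothesis for $\varphi$ itself at $n-1$; the operators $\ltlY$, $\ltlWY$, $\ltlWSsymb$, $\ltlNWSsymb$, $\ltlNSsymb$ are symmetric.

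The hard part will be this last family of cases, and within them the boundary at position $0$ of a suffix, where weak and strong past operators diverge ($\ltlY\psi$ is false but $\ltlWY\psi$ is true at position $0$). The whole point of $\wc$ is to calibrate the rewrite so that a past operator restarted afresh at position $0$ of $w_1$ — which is position $1$ of $w$ — still evaluates correctly, and making this precise amounts to checking that ``$w\models\wc(\varphi)$'' faithfully records ``$\varphi$ held at the just-consumed letter'' for the purpose of one further step; it is the interaction of this case split with the two-level induction, together with keeping track of which past subformulae of the (repeatedly rewritten) formula lie in the relevant entailment sets, that must be pushed through with care.
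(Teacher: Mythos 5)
Your proposal matches the paper's own proof in substance: your one-step lemma is exactly the paper's Proposition~\ref{proposition_weakening_correct} (stated there for a subformula $\varphi$ of $\zeta$ with the set $\entcon{\zeta}{1}$, which your observation that $\psi\rw{C}$ depends only on $C\cap\psf(\psi)$ makes equivalent), proved by the same structural induction over all positions with the same case split on membership in the entailment set at the weak/strong boundary, and your outer induction on $t$ is precisely what the paper compresses into ``repeated applications'' of that one-step result together with the defining property of $\compseq{\varphi}{t}$ as the composition of the sequential rewrites. The bookkeeping identity you make explicit, relating the entailment sequence of $\varphi\rw{\entconw{\varphi}{1}{w}}$ on $w_1$ to the tail of the sequence for $\varphi$ on $w$ (with the initial ``already weak'' rewrite dropping out as the identity), is exactly the step the paper leaves implicit, so your argument is, if anything, slightly more detailed than the original.
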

With these definitions in place, we are ready to define the after-function for $\mathit{pLTL}$.
\section{The after-function for \texorpdfstring{$\mathit{pLTL}$}{pLTL}}
\label{section_afterfunction}
The after-function is the foundation for defining the states and transition relations of all automata used in our $\mathit{pLTL}$-to-DRA translation. We begin by defining the \textit{local} after-function:
\begin{definition}[The local after-function]
    Given a formula $\varphi$, a letter $\sigma \in 2^{\Var(\varphi)}$, and a set of past formulae $C$, we inductively define the local after-function $\afc$ mutually with the local past update-function $\puc$ as follows:
\begin{equation*}
\begin{aligned}
\afc(\top, \sigma, C) &\coloneqq \top \\
\afc(\bot, \sigma, C) &\coloneqq \bot \\
\afc(p, \sigma, C) &\coloneqq \mbox{if } (p \in \sigma) \mbox{ then } \top \mbox{ else } \bot \\
\afc(\neg p, \sigma, C) &\coloneqq \mbox{if } (p \in \sigma) \mbox{ then } \bot \mbox{ else } \top \\
\afc(\ltlX \psi, \sigma, C) &\coloneqq \puc(\psi, \sigma, C) \\
\afc(\ltlY \psi, \sigma, C) &\coloneqq \bot \\
\afc(\ltlWY \psi, \sigma, C) &\coloneqq \top \\
\afc(\psi \ltlBINOP \xi, \sigma, C) &\coloneqq \afc(\psi, \sigma, C) \ltlBINOP \afc(\xi, \sigma, C) & (\ltlUNOP \in \{ \land, \lor \}) \\
\afc(\psi \ltlBINOP \xi, \sigma, C) &\coloneqq \afc(\xi, \sigma, C) \lor \afc(\psi, \sigma, C) \land \puc(\psi \ltlBINOP \xi, \sigma, C)
& (\ltlUNOP \in \{ \ltlUsymb, \ltlWsymb \}) \\
\afc(\psi \ltlBINOP \xi, \sigma, C) &\coloneqq \afc(\xi, \sigma, C) \land (\afc(\psi, \sigma, C) \lor \puc(\psi \ltlBINOP \xi, \sigma, C))
& (\ltlUNOP \in \{ \ltlRsymb, \ltlMsymb \}) \\
\afc(\psi \ltlBINOP \xi, \sigma, C) &\coloneqq \afc(\wc(\psi \ltlBINOP \xi), \sigma, C))
& (\ltlUNOP \in \{ \ltlSsymb, \ltlWS, \ltlNWSsymb, \ltlNSsymb \}),
\end{aligned}
\end{equation*}
where
\begin{equation*}
\puc(\varphi, \sigma, C) \coloneqq \varphi\rw{C} \land 
\bigwedge_{\psi \in \psf(\varphi) \cap C} \afc(\wc(\psi), \sigma, C).   
\end{equation*}
\end{definition}
Intuitively, in the context of reading the initial letter $\sigma$ of the word $w$, the local after-function decomposes $\varphi$ into parts that can be fully evaluated using $\sigma$ and immediately be replaced with $\top$ or $\bot$, and parts that can only be partially evaluated using $\sigma$. The resulting formula is then left to be further evaluated in the future; in the automaton, it corresponds to the state reached upon reading $\sigma$ from the state corresponding to $\varphi$. Crucially, the past subformulae of the partially evaluated part are updated by $\puc$, using the information in $C$. Here, $C$ is to be thought of as a guess of the past subformulae of $\varphi$ whose weakening conditions hold upon reading $\sigma$. Finally, the weakening conditions that must hold to justify the guess are added conjunctively. Observe that, as these weakening conditions may contain subformulae referring to the future, it may not be possible to fully evaluate them immediately; this motivates the recursive application of $\afc$ in $\puc$.
\begin{definition}[The extended local after-function]
Given a (possibly empty) finite word $w$ of length $n$ and sequence of sets of past formulae $\vec{C}$ of length $n + 1$, we extend $\afc$ to $w$ and $\vec{C}$ as follows:
\begin{equation*}
\begin{aligned}
    \afc(\varphi, \epsilon, \vec{C}_{01}) &\coloneqq \varphi &&(n = 0) \\
    \afc(\varphi, w_{0n}, \vec{C}_{0(n+1)}) &\coloneqq \afc\left(\varphi_{n - 1}, w_{(n-1)n},\vec{C}_{n(n + 1)} \right) &&(n > 0),
\end{aligned}
\end{equation*}
where $\varphi_{n} = \afc \bigl(\varphi, w_{0n}, \vec{C}_{0(n + 1)} \bigr)$. 
\end{definition}
Observe that the initial set of $\vec{C}$ in the above definition is discarded; this is to match the sequence of Definition~\ref{definition_set_entailed}. For formulae where no future operators are nested inside past operators, the set of entailed subformulae is completely determined by the prefix $w_{0n}$. This is not the case in general, however, and so the (global) after-function is defined to consider all possible subsets of past subformulae of $\varphi$ as a disjunction.
\begin{definition}[The after-function]
\label{definition_after_function}
Let $\varphi$ be a formula and $\sigma$ a letter. The after-function $\af$ is defined as:
\begin{equation*}
        \af(\varphi, \sigma) \coloneqq \bigvee_{C \in 2^{\psf(\varphi)}} 
        \afc(\varphi, \sigma, C).
\end{equation*}
The extension of $\af$ to finite words is done in the natural way: given a formula $\varphi$ and word $w$ of length $n$, we define
\begin{equation*}
\begin{aligned}    
    \af(\varphi, \epsilon) &\coloneqq \varphi &&(n = 0) \\
    \af(\varphi, w_{0n}) &\coloneqq \af(\af(\varphi, w_{0(n-1)}), w_{(n-1)n)}) &&(n > 0).
\end{aligned}
\end{equation*}
\end{definition}
\begin{example}
    Let $\varphi = \ltlX (p \ltlS \ltlX q)$. Observe that $\varphi \equiv \ltlX(p \land q \lor \ltlX q)$. Upon reading a letter $\sigma$ we can guess that the ``since'' started holding at the current point, corresponding to the set $C = \{ p \ltlS \ltlX q \}$ and formula $p \ltlWS \ltlX q \land q$. Alternatively, we may guess that the ``since'' did \textit{not} start holding upon reading $\sigma$, corresponding to the set $C = \varnothing$ and formula $p \ltlS \ltlX q$. Hence $\af(\varphi, \sigma) = p \ltlWS \ltlX q \land q \lor p \ltlS \ltlX q$. This is equivalent (at $t = 0$) to $p \land q \lor \ltlX q$, which is what must be satisfied by $w_1$, as desired.
\end{example}
The correctness of $\af$ is established by the following theorem:
\begin{restatable}{theorem}{theoremafcorrect}
\label{theorem_af_correct}
    For every formula $\varphi$, word $w$, and $t \in \mathbb{N}$ we have $w \models \varphi$ iff $w_t \models \af(\varphi, w_{0t})$.
\end{restatable}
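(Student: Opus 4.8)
\emph{Reduction to one step.} I would first argue by induction on $t$ that the theorem reduces to its one-step instance. For $t = 0$ there is nothing to prove, as $\af(\varphi, \epsilon) = \varphi$ and $w_0 = w$. For the step, $\af(\varphi, w_{0(t+1)}) = \af(\af(\varphi, w_{0t}), w_{t(t+1)})$; the induction hypothesis gives $w \models \varphi$ iff $w_t \models \af(\varphi, w_{0t})$, and since $w_{t(t+1)}$ is the first letter of $w_t$ and $(w_t)_1 = w_{t+1}$, the one-step claim applied to the word $w_t$ and the formula $\af(\varphi, w_{0t})$ gives $w_t \models \af(\varphi, w_{0t})$ iff $w_{t+1} \models \af(\varphi, w_{0(t+1)})$; composing the two equivalences closes the induction. (This is sound because $\af$ maps formulae to formulae: the rewrites inside $\puc$ only swap $\ltlY {\leftrightarrow} \ltlWY$, $\ltlSsymb {\leftrightarrow} \ltlWSsymb$ and $\ltlNWSsymb {\leftrightarrow} \ltlNSsymb$, preserving negation normal form and the admissible operator set.) It therefore remains to prove, for every formula $\psi$ and word $v$ with $\sigma \coloneqq v_{01}$, that $v \models \psi$ iff $v_1 \models \af(\psi, \sigma)$, i.e.\ iff $v_1 \models \bigvee_{C \subseteq \psf(\psi)} \afc(\psi, \sigma, C)$.

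\emph{The one-step claim.} I would prove this by induction on $|\psi|$, establishing simultaneously (with $\sigma \coloneqq v_{01}$ and $\hat{C} \coloneqq \entconw{\psi}{1}{v}$): (i) $v_1 \models \afc(\psi, \sigma, C)$ implies $v \models \psi$, for every $C$; (ii) $v \models \psi$ implies $v_1 \models \afc(\psi, \sigma, \hat{C})$; (iii) $v_1 \models \puc(\psi, \sigma, C)$ implies $(v, 1) \models \psi$, for every $C$; (iv) $(v, 1) \models \psi$ implies $v_1 \models \puc(\psi, \sigma, \hat{C})$. At each size I would prove (iii)--(iv) before (i)--(ii); this is legitimate, since $\puc(\psi, \sigma, C)$ calls $\afc$ only on the formulae $\wc(\xi)$ for $\xi \in \psf(\psi) \cap C$, and $|\wc(\xi)| < |\xi| \le |\psi|$ for every past $\xi$ (its temporal root is consumed by $\wc$). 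The one-step claim then follows from (i) and (ii). The argument also uses three auxiliary facts, each by a routine induction: (a) $\afc(\psi, \sigma, C)$ and $\puc(\psi, \sigma, C)$ depend only on $C \cap \psf(\psi)$; (b) $\hat{C} \cap \psf(\eta) = \entconw{\eta}{1}{v}$ whenever $\psf(\eta) \subseteq \psf(\psi)$ --- so the correct guess restricts correctly to subformulae and to the $\wc(\xi)$ --- which rests on $\psi\rw{\entconw{\psi}{0}{v}} = \psi$; and (c) if $C_1 \subseteq C_2 \subseteq \psf(\psi)$ then $\psi\rw{C_1}$ implies $\psi\rw{C_2}$ semantically, following from monotonicity of negation-normal-form formulae together with $\ltlY \chi \Rightarrow \ltlWY \chi$, $\chi \ltlS \xi \Rightarrow \chi \ltlWS \xi$ and $\chi \ltlNWS \xi \Rightarrow \chi \ltlNS \xi$.

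\emph{The cases.} The atomic cases and the pure-past leaves $\ltlY \chi$, $\ltlWY \chi$ are immediate, as $\afc$ returns $\bot$ resp.\ $\top$ there, matching $(v,0) \not\models \ltlY \chi$ resp.\ $(v,0) \models \ltlWY \chi$. Every Boolean and future clause of $\afc$ mirrors the one-step semantics of its connective --- e.g.\ $(v,0) \models \chi \ltlU \xi$ iff $(v,0) \models \xi$, or $(v,0) \models \chi$ and $(v,1) \models \chi \ltlU \xi$ --- so (i)--(ii) follow by substituting the induction hypothesis for the proper subformulae, together with (iii)--(iv) in the $\ltlX$ and $\ltlUsymb / \ltlWsymb / \ltlRsymb / \ltlMsymb$ cases. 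For a past-binary $\psi$, the clause $\afc(\psi, \sigma, C) = \afc(\wc(\psi), \sigma, C)$ is justified by the observation that $(v,0) \models \psi$ iff $(v,0) \models \wc(\psi)$. The substance is in (iii)--(iv). For (iv): Lemma~\ref{lemma_entailed_sequence} at $t = 1$ gives $(v,1) \models \psi$ iff $v_1 \models \psi\rw{\compseqw{\psi}{1}{v}}$, and $\psi\rw{\compseqw{\psi}{1}{v}} = \psi\rw{\hat{C}}$ because $\psi\rw{\entconw{\psi}{0}{v}} = \psi$; so $(v,1) \models \psi$ iff $v_1$ satisfies the first conjunct $\psi\rw{\hat{C}}$ of $\puc(\psi, \sigma, \hat{C})$, while the remaining conjuncts $\afc(\wc(\xi), \sigma, \hat{C})$ for $\xi \in \psf(\psi) \cap \hat{C}$ are discharged by induction hypothesis (ii) for $\wc(\xi)$ --- here $\xi \in \hat{C}$ means precisely $v \models \wc(\xi)$ --- using (a) and (b). For (iii) the same computation runs backwards, except that a guess $C$ with $v_1 \models \puc(\psi, \sigma, C)$ is only forced to satisfy $C \cap \psf(\psi) \subseteq \hat{C}$: the conjoined weakening conditions, read through induction hypothesis (i), exclude over-guessing but not under-guessing, and fact (c) then takes $v_1 \models \psi\rw{C \cap \psf(\psi)}$ to $v_1 \models \psi\rw{\hat{C}}$, hence to $(v,1) \models \psi$. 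I expect the soundness halves (i) and (iii), for \emph{arbitrary} guesses $C$, to be the main obstacle: one has to show that an imperfect guess either pins down its own correctness through the bundled weakening conditions or makes $\afc(\psi, \sigma, C)$ unsatisfiable over $v_1$, and it is exactly fact (c) together with the restriction bookkeeping (b) that makes this go through.
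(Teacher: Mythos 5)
Your proposal is correct and in substance follows the paper's own route: your completeness halves (ii)/(iv) for the entailed set $\hat{C}$ correspond to the paper's Proposition~\ref{proposition_weakening_correct} and Lemma~\ref{lemma_entailment_correct}, your soundness halves (i)/(iii) for arbitrary guesses (forcing $C \cap \psf(\psi) \subseteq \hat{C}$ via the bundled weakening conditions and then using monotonicity, your fact (c)) are exactly the paper's Proposition~\ref{proposition_rewrite_subset} and Lemma~\ref{lemma_entailment}, and the composition over the word matches the paper's final chain. The only differences are organizational — you reduce to a single letter through the iterated definition of $\af$ and run one simultaneous induction covering $\afc$ and $\puc$ with explicit bookkeeping facts (a)/(b), where the paper instead keeps sequences of guesses in the extended $\afc$ and carries a super-formula $\zeta$ to restrict entailed sets to subformulae — and these are equivalent ways of handling the same issues.
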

\section{Stability and the Master Theorem}
\label{section_stability}
We consider two fragments of $\mathit{pLTL}$: $\mu$-$\mathit{pLTL}$, the set of formulae whose future operators are members of $\{ \ltlXsymb, \ltlUsymb, \ltlMsymb \}$, and $\nu$-$\mathit{pLTL}$, the set of formulae whose future operators are members of $\{ \ltlXsymb, \ltlWsymb, \ltlRsymb \}$. Given a formula $\varphi$, we define the set $\mu(\varphi)$ of subformulae of $\varphi$ whose syntax trees are rooted with $\ltlUsymb$ or $\ltlMsymb$. Similarly, we define the set $\nu(\varphi)$ of subformulae of $\varphi$ whose syntax trees are rooted with $\ltlWsymb$ or $\ltlRsymb$.

The Master Theorem for $\mathit{pLTL}$ establishes that the language of a $\mathit{pLTL}$ formula can be decomposed into a Boolean combination of simple languages. It is motivated by two ideas:
\begin{itemize}
    \item[i)] Assume that $\varphi$ is a formula and $w$ is a word such that all subformulae in $\mu(\varphi)$ that are eventually satisfied by $w$ are infinitely often satisfied by $w$, and all subformulae in $\nu(\varphi)$ that are almost always satisfied by $w$ never fail to be satisfied by $w$. In this case, we say that $w$ is a \textit{stable} word of $\varphi$, as will be properly defined shortly. Under these circumstances, a subformula of $\varphi$ of the form $\psi \ltlU \xi$ is satisfied by $w$ iff both $\psi \ltlW \xi$ and $\ltlG \ltlF (\psi \ltlU \xi)$ are. Dually, a subformula of $\varphi$ of the form $\psi \ltlW \xi$ is satisfied by $w$ iff either $\psi \ltlU \xi$ or $\ltlF \ltlG (\psi \ltlW \xi)$ are. Hence, we can partition all words over $\text{Var}(\varphi)$ into partitions of the form $P_{M, N}$, where $w \in P_{M, N}$ iff $M$ is the set of $\mu$-$\mathit{pLTL}$-subformulae of $\varphi$ satisfied infinitely often by $w$, and $N$ the set $\nu$-$\mathit{pLTL}$-subformulae of $\varphi$ that are almost always satisfied by $w$. Given two such sets $M$ and $N$, the above implies that $\varphi$ can be rewritten into a formula that belongs to either the fragment $\mu$-$\mathit{pLTL}$ or $\nu$-$\mathit{pLTL}$, as desired.
    \item[ii)] Given a formula $\varphi$ and word $w$, there exists a point in the future from which the above holds. Indeed, if we look far ahead into the future, all subformulae of $\varphi$ that are satisfied by $w$ only finitely often will have been satisfied for the last time. In particular, this is true for the $\mu$-$\mathit{pLTL}$-subformulae of $\varphi$. Similarly, there exists a point in the future at which all subformulae of $\varphi$ that are almost always satisfied by $w$ will have failed to be satisfied by $w$ for the last time. In particular, this is true for the $\nu$-$\mathit{pLTL}$-subformulae of $\varphi$.
\end{itemize}
These two notions suggest that, given a formula $\varphi$ and word $w \in P_{M, N}$, it is possible to transform $\varphi$ using the after-function until $w$ becomes stable, and then rewrite it according to either $M$ or $N$. Since these sets are unknown, we need to consider all possible combinations of such subsets, which ultimately manifests as a number of Rabin pairs exponential in the size of the formula. For more details and further examples we refer the reader to Section 5 of Esparza et al.~\cite{esparza_unified_translation}. The exposition of this section follows the similar exposition therein. However, the Master Theorem and the lemmata that imply it require considerable ``pastification''.

We now make precise the idea expressed in ii). Given a formula $\varphi$, word $w$, and $t \in \mathbb{N}$, we define the set of subformulae in $\mu(\varphi)$ that are satisfied by $w$ at least once at $t$ and the set of subformulae in $\mu(\varphi)$ that are satisfied by $w$ infinitely often at $t$. Similarly, we define the set of subformulae in $\nu(\varphi)$ that are always satisfied by $w$ at $t$ and the set of subformulae in $\nu(\varphi)$ that are almost always satisfied by $w$ at $t$:
\begin{equation*}
\begin{aligned}
    &\F{\varphi}{w}{t} \coloneqq \{ \psi \in \mu(\varphi) \mid (w, t) \models \ltlF \psi \}
    &&\GF{\varphi}{w}{t} \coloneqq \{ \psi \in \mu(\varphi) \mid (w, t) \models \ltlG \ltlF \psi \} \\
    &\G{\varphi}{w}{t} \coloneqq \{ \psi \in \nu(\varphi) \mid (w, t) \models \ltlG \psi \}
    &&\FG{\varphi}{w}{t} \coloneqq \{ \psi \in \nu(\varphi) \mid (w, t) \models \ltlF \ltlG \psi \}.
\end{aligned}
\end{equation*}
As mentioned, we are in particular interested in the point at which the two sets in each row coincide. We express this as the word being \textit{stable} at that point. 
\begin{definition}[Stable words]
    A word $w$ is $\mu$-\textit{stable} ($\nu$-\textit{stable}) with respect to a formula $\varphi$ at index $t$ if $\F{\varphi}{w}{t} = \GF{\varphi}{w}{t}$ ($\G{\varphi}{w}{t} = \FG{\varphi}{w}{t}$). If $w$ is both $\mu$-stable and $\nu$-stable with respect to $\varphi$ at index $t$, then it is \textit{stable} with respect to $\varphi$ at index $t$.
\end{definition}
\begin{restatable}{lemma}{lemmastabilitytwo}
\label{lemma_stability_two}
    Let $\varphi$ be a formula and $w$ a word. Then there exists an index $r \in \mathbb{N}$ such that $w$ is stable with respect to $\varphi$ at all indices $t \geq r$.
\end{restatable}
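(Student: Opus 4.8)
The plan is to show that the four sequences $(\F{\varphi}{w}{t})_{t \in \mathbb{N}}$, $(\GF{\varphi}{w}{t})_{t \in \mathbb{N}}$, $(\G{\varphi}{w}{t})_{t \in \mathbb{N}}$, $(\FG{\varphi}{w}{t})_{t \in \mathbb{N}}$ are eventually constant, and that once they stabilize the two equalities defining stability hold. The starting observation is that $\ltlG\ltlF\psi$ and $\ltlF\ltlG\psi$ express \emph{tail} properties: $(w,t)\models\ltlG\ltlF\psi$ unfolds to $\forall s\geq t\,.\,\exists r\geq s\,.\,(w,r)\models\psi$, and $(w,t)\models\ltlF\ltlG\psi$ to $\exists s\geq t\,.\,\forall r\geq s\,.\,(w,r)\models\psi$; in both cases the truth value is independent of $t$, as one sees by shifting the witness index $s$ up to $\max(s,t)$ in one direction and using $s\geq t$ in the other. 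This argument only manipulates the existence of witness positions $r$, so it goes through even though $\psi$ may contain past operators. Consequently $\GF{\varphi}{w}{t}$ and $\FG{\varphi}{w}{t}$ do not depend on $t$; denote them $G$ and $F$.

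Next I would record the inclusions that hold at every index. Instantiating $s \coloneqq t$ in the two unfoldings gives $\GF{\varphi}{w}{t} \subseteq \F{\varphi}{w}{t}$ and $\G{\varphi}{w}{t} \subseteq \FG{\varphi}{w}{t}$. Hence $\mu$-stability at $t$ is equivalent to $\F{\varphi}{w}{t} \subseteq G$, and $\nu$-stability at $t$ to $F \subseteq \G{\varphi}{w}{t}$. For the former: if $\psi \in \mu(\varphi) \setminus G$, then $(w,0) \not\models \ltlG\ltlF\psi$, which unfolds to $\exists s\,.\,\forall r \geq s\,.\,(w,r)\not\models\psi$; taking $r_\psi$ to be such an $s$, we get $(w,t)\not\models\ltlF\psi$, i.e.\ $\psi \notin \F{\varphi}{w}{t}$, for every $t \geq r_\psi$. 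Since $\mu(\varphi)$ is finite, $r_\mu \coloneqq \max \{\, r_\psi \mid \psi \in \mu(\varphi) \setminus G \,\}$ (understood as $0$ when this set is empty) is well defined, and $\F{\varphi}{w}{t} \subseteq G$ for all $t \geq r_\mu$. Symmetrically, if $\psi \in F$ then $(w,0) \models \ltlF\ltlG\psi$, so there is an index $s_\psi$ with $(w,r)\models\psi$ for all $r \geq s_\psi$, whence $\psi \in \G{\varphi}{w}{t}$ for all $t \geq s_\psi$; finiteness of $\nu(\varphi)$ yields $r_\nu \coloneqq \max \{\, s_\psi \mid \psi \in F \,\}$ with $F \subseteq \G{\varphi}{w}{t}$ for all $t \geq r_\nu$.

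Finally, set $r \coloneqq \max(r_\mu, r_\nu)$. For every $t \geq r$ the derived inclusions combine with the always-valid ones to give $\F{\varphi}{w}{t} = G = \GF{\varphi}{w}{t}$ and $\G{\varphi}{w}{t} = F = \FG{\varphi}{w}{t}$, so $w$ is both $\mu$- and $\nu$-stable with respect to $\varphi$ at $t$, hence stable. I do not expect a serious obstacle: the one point requiring care, and the only place where past operators could conceivably interfere, is the index-independence of $\ltlG\ltlF\psi$ and $\ltlF\ltlG\psi$, which survives because the semantics of $\ltlF$ and $\ltlG$ quantify only over the \emph{positions} of witnesses and not over prefixes. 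The second ingredient---passing from a statement about all sufficiently large $t$ to a single index---is precisely where the finiteness of $\mu(\varphi)$ and $\nu(\varphi)$ enters.
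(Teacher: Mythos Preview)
Your proposal is correct and follows essentially the same approach as the paper: the paper first proves (as a separate lemma) that for each individual subformula there is an index beyond which $\ltlF\psi \Leftrightarrow \ltlG\ltlF\psi$ and $\ltlG\psi \Leftrightarrow \ltlF\ltlG\psi$, and then takes the maximum over all subformulae. Your argument is a slightly more explicit unfolding of the same idea, making the index-independence of $\GF{\varphi}{w}{t}$ and $\FG{\varphi}{w}{t}$ explicit before taking the maximum.
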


The following two definitions specify how to rewrite a formula according to sets $M$ and $N$, as indicated in i):
\begin{definition}
\label{definition_M}
Let $\varphi$ be a formula and $M$ a set of $\mu$-$\mathit{pLTL}$-formulae. The formula $\varphi[M]_\nu$ is inductively defined as
\begin{equation*}
\begin{aligned}
    a[M]_\nu &\coloneqq a &&(\text{a atomic}) \\
    (\ltlUNOP \psi)[M]_\nu &\coloneqq \ltlUNOP (\psi[M]_\nu) &&(\ltlUNOP \text{unary}) \\
    (\psi \ltlBINOP \xi)[M]_\nu &\coloneqq (\psi[M]_\nu) \ltlBINOP (\xi[M]_\nu) &&(\ltlUNOP \in \{ \ltlWsymb, \ltlRsymb, \ltlSsymb, \ltlWSsymb, \ltlNWSsymb, \ltlNSsymb \}) \\
    (\psi \ltlU \xi)[M]_\nu &\coloneqq
    \mathrlap{\begin{cases}
        (\psi[M]_\nu) \ltlW (\xi[M]_\nu) &(\psi \ltlU \xi \in M) \\
        \bot &(\mbox{otherwise})
    \end{cases}}
    \\
    (\psi \ltlM \xi)[M]_\nu &\coloneqq
    \mathrlap{\begin{cases}
        (\psi[M]_\nu) \ltlR (\xi[M]_\nu) &(\psi \ltlM \xi \in M) \\
        \bot &(\mbox{otherwise}),
    \end{cases}}
\end{aligned}
\end{equation*}
\end{definition}
\begin{definition}
Let $\varphi$ be a formula and $N$ a set of $\nu$-$\mathit{pLTL}$-formulae. The formula $\varphi[N]_\mu$ is inductively defined as
\begin{equation*}
\begin{aligned}
    (\psi \ltlW \xi)[N]_\mu &\coloneqq
    \begin{cases}
        \top &(\psi \ltlW \xi \in N) \\
        (\psi[N]_\mu) \ltlU (\xi[N]_\mu) &(\mbox{otherwise})
    \end{cases} \\
    (\psi \ltlR \xi)[N]_\mu &\coloneqq
    \begin{cases}
        \top &(\psi \ltlR \xi \in N) \\
        (\psi[N]_\mu) \ltlM (\xi[N]_\mu) &(\mbox{otherwise}).
    \end{cases}
\end{aligned}
\end{equation*}
The other cases are defined by recursive descent similarly to Definition~\ref{definition_M}.
\end{definition}

Notice that once a formula has been rewritten by $M$, it becomes a 
$\nu$-formula. Dually, once a formula has been rewritten by $N$, it becomes a $\mu$-formula. In terms of the hierarchy of Manna and Pnueli~\cite{hierarchy}, these are safety and guarantee formulae, respectively.
It is relatively simple to construct deterministic automata for such formulae as they do not require complicated acceptance conditions.

\begin{restatable}[The Master Theorem for $pLTL$]{theorem}{theoremmaster}
\label{theorem_master}
Let $\varphi$ be a formula and $w$ a word stable with respect to $\varphi$ at index $r$. Then $w \models \varphi$ iff there exist $M \subseteq \mu(\varphi)$ and $N \subseteq \nu(\varphi)$ such that
\begin{equation*}
\begin{aligned}
    &1) \ w_r \models \afc(\varphi, w_{0r}, \entseq{\varphi}{r})[M\rw{\compseq{\varphi}{r}}]_\nu. \\
    &2) \ \forall \psi \in M \, . \, \forall s \, . \, \exists t \geq s \, . \, w_t \models \ltlF (\psi\rw{\compseq{\varphi}{t}}[N\rw{\compseq{\varphi}{t}}]_\mu). \\
    &3) \ \forall \psi \in N \, . \, \exists t \geq 0 \, . \, w_t \models \ltlG (\psi\rw{\compseq{\varphi}{t}}[M\rw{\compseq{\varphi}{t}}]_\nu).
\end{aligned}
\end{equation*}
\end{restatable}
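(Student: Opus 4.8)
The plan is to mirror the proof of the Master Theorem in Esparza et al.~\cite{esparza_unified_translation}, replacing each appeal to the after-function and to future-operator decompositions with their ``pastified'' analogues from Sections~\ref{section_encodingthepast} and~\ref{section_afterfunction}. The backbone is the following chain of reductions. First, use Theorem~\ref{theorem_af_correct} to move from the global statement $w \models \varphi$ to a statement about the suffix $w_r$ and the transformed formula $\af(\varphi, w_{0r})$; but since we want a \emph{deterministic} witness, work instead with the local after-function along the canonical sequence of entailed subformulae, i.e. with $\afc(\varphi, w_{0r}, \entseq{\varphi}{r})$, which by Lemma~\ref{lemma_entailed_sequence} and the correctness of $\afc$ along that sequence is equivalent to $\varphi$ evaluated at $r$ on $w$. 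Then invoke stability at $r$ together with idea i): on a stable word, every $\psi \ltlU \xi \in \mu(\varphi)$ that holds is equivalent to $(\psi \ltlW \xi) \land \ltlG\ltlF(\psi \ltlU \xi)$, and dually every $\psi \ltlW \xi \in \nu(\varphi)$ that holds is equivalent to $(\psi \ltlU \xi) \lor \ltlF\ltlG(\psi \ltlW \xi)$. Iterating these rewrites over the (finitely many) $\mu$- and $\nu$-subformulae, with $M$ the set of those eventually-always-recurrent and $N$ the set of those eventually-always-holding, collapses $\afc(\varphi, w_{0r}, \entseq{\varphi}{r})$ to the formula in condition~1) after substitution by $M\rw{\compseq{\varphi}{r}}$ — the rewrite $\rw{\compseq{\varphi}{r}}$ being forced on $M$ because the subformulae of $\af$ have had their past operators weakened/strengthened along the prefix, so the matching against $\mu(\af(\varphi,w_{0r}))$ must be done modulo that rewrite.

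For the two remaining conditions I would argue each direction separately. For ($\Rightarrow$): if $w \models \varphi$, take $M = \GF{\varphi}{w}{0}$ and $N = \FG{\varphi}{w}{0}$ (equivalently, by stability, the sets $\F{\varphi}{w}{r}$ and $\G{\varphi}{w}{r}$). Condition~1) then follows from the collapse described above. For condition~2): each $\psi \in M$ is, by definition, satisfied infinitely often; I need that infinitely often $\ltlF(\psi\rw{\compseq{\varphi}{t}}[N\rw{\compseq{\varphi}{t}}]_\mu)$ holds at $w_t$. Here $\psi\rw{\compseq{\varphi}{t}}$ is just the correct representative of $\psi$ after consuming $w_{0t}$ (Lemma~\ref{lemma_entailed_sequence}), and substituting by $N$ — which by i) replaces exactly the $\nu$-subformulae that are from some point on always true — is sound from any sufficiently late $t$ onward, which is enough since we only need cofinally many witnesses. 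Symmetrically, condition~3): each $\psi \in N$ is almost always satisfied, so from some index onward it is always satisfied; pick $t$ past that index and past the point where the $M$-substitution becomes sound (the $\mu$-subformulae in $M$ are the ones recurrent forever, so replacing $\psi\ltlU\xi$ by $\psi\ltlW\xi$ and dropping the others is valid from then on), and $\ltlG(\psi\rw{\compseq{\varphi}{t}}[M\rw{\compseq{\varphi}{t}}]_\nu)$ holds at $w_t$. For ($\Leftarrow$): assume $M$, $N$ satisfying 1)--3) exist. Conditions~2) and~3) pin down, via the stability hypothesis, that $M \subseteq \GF{\varphi}{w}{0}$ and $N \subseteq \FG{\varphi}{w}{0}$ — more precisely they force the substituted subformulae to behave as the rewrites i) require — and then condition~1), read backwards through the same collapse and through correctness of $\afc$, yields $w_r \models \varphi\rw{\compseq{\varphi}{r}}$, hence $(w,r)\models\varphi$, hence — again using stability and the fact that $w$'s behaviour on $\mu$- and $\nu$-subformulae is already settled by index $r$ — $w \models \varphi$.

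The main obstacle is bookkeeping the interaction between the two ``rewrite layers'': the past-operator rewrite $\rw{\compseq{\varphi}{\cdot}}$ that encodes history, and the $[M]_\nu$/$[N]_\mu$ substitutions that encode limit behaviour. One must show these commute in the needed sense — that $\afc(\varphi, w_{0r}, \entseq{\varphi}{r})[M\rw{\compseq{\varphi}{r}}]_\nu$ is the right object, i.e. that substituting-then-rewriting agrees with rewriting-then-substituting on the relevant subformulae, and that the weakening conditions introduced by $\afc$ do not spoil the $\mu$/$\nu$ classification. This is where Lemma~\ref{lemma_propequiv_preserved} (to push the substitutions through the Boolean structure produced by $\afc$) and Lemma~\ref{lemma_entailed_sequence} (to identify $\psi\rw{\compseq{\varphi}{t}}$ as the faithful representative) do the heavy lifting; the rest is a careful but routine induction following the template of~\cite{esparza_unified_translation}. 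A secondary subtlety is that the original Master Theorem states condition~1) at index $r = 0$ after an outer $\af$; here we have deliberately kept an explicit $r$ and used $\afc$ along $\entseq{\varphi}{r}$ to stay deterministic, so a small lemma equating $w_r \models \afc(\varphi, w_{0r}, \entseq{\varphi}{r})$ with $(w,r)\models\varphi$ (a direct consequence of the correctness of $\afc$ and Lemma~\ref{lemma_entailed_sequence}) is needed up front.
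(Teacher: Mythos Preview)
Your overall strategy matches the paper's: choose $M=\GF{\varphi}{w}{0}$, $N=\FG{\varphi}{w}{0}$ for $(\Rightarrow)$; for $(\Leftarrow)$ first extract $M\subseteq\GF{\varphi}{w}{0}$ and $N\subseteq\FG{\varphi}{w}{0}$ from 2)--3), then peel off the $[M]_\nu$ from 1) and invoke correctness of $\afc$. The key technical lemma the paper uses throughout is Lemma~\ref{lemma_MN_properties} (the four monotonicity properties of $[M]_\nu$ and $[N]_\mu$), not Lemma~\ref{lemma_propequiv_preserved}; your ``commuting rewrite layers'' worry is handled there, and no separate commutation argument is needed.

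There is, however, a genuine error in your $(\Leftarrow)$ endgame. You write that removing $[M]_\nu$ from condition~1) ``yields $w_r\models\varphi\rw{\compseq{\varphi}{r}}$, hence $(w,r)\models\varphi$, hence \dots\ $w\models\varphi$''. This conflates two different objects: what you actually obtain is $w_r\models\afc(\varphi,w_{0r},\entseq{\varphi}{r})$, and by Lemma~\ref{lemma_entailment_correct} this is equivalent to $w\models\varphi$ \emph{directly} --- the after-function tracks what remains for $\varphi$ to hold at time $0$, not at time $r$. The formula $\varphi\rw{\compseq{\varphi}{r}}$ is a different thing (it encodes $(w,r)\models\varphi$, cf.\ Lemma~\ref{lemma_entailed_sequence}), and you never produce it. Your proposed ``small lemma equating $w_r\models\afc(\varphi,w_{0r},\entseq{\varphi}{r})$ with $(w,r)\models\varphi$'' is therefore false as stated, and the subsequent appeal to stability to pass from $(w,r)\models\varphi$ to $w\models\varphi$ is both unnecessary and invalid (stability says nothing of the sort). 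Fixing this is easy once you see it: just cite Lemma~\ref{lemma_entailment_correct} and stop.

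A second, smaller gap: the step ``conditions 2) and 3) pin down $M\subseteq\GF{\varphi}{w}{0}$ and $N\subseteq\FG{\varphi}{w}{0}$'' is the content of Lemma~\ref{lemma_MN_<=}, and its proof is not a one-liner. It is a mutual induction along a subformula-respecting enumeration of $M\cup N$: to strip the $[N]_\mu$ from a $\psi_i\in M$ via Lemma~\ref{lemma_MN_properties}.4 you need $N_{i-1}\subseteq\G{\varphi}{w}{r}$, which comes from the inductive hypothesis on the smaller $\nu$-formulae together with stability, and dually for $\psi_i\in N$. Your sketch does not surface this dependency structure.
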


The statements of the Master Theorem in the only if-direction can be significantly strengthened.
The existential quantification over $t$ in premise 2 can be made universal. 
The statement $w_t \models \ldots$ in premise 3 holds for all $t' \geq r$. The strengthened version is proven in the appendix. Given the semantics of $\ltlF$ and $\ltlG$, this is not a surprise.
However, the ability to do the rewrites at every given moment is technically involved due to the incorporation of the past.
In the next section we show how to use the Master Theorem in the construction of a DRA.

\section{From \texorpdfstring{$\mathit{pLTL}$}{pLTL} to DRA}
\label{section_toautomata}

We are now ready, based on the Master Theorem, to construct a DRA for the language of a formula $\varphi$. 
We decompose the language of $\varphi$ into a Boolean combination of languages, each of which is recognized by a deterministic automaton with the relatively simple acceptance condition of B\"uchi or co-B\"uchi. 
For every possible pair of sets $M$ and $N$ of $\mu$- and $\nu$-subformulae we try to establish the premises of the Master Theorem.
Premise 1 can be checked by trying to identify a stability point $r$ from which the safety automaton for $\afc(\varphi, w_{0r}, \entseq{\varphi}{r})[M\rw{\compseq{\varphi}{t}}]_\nu$ continues forever. Whenever this safety check fails, simply try again.
Overall, this corresponds to a co-B\"uchi condition and a DCA.
Premise 2 can be checked for every $\psi\in M$ by identifying infinitely many points from which the guarantee automaton for $\ltlF (\psi\rw{\compseq{\varphi}{t}}[N\rw{\compseq{\varphi}{t}}]_\mu)$ finishes its check.
Overall, this corresponds to a B\"uchi condition and a DBA.
Premise 3 is dual to premise 2 and leads to a DCA.
The three together are combined to a DRA with one pair.
Overall, we get a DRA with exponentially many Rabin pairs; one for each choice of $M$ and $N$. We will as shorthand make use of the operators $\ltlF$ and $\ltlG$ in the construction of these automata, as described in Section~\ref{subsection_pltl}.

The major difficulty of incorporating the past into this part, is that the rewriting using the set $M$ and $N$ needs to be done with the past subformulae correctly weakened.
To facilitate this, we begin by defining an auxiliary automaton in Section~\ref{subsection_wc_automaton} that serves to track the weakening conditions that must hold in order to justify rewrites by $\placeholder\rw{\placeholder}$.

The state spaces of the automata we construct are defined in terms of the following notions. Given a formula $\varphi$ we denote by $\mathbb{B}(\varphi)$ the set of formulae $\psi$ satisfying $\sff(\psi) \subseteq \sff(\varphi) \cup \{ \top, \bot \}$. Similarly, consider a formula that is a disjunction of formulae with the property that for each such disjunct $\psi$ there exists a $C \in 2^{\psf(\varphi)}$ such that $\psi \in \mathbb{B}(\varphi\rw{C})$. We denote by $\mathbb{B}^\lor(\varphi)$ the set of all such formulae. A key observation is that the sizes of the quotient sets $\mathbb{B}(\varphi)_{/\sim}$ and $\mathbb{B}^\lor(\varphi)_{/\sim}$ are doubly exponential in the size of $\varphi$. This is proven in Appendix~\ref{appendix_reach}.

For the remainder of this section we consider a fixed formula $\varphi$ that is to be translated into a Rabin automaton and a fixed ordering $C_1, C_2, \dots, C_k$ of the elements of $2^{\psf(\varphi)}$. For simplicity, we assume $C_1 = \{ \psi \in \psf(\varphi) \mid \psi = \psi_\mathcal{W} \} = \entcon{\varphi}{0}$. We freely make use of $\af$, $\placeholder[\placeholder]_\nu$, and $\placeholder[\placeholder]_\mu$ lifted to equivalence classes of formulae. This is justified by Lemma~\ref{lemma_propequiv_preserved}.

\subsection{The Weakening Conditions Automaton}
\label{subsection_wc_automaton}
The weakening conditions automaton (WC automaton) is a bed automaton that tracks the development of weakening conditions under rewrites of the local after-function. Its states are $k$-tuples of formulae, each of which describes the requirements that remain to be verified in order to justify a sequence of rewrites. To facilitate the construction of the WC automaton, we define the following function:
\begin{restatable}{definition}{definitionrc}
Given a $k$-tuple of formulae $\psi^\times = \langle \psi_1, \psi_2, \dots, \psi_k \rangle$ and a letter $\sigma$, the \textit{rewrite condition} function is defined as $\cwc(\psi^\times, \sigma) \coloneq \langle \psi'_1, \psi'_2, \dots, \psi'_k \rangle$, where
\begin{equation*}
    \psi'_i = \bigvee_{j \in J_i} \bigg(
    \afc(\psi_j, \sigma, C_i\rw{C_j}) \land
    \bigwedge_{\xi \in C_i}
    \afc(\wc(\xi\rw{C_j}), \sigma, C_i\rw{C_j})
    \bigg),
\end{equation*}
and where $J_i \coloneqq \big\{ j \in [1..k] \mid \forall \xi, \xi' \in \psf(\varphi) \, . \, \xi\rw{C_j} = \xi'\rw{C_j} \Rightarrow \xi\rw{C_i} = \xi'\rw{C_i} \big\}$. 
\end{restatable}
The definition of $J_i$ ensures that $\psi'_i \in \mathbb{B}(\varphi\rw{C_i})$. See Appendix~\ref{appendix_reach} for a proof. \\
The rewrite condition function takes a $k$-tuple of formulae and a letter, and returns an updated $k$-tuple. In the resulting tuple, the $i^{\text{th}}$ item $\psi'_i$ is a formula that encodes the updated requirements for further applying the rewrite $\placeholder\rw{C_i}$. We remark that, for every $t > 0$, there exist indices $i \in [1..k]$ and $j \in J_i$ such that $C_i = \compseq{\varphi}{t}, C_j = \compseq{\varphi}{t - 1}$, and $C_i\rw{C_j} = \entcon{\varphi}{t}$. 

We now define the WC automaton $\mathcal{H}_\varphi \coloneqq (S, S_0, \delta_{\mathcal{H}})$ over $2^{\Var(\varphi)}$. Its set of states $S$ is $\Pi_{i = 1}^k \, \mathbb{B}(\varphi\rw{C_i})_{/\sim}$. The initial state $S_0$ is the $k$-tuple $\langle [\top]_\sim, [\bot]_\sim, \dots, [\bot]_\sim \rangle$, which represents that the set used to rewrite $\varphi$ into its initial form is known to be $C_1$. Finally, the transition relation $\delta$ is defined by $\delta_\mathcal{H}(\psi^\times, \sigma) = \cwc(\psi^\times, \sigma)$, with $\cwc$ lifted to propositional equivalence classes of formulae in the natural way.

\subsection{Verifying the Premises of the Master Theorem}
\label{subsection_verifying_master}
We now describe the automata that are capable of verifying the premises of the Master Theorem. They are all runner automata with bed automaton $\mathcal{H}_\varphi \coloneqq (S, S_0, \delta_{\mathcal{H}})$.

Given a formula $\psi \in \mu(\varphi)$ and set $N \subseteq \nu(\varphi)$ we define the runner automaton $\mathcal{B}^\psi_{2, N} \coloneqq (S, Q, Q_0, \delta, \alpha)$ over $2^{\Var(\varphi)}$ with set of states $Q \coloneqq \, \mathbb{B}^\lor(\varphi[N]_\mu \land \ltlF (\psi[N]_\mu))_{/\sim}$, initial state $Q_0 \coloneqq [\ltlF(\psi[N]_\mu)]_\sim$, and Büchi acceptance condition $\alpha \coloneqq [\top]_\sim$. Finally, its transition relation is defined as
\begin{equation*}
    \delta(\zeta, \xi^\times, \sigma) \coloneqq
    \begin{cases}
    \begin{aligned}
        &\bigvee_{i \in [1..k]} \ltlF(\psi\rw{C_i}[N\rw{C_i}]_\mu) \land \xi_i[N\rw{C_i}]_\mu &&(\zeta \sim \top) \\
        &\af(\zeta, \sigma) &&(\text{otherwise}),
    \end{aligned}
    \end{cases}
\end{equation*}
where $\xi^\times = \langle \xi_1, \xi_ 2, \dots, \xi_k \rangle$. For readability, we express $\delta$ in terms of formulae, but ask the reader to note that they represent their corresponding propositional equivalence classes.

Informally, the automaton $\mathcal{B}^\psi_{2, N}$ begins by checking that the input word $w$ satisfies the formula $\ltlF(\psi[N]_\mu)$. Because this formula is in the fragment $\mu$-$\mathit{pLTL}$, it is satisfied by $w$ iff the after-function eventually rewrites it into a propositionally true formula. At this point, the automaton restarts, checking the formula again. Because the subset of $\psf(\varphi)$ that puts $\psi[N]_\mu$ in the correct form at this point \textendash{} with respect to its past subformulae \textendash{} is unknown, all possible such sets are considered in the form of a disjunction. To each disjunct the corresponding weakening conditions, as tracked by the WC automaton, are added. It follows that $\mathcal{H_\varphi} \ltimes \mathcal{B}^\psi_{2, N}$ is able to verify premise 2) of the Master Theorem for the considered formula $\psi$ and sets $M$ and $N$.

Given a formula $\psi \in \nu(\varphi)$ and set $M \subseteq \mu(\varphi)$ we define the runner automaton $\mathcal{C}^\psi_{3, M} \coloneqq (S, Q, Q_0, \delta, \alpha)$ over $2^{\Var(\varphi)}$ with set of states $Q \coloneqq \, \mathbb{B}^\lor(\varphi[M]_\nu \land \ltlG (\psi[M]_\nu))_{/\sim}$, initial state $Q_0 \coloneqq [\ltlG(\psi[M]_\nu)]_\sim$, and co-Büchi acceptance condition $\alpha \coloneqq [\bot]_\sim$. Finally, its transition relation is defined as
\begin{equation*}
    \delta(\zeta, \xi^\times, \sigma) \coloneqq
    \begin{cases}
    \begin{aligned}
        &\bigvee_{i \in [1..k]} \ltlG(\psi\rw{C_i}[M\rw{C_i}]_\nu) \land \xi_i[M\rw{C_i}]_\nu &&(\zeta \sim \bot) \\
        &\af(\zeta, \sigma) &&(\text{otherwise}),
    \end{aligned}
    \end{cases}
\end{equation*}
where $\xi^\times = \langle \xi_1, \xi_ 2, \dots, \xi_k \rangle$. As before, the formulae of $\delta$ represent their corresponding equivalence classes. By a similar argument as before, the automaton $\mathcal{H} \ltimes \mathcal{C}^\psi_{3, M}$ is able to verify premise 3) of the Master Theorem for the formula $\psi$ and sets $M$ and $N$.

We now turn to constructing automata for verifying the first premise of the Master Theorem. Given a set $M \subseteq \mu(\varphi)$, we define the runner automaton $\mathcal{C}^1_{\varphi, M} \coloneqq (S, Q, Q_0, \delta, \alpha)$ over $2^{\Var(\varphi)}$ with set of states $Q \coloneqq \mathbb{B}^\lor(\varphi)_{/\sim} \times \mathbb{B}^\lor(\varphi[M]_\nu)_{/\sim}$, initial state $Q_0 \coloneqq \langle [\varphi]_\sim, [\varphi[M]_\nu]_\sim \rangle$, and co-Büchi acceptance condition $\alpha \coloneqq \mathbb{B}^\lor(\varphi)_{/\sim} \times \{ [\bot]_\sim \}$. Its transition relation is defined as
\begin{equation*}
    \delta(\langle \psi, \zeta \rangle, \xi^\times, \sigma) \coloneqq \\
    \begin{cases}
    \begin{aligned}
        &\Bigl\langle \af(\psi, \sigma), \bigvee_{i \in [1..k]} \af(\psi,\sigma)[M\rw{C_i}]_\nu \land \xi_i[M\rw{C_i}]_\nu \Bigr\rangle &&(\zeta \sim \bot) \\
        &\Bigl\langle \af(\psi, \sigma), \af(\zeta, \sigma) \Bigr\rangle &&(\text{otherwise}),
    \end{aligned}
    \end{cases}
\end{equation*}
where $\xi^\times = \langle \xi_1, \xi_ 2, \dots, \xi_k \rangle$. Again, we remind the reader that the formulae in the above definition represent equivalence classes of formulae. The purpose of the above automaton is to ``guess'' an index at which $w$ is stable with respect to $\varphi$, starting with the guess that it is initially stable. Both $\varphi$ and $\varphi[M]$ are evaluated in tandem. If the current guess of point of stability is incorrect, the second component will eventually collapse into a formula propositionally equivalent to $\bot$. At this point, the automaton proceeds with a new guess by reapplying $\placeholder[M]_\nu$ to $\varphi$ as it has currently been transformed by the after-function. As with the other automata, the formulae that make up the states of the WC automaton are used to justify the rewrite by each set $C_i$. The automaton $\mathcal{H}_\varphi \ltimes \mathcal{C}^1_{\varphi, M}$ is able to verify premise 1) of the Master Theorem for the formula $\varphi$ and the set $M$.

\subsection{The Rabin Automaton}

We now construct the final deterministic Rabin automaton.
The automaton is the disjunction of up to $2^n$ simpler Rabin automata; one for every possible choice of $M\subseteq \mu(\varphi)$ and $N\subseteq \mu(\varphi)$. 
Taking the disjunction of deterministic Rabin automata is possible by running automata for all disjuncts in parallel (via a product construction) and taking the acceptance condition that checks that at least one of them is accepting.
Each simpler Rabin automaton checks the three premises of the Master Theorem for its specific $M$ and $N$:
(a) $M\subseteq \GF{\varphi}{w}{0}$, (b) $N\subseteq \FG{\varphi}{w}{0}$, and 
(c) $w$ satisfies a version of $\varphi$ simplified by $M$.
Each of the three is checked by a B\"uchi or co-B\"uchi automaton. 
In order to check all three we have to consider their conjunction.
For a Rabin automaton (with one pair) it is possible to check the conjunction of B\"uchi and co-B\"uchi by running automata for all conjuncts in parallel (product construction) and using the Rabin acceptance condition (one pair) to ensure that all co-B\"uchi automata and all B\"uchi automata are accepting. 

\begin{restatable}{theorem}{theoremrabincorrectness}
\label{theorem_rabin_correctness}
Let $\varphi$ be a formula. For each $M \subseteq \mu(\varphi)$ and $N \subseteq \nu(\varphi)$, define
\begin{equation*}
\begin{gathered}
\mathcal{B}^2_{M, N} \coloneqq \bigcap_{\psi \in M} \mathcal{H}_\varphi \ltimes \mathcal{B}^\psi_{2, N} \qquad
\mathcal{C}^3_{M, N} \coloneqq \bigcap_{\psi \in N} \mathcal{H}_\varphi \ltimes \mathcal{C}^\psi_{3, M} \\
\mathcal{R}_{\varphi, M, N} \coloneqq (\mathcal{H}_\varphi \ltimes \mathcal{C}^1_{\varphi, M}) \cap  \mathcal{B}^2_{M, N} \cap \mathcal{C}^3_{M, N},
\end{gathered}
\end{equation*}
where $\mathcal{R}_{\varphi, M, N}$ has one Rabin pair. Then the following DRA over $2^{\Var(\varphi)}$ recognizes $\mathcal{\varphi}$:
\begin{equation*}
    \mathcal{A}_{DRA}(\varphi) \coloneqq \bigcup_{\substack{M \subseteq \mu(\varphi) \\ N \subseteq \nu(\varphi)}} \mathcal{R}_{\varphi, M, N}.
\end{equation*}
\end{restatable}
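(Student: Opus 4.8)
The plan is to reduce acceptance of $\mathcal{A}_{DRA}(\varphi)$, step by step through the composition operators, to the three premises of the Master Theorem (Theorem~\ref{theorem_master}), which together with Lemma~\ref{lemma_stability_two} pins down membership in $\mathcal{L}(\varphi)$.

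First I would isolate the purely automata-theoretic bookkeeping. A finite union of DRAs accepts $w$ iff some disjunct does; a one-pair Rabin automaton obtained, as described just before the theorem, by running DBAs and DCAs in parallel and demanding with a single pair that every co-B\"uchi conjunct avoids its rejecting set while every B\"uchi conjunct hits its accepting set, recognizes the intersection of the component languages; and a cascade $\mathcal{H}_\varphi \ltimes \mathcal{B}$ with the bed automaton $\mathcal{H}_\varphi$ deterministic recognizes the language of $\mathcal{B}$ read against the next-state information supplied by $\mathcal{H}_\varphi$. Chaining these, $\mathcal{A}_{DRA}(\varphi)$ accepts $w$ iff there are $M \subseteq \mu(\varphi)$ and $N \subseteq \nu(\varphi)$ such that $\mathcal{H}_\varphi \ltimes \mathcal{C}^1_{\varphi, M}$ accepts $w$, every $\mathcal{H}_\varphi \ltimes \mathcal{B}^\psi_{2, N}$ with $\psi \in M$ accepts $w$, and every $\mathcal{H}_\varphi \ltimes \mathcal{C}^\psi_{3, M}$ with $\psi \in N$ accepts $w$.

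The heart of the proof is three correspondence statements, one per runner automaton, whose shared prerequisite is an invariant for $\mathcal{H}_\varphi$: the state $\langle \xi_1^t, \dots, \xi_k^t \rangle$ reached after reading $w_{0t}$ satisfies $w_t \models \xi_{i^\ast}^t$ for the index $i^\ast$ with $C_{i^\ast} = \compseq{\varphi}{t}$ (which exists by the remark following the definition of $\cwc$), while $w_t \models \xi_i^t$ for an arbitrary $i$ only if rewriting by $C_i$ and by $\compseq{\varphi}{t}$ coincide on $\psf(\varphi)$; this is proven by induction on $t$ from the definition of $\cwc$, the index sets $J_i$, and Definition~\ref{definition_set_entailed}, and dovetails with the state-space arguments of Appendix~\ref{appendix_reach}. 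With this in hand I would unfold each runner automaton along $w$ jointly with the run of $\mathcal{H}_\varphi$. For $\mathcal{C}^1_{\varphi, M}$ the second component is driven by $\af$ and re-seeded, whenever it collapses to $\bot$, by the $C_i$-indexed disjunction filtered by the current $\xi_i^t$; by the safety characterization of the after-function on $\nu$-formulae (a consequence of Theorem~\ref{theorem_af_correct}), the co-B\"uchi condition holds iff from some index $r$ onward the second component never collapses, and by the $\mathcal{H}_\varphi$-invariant this is exactly the first premise at $r$, which by Lemma~\ref{lemma_stability_two} may be taken to be a stability point (reconciling the disjunctive $\af$ driving the first component with the sequence-indexed $\afc(\varphi, w_{0r}, \entseq{\varphi}{r})$ of the first premise via Lemma~\ref{lemma_entailed_sequence}). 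For $\mathcal{B}^\psi_{2, N}$, by the co-safety characterization of the after-function on the $\mu$-formula $\ltlF(\psi[N]_\mu)$ and its rewrites, the accepting state $[\top]_\sim$ is visited infinitely often iff for cofinally many $t$ we have $w_t \models \ltlF(\psi\rw{\compseq{\varphi}{t}}[N\rw{\compseq{\varphi}{t}}]_\mu)$, i.e.\ premise~2). For $\mathcal{C}^\psi_{3, M}$, dually, the co-B\"uchi condition holds iff from some point on the re-seeded $\ltlG$-formula never collapses, i.e.\ iff premise~3).

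Assembling the pieces: given $w$, fix by Lemma~\ref{lemma_stability_two} an index $r$ at which $w$ is stable with respect to $\varphi$; by Theorem~\ref{theorem_master} in its strengthened form announced after the theorem statement (so that premise~2 yields cofinally many witnesses and premise~3 persists, matching what the B\"uchi and co-B\"uchi conditions actually require), $w \models \varphi$ iff some $M, N$ satisfy premises 1)--3) at $r$; by the three correspondences this is iff the matching component automata all accept $w$; and by the reduction of the second paragraph this is iff $\mathcal{A}_{DRA}(\varphi)$ accepts $w$, i.e.\ iff $w \in \mathcal{L}(\varphi)$. I expect the main obstacle to be precisely the $\mathcal{H}_\varphi$-invariant and, building on it, the claim that the disjunction over all of $2^{\psf(\varphi)}$ in the re-seeding transitions, once filtered by the tracked weakening conditions, recovers exactly the single rewrite $\placeholder\rw{\compseq{\varphi}{t}}$ occurring in the Master Theorem; this layer has no counterpart in the pure-future construction of Esparza et al.\ and demands a careful interplay of $\cwc$, the index sets $J_i$, the extended local after-function, and the entailed-subformula sets of Definition~\ref{definition_set_entailed}.
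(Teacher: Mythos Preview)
Your overall decomposition is right and matches the paper's: reduce via the union/product structure to the three families of runner automata, prove a correctness lemma for each against the run of $\mathcal{H}_\varphi$, and close with the Master Theorem. The forward direction goes through essentially as you sketch once one fixes $M = \GF{\varphi}{w}{0}$ and $N = \FG{\varphi}{w}{0}$, which you do not make explicit but is the choice the Master Theorem's own proof supplies and which is needed for the side conditions below.

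The genuine gap is in the backward direction, and it sits exactly where you anticipate the difficulty but mis-diagnose its shape. You claim clean biconditionals for each runner automaton, e.g.\ that $\mathcal{H}_\varphi \ltimes \mathcal{B}^\psi_{2,N}$ accepts iff premise~2) holds. This is not what can be proven. The re-seeding transitions do not conjoin the bare bed-automaton components $\xi_i^t$; they conjoin $\xi_i^t[N\rw{C_i}]_\mu$ (respectively $\xi_i^t[M\rw{C_i}]_\nu$). Since $[N]_\mu$ replaces $\nu$-subformulae in $N$ by $\top$, $w_t \models \xi_i^t[N\rw{C_i}]_\mu$ does \emph{not} imply $w_t \models \xi_i^t$, and your $\mathcal{H}_\varphi$-invariant (which is essentially Lemma~\ref{lemma_rc_correct} with $M=\varnothing$) cannot be invoked. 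What the paper actually proves (Lemmata~\ref{lemma_buchi_automaton}--\ref{lemma_stability_automaton}) are two one-sided implications per automaton, each carrying a side condition of the form $N \subseteq \FG{\varphi}{w}{0}$, $M \subseteq \GF{\varphi}{w}{0}$, or the reverse inclusion; only under such a condition does Lemma~\ref{lemma_rc_correct} let you conclude $C_i \subseteq \compseq{\varphi}{t}$ from satisfaction of the \emph{rewritten} $\xi_i^t$.

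These side conditions create an apparent circularity in the backward direction: to extract premise~2) from acceptance of $\mathcal{B}^\psi_{2,N}$ you need $N \subseteq \FG{\varphi}{w}{0}$, but that is essentially the content of premise~3), whose extraction from acceptance of $\mathcal{C}^\psi_{3,M}$ in turn needs $M \subseteq \GF{\varphi}{w}{0}$. The paper breaks the circularity by an induction on an enumeration of $M \cup N$ ordered by proper-subformula inclusion, mirroring the proof of Lemma~\ref{lemma_MN_<=}: for the smallest $\psi_i$ the side condition involves only strictly smaller formulae and is vacuous, and at each step the already-established inclusions are precisely what the correctness lemma for $\psi_i$'s automaton requires. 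Once $M \subseteq \GF{\varphi}{w}{0}$ is secured this way, Lemma~\ref{lemma_stability_automaton} part~2 yields $w \models \varphi$ directly, without routing back through premise~1) of the Master Theorem. This bootstrapping step is the missing idea in your proposal.
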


\begin{restatable}{corollary}{corollaryfullcomplexity}
    Let $\varphi$ be a formula of size $n + m$, where $n$ is the number of future- and propositional nodes in the syntax tree of $\varphi$, and $m$ is the number of past nodes.
    There exists a deterministic Rabin automaton recognizing $\varphi$ with doubly exponentially many states in the size of the formula and at most $2^n$ Rabin pairs.
\end{restatable}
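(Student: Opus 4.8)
The plan is to read off the corollary from Theorem~\ref{theorem_rabin_correctness} by bounding the number of Rabin pairs and the number of states of $\mathcal{A}_{DRA}(\varphi)$. Two ingredients carry all the weight. The first is the observation recorded in Section~\ref{section_toautomata} (and proved in Appendix~\ref{appendix_reach}) that $|\mathbb{B}(\psi)_{/\sim}|$ and $|\mathbb{B}^\lor(\psi)_{/\sim}|$ are at most doubly exponential in $|\psi|$. The second is the elementary arithmetic fact that a doubly exponential quantity raised to a singly exponential power is still doubly exponential, since $(2^{2^{a}})^{2^{b}} = 2^{2^{a}\cdot 2^{b}} = 2^{2^{a+b}}$; the exponents stack additively. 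Everything else is bookkeeping over the product and union constructions of Section~\ref{section_toautomata}, and well-definedness of the equivalence-class lifts is already supplied by Lemma~\ref{lemma_propequiv_preserved}, while correctness of the automaton is exactly Theorem~\ref{theorem_rabin_correctness}.

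For the Rabin pairs: $\mathcal{A}_{DRA}(\varphi)$ is the union (realized as a product of the $\mathcal{R}_{\varphi, M, N}$ with the $\bigcup$ of their acceptance conditions) over all $M \subseteq \mu(\varphi)$ and $N \subseteq \nu(\varphi)$, and each $\mathcal{R}_{\varphi, M, N}$ carries exactly one Rabin pair by Theorem~\ref{theorem_rabin_correctness}. Hence the number of pairs is the number of choices of $(M, N)$, namely $2^{|\mu(\varphi)|}\cdot 2^{|\nu(\varphi)|} = 2^{|\mu(\varphi)| + |\nu(\varphi)|}$. Since every element of $\mu(\varphi)$ is a subformula rooted with $\ltlUsymb$ or $\ltlMsymb$ and every element of $\nu(\varphi)$ one rooted with $\ltlWsymb$ or $\ltlRsymb$, the two sets are disjoint and each consists of distinct future-operator nodes of the syntax tree of $\varphi$; therefore $|\mu(\varphi)| + |\nu(\varphi)| \le n$, which gives at most $2^n$ Rabin pairs.

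For the state count I would work bottom-up. Rewriting preserves syntax-tree size, so $|\varphi\rw{C}| = |\varphi|$ for every $C$, and the rewrites $\placeholder[M]_\nu$, $\placeholder[N]_\mu$, the prefixing of $\ltlF$ and $\ltlG$, and the conjunctions occurring in the state-space definitions all keep every formula involved of size $O(|\varphi|)$; hence every $\mathbb{B}(\cdot)_{/\sim}$ and $\mathbb{B}^\lor(\cdot)_{/\sim}$ factor appearing in the construction has size at most $2^{2^{O(|\varphi|)}}$. The WC automaton $\mathcal{H}_\varphi$ is a product of $k = 2^{|\psf(\varphi)|} = 2^{m}$ such factors, so using $m \le |\varphi|$ we get $|S| \le \bigl(2^{2^{O(|\varphi|)}}\bigr)^{2^{m}} = 2^{2^{O(|\varphi|)}}$. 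Each runner automaton contributes one further $\mathbb{B}^\lor(\cdot)_{/\sim}$ factor (two, for $\mathcal{C}^1_{\varphi, M}$), and $\mathcal{R}_{\varphi, M, N}$ is the product of $\mathcal{H}_\varphi$ with $1 + |M| + |N| \le 1 + n$ runner automata, hence still has at most $\bigl(2^{2^{O(|\varphi|)}}\bigr)^{O(n)} = 2^{2^{O(|\varphi|)}}$ states. Finally $\mathcal{A}_{DRA}(\varphi)$ is the product of at most $2^n$ copies of such automata, one per $(M, N)$, so it has at most $\bigl(2^{2^{O(|\varphi|)}}\bigr)^{2^{n}} = 2^{2^{O(|\varphi|)}}$ states, doubly exponential in $|\varphi| = n + m$.

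The step that needs the most care is precisely the one where the $2^{m}$-fold product defining $\mathcal{H}_\varphi$ and the outer $2^{n}$-fold union over $(M, N)$ meet a doubly exponential per-component bound: a careless "doubly exponential raised to an exponential" reading would threaten a triply exponential blow-up. This is where the additive stacking of exponents is essential — since $m, n \le |\varphi|$, the top level of the double exponent only grows from $O(|\varphi|)$ to $O(|\varphi|) + O(|\varphi|) = O(|\varphi|)$. Spelling this out while tracking the constants through the nested product and union constructions, and confirming that the $\cap$/$\bigcup$ operations behave as claimed on the acceptance conditions, is the only non-mechanical part of the argument.
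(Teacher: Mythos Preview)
Your proposal is correct and follows essentially the same approach as the paper: bound the number of Rabin pairs by counting $(M,N)$-choices, bound each component's state space via the doubly exponential bounds on $\mathbb{B}(\cdot)_{/\sim}$ and $\mathbb{B}^\lor(\cdot)_{/\sim}$ from Appendix~\ref{appendix_reach}, and then use the additive stacking of exponents to absorb the singly exponential number of products and unions. The paper's version additionally tracks explicit constants (arriving at $2^{2^{3n+2m+4}}$) and notes two optimizations that tighten them \textemdash{} sharing a single copy of $\mathcal{H}_\varphi$ across all compositions and chaining the B\"uchi components $\mathcal{B}^\psi_{2,N}$ in round-robin rather than taking their full product \textemdash{} but neither is needed for the asymptotic statement you are asked to prove.
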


\section{Discussion}
\label{section_discussion}
We presented a direct translation from $\mathit{pLTL}$ to deterministic Rabin automata. Starting from a formula with $n$ future subformulae and $m$ past subformulae, we produce an automaton with an optimal $2^{2^{O(n+m)}}$ states and $2^{O(n)}$ acceptance pairs. 
Our translation relies on extending the classical ``after''-function of $\mathit{LTL}$ to $\mathit{pLTL}$ by encoding memory about the past through the weakening and strengthening of embedded past operators.
We extended the Master Theorem about decomposition of languages expressed for $\mathit{LTL}$ to $\mathit{pLTL}$.

The only applicable approach (prior to our work) to obtain deterministic automata from $\mathit{pLTL}$ formulae was to convert the formula to a nondeterministic automaton \cite{piterman_handbook} and then determinize this automaton \cite{safra_determinization,piterman_determinization}.
The first can be done either directly \cite{lichtenstein_glory,piterman_handbook} or through two-way very-weak alternating automata \cite{gastin_2wvwaa}.
In any case, the first translates a formula with $n$ future operators and $m$ past operators to an automaton with $2^{O(n+m)}$ states and the second translates an automaton with $k$ states to a parity automaton with $O(k!^2)$ states and $O(k)$ priorities.
It follows that the overall complexity of this construction is $2^{2^{O((n+m) \cdot \log(n+m))}}$ states and $2^{O(n+m)}$ priorities. Our approach improves this upper bound to $2^{2^{O(m+n)}}$.
It is well known that $\mathit{pLTL}$ does not extend the expressive power of $\mathit{LTL}$.
However, conversion from $\mathit{pLTL}$ to $\mathit{LTL}$ is not viable algorithmically.
The best known translation is worst-case non-elementary~\cite{Gabbay87}, and the conversion is provably exponential \cite{markey_succinct}.
So using a conversion to $\mathit{LTL}$ as a preliminary step to determinization could result in a triple exponential construction.
We note that in the case where there are no future operators nested within past operators, it is possible to convert the past subformulae directly to deterministic automata. Then, the remaining future can be determinized independently. 
This approach has been advocated for usage of the past in reactive synthesis \cite{bloem-grone,bgrone} and implemented recently in a bespoke tool \cite{atva}.

As future work, we note that the approach of Esparza et al.~\cite{esparza_unified_translation} additionally led to translations from $\mathit{LTL}$ to nondeterministic automata, limit-deterministic automata, and deterministic automata. The same should be done for $\mathit{pLTL}$.
Their work also led to a normal form for $\mathit{LTL}$ formulae, which we believe could be generalized to work for $\mathit{pLTL}$.
The latter could have interesting relations to the temporal hierarchy of Manna and Pnueli~\cite{hierarchy}.
Such a normal form could also be related to more efficient translations from $\mathit{pLTL}$ to $\mathit{LTL}$.
Finally, this approach has led to a competitive implementation of determinization \cite{owl} and reactive synthesis \cite{strix}. Extending these implementations to handle past is of high interest.

\bibliographystyle{plainurl}
\bibliography{main}

\begin{thebibliography}{10}

\bibitem{atva}
Shaun Azzopardi, David Lidell, Nir Piterman, and Gerardo Schneider.
\newblock {ppLTLTT : Temporal Testing for Pure-Past Linear Temporal Logic Formulae}.
\newblock In {\'{E}}tienne Andr{\'{e}} and Jun Sun, editors, {\em Automated Technology for Verification and Analysis - 21st International Symposium, {ATVA} 2023, Singapore, October 24-27, 2023, Proceedings, Part {II}}, volume 14216 of {\em Lecture Notes in Computer Science}, pages 276--287. Springer, 2023.
\newblock \href {https://doi.org/10.1007/978-3-031-45332-8\_15} {\path{doi:10.1007/978-3-031-45332-8\_15}}.

\bibitem{bloem-grone}
Roderick Bloem, Barbara Jobstmann, Nir Piterman, Amir Pnueli, and Yaniv Sa'ar.
\newblock {Synthesis of Reactive(1) Designs}.
\newblock {\em J. Comput. Syst. Sci.}, 78(3):911--938, 2012.
\newblock URL: \url{https://doi.org/10.1016/j.jcss.2011.08.007}, \href {https://doi.org/10.1016/J.JCSS.2011.08.007} {\path{doi:10.1016/J.JCSS.2011.08.007}}.

\bibitem{bgrone}
Alessandro Cimatti, Luca Geatti, Nicola Gigante, Angelo Montanari, and Stefano Tonetta.
\newblock {Reactive Synthesis from Extended Bounded Response {LTL} Specifications}.
\newblock In {\em 2020 Formal Methods in Computer Aided Design, {FMCAD} 2020, Haifa, Israel, September 21-24, 2020}, pages 83--92. {IEEE}, 2020.
\newblock URL: \url{https://doi.org/10.34727/2020/isbn.978-3-85448-042-6\_15}, \href {https://doi.org/10.34727/2020/ISBN.978-3-85448-042-6\_15} {\path{doi:10.34727/2020/ISBN.978-3-85448-042-6\_15}}.

\bibitem{esparza_unified_translation}
Javier Esparza, Jan K\v{r}et\'{\i}nsk\'{y}, and Salomon Sickert.
\newblock {A Unified Translation of Linear Temporal Logic to $\omega$-Automata}.
\newblock {\em J. ACM}, 67(6), oct 2020.
\newblock \href {https://doi.org/10.1145/3417995} {\path{doi:10.1145/3417995}}.

\bibitem{Gabbay87}
Dov~M. Gabbay.
\newblock {The Declarative Past and Imperative Future: Executable Temporal Logic for Interactive Systems}.
\newblock In Behnam Banieqbal, Howard Barringer, and Amir Pnueli, editors, {\em Temporal Logic in Specification, Altrincham, UK, April 8-10, 1987, Proceedings}, volume 398 of {\em Lecture Notes in Computer Science}, pages 409--448, New York, NY, USA, 1987. Springer.
\newblock \href {https://doi.org/10.1007/3-540-51803-7\_36} {\path{doi:10.1007/3-540-51803-7\_36}}.

\bibitem{gastin_2wvwaa}
Paul Gastin and Denis Oddoux.
\newblock {LTL with Past and Two-Way Very-Weak Alternating Automata}.
\newblock In Branislav Rovan and Peter Vojt{\'{a}}s, editors, {\em Mathematical Foundations of Computer Science 2003, 28th International Symposium, {MFCS} 2003, Bratislava, Slovakia, August 25-29, 2003, Proceedings}, volume 2747 of {\em Lecture Notes in Computer Science}, pages 439--448, New York, NY, USA, 2003. Springer.
\newblock \href {https://doi.org/10.1007/978-3-540-45138-9\_38} {\path{doi:10.1007/978-3-540-45138-9\_38}}.

\bibitem{owl}
Jan Kret{\'{\i}}nsk{\'{y}}, Tobias Meggendorfer, and Salomon Sickert.
\newblock {Owl: {A} Library for {\(\omega\)}-Words, Automata, and {LTL}}.
\newblock In Shuvendu~K. Lahiri and Chao Wang, editors, {\em Automated Technology for Verification and Analysis - 16th International Symposium, {ATVA} 2018, Los Angeles, CA, USA, October 7-10, 2018, Proceedings}, volume 11138 of {\em Lecture Notes in Computer Science}, pages 543--550. Springer, 2018.
\newblock \href {https://doi.org/10.1007/978-3-030-01090-4\_34} {\path{doi:10.1007/978-3-030-01090-4\_34}}.

\bibitem{lichtenstein_glory}
Orna Lichtenstein, Amir Pnueli, and Lenore Zuck.
\newblock {The Glory of the Past}.
\newblock In Rohit Parikh, editor, {\em Logics of Programs}, pages 196--218, Berlin, Heidelberg, 1985. Springer Berlin Heidelberg.

\bibitem{hierarchy}
Zohar Manna and Amir Pnueli.
\newblock {A Hierarchy of Temporal Properties}.
\newblock In Cynthia Dwork, editor, {\em Proceedings of the Ninth Annual {ACM} Symposium on Principles of Distributed Computing, Quebec City, Quebec, Canada, August 22-24, 1990}, pages 377--410. {ACM}, 1990.
\newblock \href {https://doi.org/10.1145/93385.93442} {\path{doi:10.1145/93385.93442}}.

\bibitem{markey_succinct}
Nicolas Markey.
\newblock {Temporal Logic with Past is Exponentially More Succinct}.
\newblock {\em {Bulletin- European Association for Theoretical Computer Science}}, 79:122--128, 2003.
\newblock URL: \url{https://hal.science/hal-01194627}.

\bibitem{strix}
Philipp~J. Meyer, Salomon Sickert, and Michael Luttenberger.
\newblock {Strix: Explicit Reactive Synthesis Strikes Back!}
\newblock In Hana Chockler and Georg Weissenbacher, editors, {\em Computer Aided Verification - 30th International Conference, {CAV} 2018, Held as Part of the Federated Logic Conference, FloC 2018, Oxford, UK, July 14-17, 2018, Proceedings, Part {I}}, volume 10981 of {\em Lecture Notes in Computer Science}, pages 578--586. Springer, 2018.
\newblock \href {https://doi.org/10.1007/978-3-319-96145-3\_31} {\path{doi:10.1007/978-3-319-96145-3\_31}}.

\bibitem{piterman_determinization}
Nir Piterman.
\newblock {From Nondeterministic Buchi and Streett Automata to Deterministic Parity Automata}.
\newblock In {\em Proceedings of the 21st Annual IEEE Symposium on Logic in Computer Science}, LICS '06, page 255–264, USA, 2006. IEEE Computer Society.
\newblock \href {https://doi.org/10.1109/LICS.2006.28} {\path{doi:10.1109/LICS.2006.28}}.

\bibitem{piterman_handbook}
Nir Piterman and Amir Pnueli.
\newblock {Temporal Logic and Fair Discrete Systems}.
\newblock In Edmund~M. Clarke, Thomas~A. Henzinger, Helmut Veith, and Roderick Bloem, editors, {\em Handbook of Model Checking}, pages 27--73. Springer, New York, NY, USA, 2018.
\newblock \href {https://doi.org/10.1007/978-3-319-10575-8\_2} {\path{doi:10.1007/978-3-319-10575-8\_2}}.

\bibitem{pnueli_ltl}
Amir Pnueli.
\newblock {The Temporal Logic of Programs}.
\newblock In {\em Proceedings of the 18th Annual Symposium on Foundations of Computer Science}, SFCS '77, page 46–57, USA, 1977. IEEE Computer Society.
\newblock \href {https://doi.org/10.1109/SFCS.1977.32} {\path{doi:10.1109/SFCS.1977.32}}.

\bibitem{safra_determinization}
S.~Safra.
\newblock {On the Complexity of Omega-Automata}.
\newblock In {\em Proceedings of the 29th Annual Symposium on Foundations of Computer Science}, SFCS '88, page 319–327, USA, 1988. IEEE Computer Society.
\newblock \href {https://doi.org/10.1109/SFCS.1988.21948} {\path{doi:10.1109/SFCS.1988.21948}}.

\bibitem{sistla_complexity_pltl}
A.~P. Sistla and E.~M. Clarke.
\newblock {The Complexity of Propositional Linear Temporal Logics}.
\newblock {\em J. ACM}, 32(3):733–749, jul 1985.
\newblock \href {https://doi.org/10.1145/3828.3837} {\path{doi:10.1145/3828.3837}}.

\end{thebibliography}

\appendix

\section{Details on the Automata State Spaces}
\label{appendix_reach}
The purpose of this section is to show that the automata that make up our construction are well defined, in the sense that their state spaces are closed under their respective transition relations, and to establish bounds on the size of their state spaces. We begin by considering the WC automaton. Throughout this section, we assume a formula $\varphi$ and an ordering $C_1, C_2, \dots, C_k$ of the elements of $2^{\psf(\varphi)}$. Recall the definition of the rewrite condition function that underlies its transition relation:
\definitionrc*
Consider two indices $i$ and $j$. To ensure that $\psi'_i \in \mathbb{B}(\varphi\rw{C_i})$ we require that $C_i$ and $C_j$ satisfy the property that $\varphi\rw{C_j}\rw{C_i\rw{C_j}} = \varphi\rw{C_i}$. This is not the case in general, however. For example, assume that $\varphi = \ltlY p \land \ltlWY p$ and $C_j = \varnothing$ and $C_i = \{ \ltlY p \}$. Then $\varphi\rw{C_j}\rw{C_i\rw{C_j}} = (\ltlY p \land \ltlY p)\rw{C_i} = \ltlWY p \land \ltlWY p$, while $\varphi\rw{C_i} = \ltlY p \land \ltlWY p$. Hence we define the set $J_i$ based on the following notion:
\begin{definition}[Saturated sets of past formulae]
Let $i, j \leq k$. The set $C_i$ is saturated with respect to $C_j$, denoted $C_j \preceq C_i$, iff for all $\xi, \xi' \in \psf(\varphi)$ such that $\xi\rw{C_j} = \xi'\rw{C_j}$, it holds that $\xi\rw{C_i} = \xi'\rw{C_i}$.
\end{definition}
\begin{example}
As before, consider the formula $\varphi = \ltlY p \land \ltlWY p$ and the sets $C_j = \varnothing$ and $C_i = \{ \ltlY p \}$. We have that $(\ltlY p)\rw{C_j} = (\ltlWY p)\rw{C_j} = \ltlY p$. We conclude that $C_i$ is not saturated with respect to $C_j$ since $\ltlY p \in C_i$ but $\ltlWY p \notin C_i$. However, the set $\{ \ltlY p, \ltlWY p \}$ is.
\end{example}
We remark that for all $t \geq 0$ we have that $\entcon{\varphi}{t} \preceq \entcon{\varphi}{t + 1}$. Hence it suffices to consider saturated sets in the construction of the WC automaton.
\begin{lemma}
\label{lemma_saturation}
Let $j_1, j_2, \dots, j_n$ be a finite sequence of indices of length $n > 0$ such that $C_{i_j} \preceq C_{i_{j + 1}}$ for all $j < n$. Then $\varphi\rw{C_{i_1}}\rw{C_{i_2}\rw{C_{i_1}}}\dots\rw{C_{i_n}\rw{C_{i_{n - 1}}}} = \varphi\rw{C_{i_n}}$.
\begin{proof}
We prove this by induction on $n$. The base case is immediate as it amounts to proving the identity $\varphi\rw{C_{i_1}} = \varphi\rw{C_{i_1}}$. For the inductive step, let $n > 1$. By the inductive hypothesis, we have $\varphi\rw{C_{i_1}}\rw{C_{i_2}\rw{C_{i_1}}}\dots\rw{C_{i_n}\rw{C_{i_{n-1}}}} = \varphi\rw{C_{i_{n-1}}}\rw{C_{i_n}\rw{C_{i_{n-1}}}}$. Let $\psi \in \psf(\varphi)$. We will prove that (the node at the root of the syntax tree of) $\psi$ is weakened by $\placeholder\rw{C_{i_n}}$ iff it is weakened by $\placeholder\rw{C_{i_{n-1}}}\rw{C_{i_n}\rw{C_{i_{n-1}}}}$.

Assume that $\psi$ is weakened by $\placeholder\rw{C_{i_n}}$. Then $\psi \in C_{i_n}$, which implies that $\psi\rw{C_{i_{n-1}}} \in C_{i_n}\rw{C_{i_{n-1}}}$. It follows that $\psi$ is weakened by $\placeholder\rw{C_{i_{n-1}}}\rw{C_{i_n}\rw{C_{i_{n-1}}}}$. Observe that this direction does not depend on the saturation of $C_{i_n}$ with respect to $C_{i_{n-1}}$.

For the other direction, assume that $\psi$ is weakened by $\placeholder\rw{C_{i_{n-1}}}\rw{C_{i_n}\rw{C_{i_{n-1}}}}$. This means that $\psi\rw{C_{i_{n-1}}} \in C_{i_n}\rw{C_{i_{n-1}}}$. Then there exists a $\psi' \in C_{i_n}$ such that $\psi'\rw{C_{i_{n-1}}} = \psi\rw{C_{i_{n-1}}}$. By the assumption that $C_{i_{n-1}} \preceq C_{i_n}$ we know that $\psi \in C_{i_n}$, so that $\psi$ is weakened by $\placeholder\rw{C_{i_n}}$.
\end{proof}
\end{lemma}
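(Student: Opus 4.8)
The plan is to argue by induction on the length $n$ of the chain. The base case $n = 1$ is the trivial identity $\varphi\rw{C_{i_1}} = \varphi\rw{C_{i_1}}$. For the inductive step I would apply the induction hypothesis to the initial segment $C_{i_1} \preceq \dots \preceq C_{i_{n-1}}$, which collapses the first $n - 1$ iterated rewrites into the single rewrite $\varphi\rw{C_{i_{n-1}}}$; applying the remaining rewrite $\placeholder\rw{C_{i_n}\rw{C_{i_{n-1}}}}$ to both sides then reduces the claim to the two-rewrite identity $\varphi\rw{C_{i_{n-1}}}\rw{C_{i_n}\rw{C_{i_{n-1}}}} = \varphi\rw{C_{i_n}}$.

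To establish this identity I would exploit that $\placeholder\rw{\placeholder}$ only ever relabels the roots of past nodes as weak or strong while leaving the tree shape intact, so the two sides coincide exactly when every past subformula $\psi \in \psf(\varphi)$ receives the same weak/strong verdict on each side. The key observation is that this verdict is decided solely by the last rewrite applied to a node: whatever $\placeholder\rw{C_{i_{n-1}}}$ did to the root of $\psi$ is overwritten by the outer rewrite $\placeholder\rw{C_{i_n}\rw{C_{i_{n-1}}}}$. Hence on the left the root of $\psi$ is weak iff $\psi\rw{C_{i_{n-1}}} \in C_{i_n}\rw{C_{i_{n-1}}}$, while on the right it is weak iff $\psi \in C_{i_n}$, and it remains to show these two conditions are equivalent.

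The forward implication is immediate: if $\psi \in C_{i_n}$ then $\psi\rw{C_{i_{n-1}}}$ lies in the image set $C_{i_n}\rw{C_{i_{n-1}}}$ by the definition of applying a rewrite to a set. The backward implication is the crux and the place where the saturation hypothesis is essential. From $\psi\rw{C_{i_{n-1}}} \in C_{i_n}\rw{C_{i_{n-1}}}$ one obtains some $\psi' \in C_{i_n}$ with $\psi'\rw{C_{i_{n-1}}} = \psi\rw{C_{i_{n-1}}}$, but $\psi'$ and $\psi$ may be genuinely distinct subformulae that happen to collapse under $\placeholder\rw{C_{i_{n-1}}}$ \textendash{} exactly the pathology exhibited by the $\ltlY p \land \ltlWY p$ example. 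The hypothesis $C_{i_{n-1}} \preceq C_{i_n}$ rules this out: it forces $\psi\rw{C_{i_n}} = \psi'\rw{C_{i_n}}$, and since $\psi' \in C_{i_n}$ makes the root of $\psi'\rw{C_{i_n}}$ weak, the same must hold for $\psi\rw{C_{i_n}}$, which by the definition of $\placeholder\rw{C_{i_n}}$ can only happen when $\psi \in C_{i_n}$. I expect this backward direction \textendash{} and in particular tracking that saturation transfers root-level weak/strong information rather than merely whole-formula equality \textendash{} to be the main obstacle; the remaining steps are routine bookkeeping.
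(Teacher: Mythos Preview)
Your proposal is correct and follows essentially the same approach as the paper: induction on $n$, reduction via the inductive hypothesis to the two-step identity $\varphi\rw{C_{i_{n-1}}}\rw{C_{i_n}\rw{C_{i_{n-1}}}} = \varphi\rw{C_{i_n}}$, and then a node-by-node check of the weak/strong verdict at each $\psi \in \psf(\varphi)$. Your backward direction is in fact slightly more explicit than the paper's: the paper jumps directly from saturation to ``$\psi \in C_{i_n}$'', whereas you correctly spell out the intermediate step that saturation gives $\psi\rw{C_{i_n}} = \psi'\rw{C_{i_n}}$, whence the root-level weakness transfers and forces $\psi \in C_{i_n}$.
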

\begin{corollary}
The quotient set $\Pi_{i=1}^k \, \mathbb{B}(\varphi\rw{C_i})_{/\sim}$ is closed under $\cwc$ lifted to $k$-tuples of propositional equivalence classes.
\begin{proof}
This follows from a simple inductive argument using Lemma~\ref{lemma_saturation} together with the fact that an application of $\afc$ results in a Boolean combination of subformulae of $\varphi$ weakened in different ways, and potentially of $\top$ and $\bot$.
\end{proof}
\end{corollary}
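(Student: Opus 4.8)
The plan is to establish the component-wise statement that drives the corollary: for every $k$-tuple $\psi^\times = \langle \psi_1, \dots, \psi_k \rangle$ with $\psi_i \in \mathbb{B}(\varphi\rw{C_i})$ and every letter $\sigma$, the $i$-th entry $\psi'_i$ of $\cwc(\psi^\times, \sigma)$ again lies in $\mathbb{B}(\varphi\rw{C_i})$. The corollary then follows: $\afc(\placeholder, \sigma, C)$ is a homomorphism for $\land$ and $\lor$ that fixes $\top$ and $\bot$, so by Lemma~\ref{lemma_propequiv_preserved} it preserves $\sim$; since $\cwc$ is built from $\afc$, $\land$, and $\lor$, it is well defined on propositional equivalence classes, and each $\mathbb{B}(\varphi\rw{C_i})$ is closed under $\land$ and $\lor$ by definition. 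As $\psi'_i$ is the disjunction over $j \in J_i$ of conjunctions of $\afc(\psi_j, \sigma, C_i\rw{C_j})$ with the formulae $\afc(\wc(\xi\rw{C_j}), \sigma, C_i\rw{C_j})$ for $\xi \in C_i$, it suffices to show each of these terms lies in $\mathbb{B}(\varphi\rw{C_i})$, where we use that $j \in J_i$ is, by definition of $J_i$, the same as $C_j \preceq C_i$.

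The key lemma, to be proven by structural induction on subformulae of $\varphi$, is: if $C_j \preceq C_i$, then for every subformula $\chi$ of $\varphi$ and every letter $\sigma$, both $\afc(\chi\rw{C_j}, \sigma, C_i\rw{C_j})$ and $\puc(\chi\rw{C_j}, \sigma, C_i\rw{C_j})$ belong to $\mathbb{B}(\varphi\rw{C_i})$. The induction rests on three facts. First, the proof of Lemma~\ref{lemma_saturation} argues node by node over $\psf(\varphi)$, so its two-rewrite instance applies verbatim with any subformula $\chi$ in place of $\varphi$ (using $\psf(\chi) \subseteq \psf(\varphi)$), yielding $(\chi\rw{C_j})\rw{C_i\rw{C_j}} = \chi\rw{C_i}$; hence the ``$\placeholder\rw{C_i\rw{C_j}}$'' factor inside $\puc$ collapses to an honest subformula of $\varphi\rw{C_i}$. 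Second, $\afc$ distributes over $\land$ and $\lor$ and, in the base cases ($\top$, $\bot$, atoms, and formulae rooted with $\ltlY$ or $\ltlWY$), returns only $\top$ or $\bot$; the genuinely recursive cases are $\ltlX$ (which invokes $\puc$ on a strict subformula), the $\ltlUsymb/\ltlWsymb/\ltlRsymb/\ltlMsymb$-family (which invokes $\afc$ on strict subformulae and $\puc$ on the formula itself, but $\puc$ re-enters $\afc$ only through $\wc$ of past subformulae, which are strict), and the $\ltlSsymb/\ltlWSsymb/\ltlNWSsymb/\ltlNSsymb$-family (which reduces to $\afc$ of $\wc$, a Boolean combination of rewrites of strict subformulae). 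Third, $\wc$ of a past formula is a Boolean combination of rewrites of its immediate subformulae, so every nested $\afc(\wc(\placeholder), \sigma, C_i\rw{C_j})$ call is covered by the induction hypothesis. Closure of $\mathbb{B}(\varphi\rw{C_i})$ under Boolean connectives then closes the induction.

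Granting the lemma, we finish as follows. A formula $\psi_j \in \mathbb{B}(\varphi\rw{C_j})$ is a Boolean combination of $\top$, $\bot$, and propositional/temporal subformulae of $\varphi\rw{C_j}$; since rewriting preserves the propositional/temporal/Boolean status of each node, these subformulae are exactly the $\tau\rw{C_j}$ for $\tau \in \sff(\varphi)$. Because $\afc$ distributes over $\land$ and $\lor$, $\afc(\psi_j, \sigma, C_i\rw{C_j})$ is the corresponding Boolean combination of the formulae $\afc(\tau\rw{C_j}, \sigma, C_i\rw{C_j})$, each in $\mathbb{B}(\varphi\rw{C_i})$ by the lemma (applicable since $j \in J_i$ gives $C_j \preceq C_i$). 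Similarly, for $\xi \in C_i \subseteq \psf(\varphi)$ the formula $\wc(\xi\rw{C_j})$ is a Boolean combination of rewrites of proper subformulae of $\xi$, so $\afc(\wc(\xi\rw{C_j}), \sigma, C_i\rw{C_j})$ also lies in $\mathbb{B}(\varphi\rw{C_i})$. Thus $\psi'_i$, a Boolean combination of these terms, lies in $\mathbb{B}(\varphi\rw{C_i})$, as required.

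I expect the main obstacle to be the bookkeeping in the structural induction: making precise that the subformulae of $\varphi\rw{C}$ are exactly the $C$-rewrites of subformulae of $\varphi$ with classification preserved, and checking that the appearance of $\puc(\psi \ltlU \xi, \sigma, \placeholder)$ inside $\afc(\psi \ltlU \xi, \sigma, \placeholder)$ is not circular --- it is not, since that $\puc$ re-enters $\afc$ only through the strictly smaller formulae $\wc(\rho)$ for past subformulae $\rho$ of $\psi$ and $\xi$, and reaches $\mathbb{B}(\varphi\rw{C_i})$ via the saturation identity for its remaining factor. Transferring the node-local argument of Lemma~\ref{lemma_saturation} from $\varphi$ to an arbitrary subformula is routine, since saturation is quantified over all of $\psf(\varphi) \supseteq \psf(\chi)$.
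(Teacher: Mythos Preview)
Your proposal is correct and follows essentially the same approach as the paper: the paper's proof is a two-sentence sketch invoking Lemma~\ref{lemma_saturation} together with the fact that $\afc$ produces Boolean combinations of (weakened) subformulae of $\varphi$ and $\top/\bot$, and your write-up is a faithful elaboration of exactly that inductive argument, making explicit the component-wise key lemma, the identification $j \in J_i \Leftrightarrow C_j \preceq C_i$, the lifting to equivalence classes via Lemma~\ref{lemma_propequiv_preserved}, and the non-circularity of the $\puc$ calls.
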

\begin{lemma}
\label{lemma_wc_size}
Let the size of $\varphi$ be $n + m$, where $n$ is the number of future- and propositional nodes in the syntax tree of $\varphi$, and $m$ is the number of past nodes. Then
\begin{equation*}
    \left\vert \prod_{i=1}^k \, \mathbb{B}(\varphi\rw{C_i})_{/\sim} \right\vert \leq 2^{n + 2m}.
\end{equation*}
\begin{proof}
Consider the set $\mathbb{B}(\varphi\rw{C_i})_{/\sim}$ for a given $i \in [1..k]$. Recall that this is the quotient set, under $\sim$, of formulae that are (positive) Boolean combinations of subformulae of $\varphi\rw{C_i}$ together with $\top$ and $\bot$. The definition of the propositional semantics of $\mathit{pLTL}$ of Section~\ref{section_preliminaries} implies that each equivalence class in $\mathbb{B}(\varphi\rw{C_i})_{/\sim}$ can be interpreted as a Boolean function over a set of variables corresponding to the elements of $\sff(\varphi\rw{C_i})$. Since there are $n + m$ such elements, we have,
\begin{equation*}
    \left\vert \prod_{i=1}^k \, \mathbb{B}(\varphi\rw{C_i})_{/\sim} \right\vert =
    \prod_{i=1}^{2^m} \left\vert \mathbb{B}(\varphi\rw{C_i})_{/\sim} \right\vert \leq \left(2^{2^{n + m}} \right)^{2^m} = 2^{2^{n + 2m}}.
\end{equation*}
\end{proof}
\end{lemma}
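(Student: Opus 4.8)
The plan is to bound $\prod_{i=1}^{k}\mathbb{B}(\varphi\rw{C_i})_{/\sim}$ factor by factor. Two structural facts drive the argument. First, each rewrite $\placeholder\rw{C_i}$ preserves the shape of the syntax tree, merely swapping some past operators for their weak or strong duals; it is therefore a well-defined map on subformulae, so it cannot increase the number of distinct propositional and temporal subformulae (the example $\varphi=\ltlY p\land\ltlWY p$ above shows it may strictly decrease it). Second, over a fixed finite set of ``atoms'' the propositional equivalence classes of positive Boolean combinations of those atoms inject into the set of Boolean functions over them.

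I would first fix $i\in[1..k]$ and bound $\vert\mathbb{B}(\varphi\rw{C_i})_{/\sim}\vert$. By definition every $\psi\in\mathbb{B}(\varphi\rw{C_i})$ satisfies $\sff(\psi)\subseteq\sff(\varphi\rw{C_i})\cup\{\top,\bot\}$, so $\psi$ is built from the elements of $\sff(\varphi\rw{C_i})$ and the two constants using only $\land$ and $\lor$. The propositional semantics $\models_p$ of Section~\ref{section_preliminaries} treats each element of $\sff(\varphi\rw{C_i})$ as an independent propositional variable and interprets $\land$, $\lor$, $\top$, $\bot$ classically; hence the $\sim$-class of $\psi$ is determined by the Boolean function it denotes over the variable set $\sff(\varphi\rw{C_i})$ — to separate two formulae denoting different such functions one simply takes $\mathcal{I}$ to consist of exactly the subformulae one wants to hold under a distinguishing assignment. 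Thus $\vert\mathbb{B}(\varphi\rw{C_i})_{/\sim}\vert\le 2^{2^{\vert\sff(\varphi\rw{C_i})\vert}}$. By the first fact $\vert\sff(\varphi\rw{C_i})\vert\le\vert\sff(\varphi)\vert$, and $\vert\sff(\varphi)\vert\le n+m$ since $\vert\varphi\vert=n+m$ is by definition the number of propositional and temporal nodes of $\varphi$ and every propositional or temporal subformula is rooted at such a node. Hence $\vert\mathbb{B}(\varphi\rw{C_i})_{/\sim}\vert\le 2^{2^{n+m}}$ for every $i$.

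Finally, $\psf(\varphi)$ has at most $m$ elements, so $k=\vert 2^{\psf(\varphi)}\vert\le 2^m$ and
\begin{equation*}
\left\vert\prod_{i=1}^{k}\mathbb{B}(\varphi\rw{C_i})_{/\sim}\right\vert=\prod_{i=1}^{k}\left\vert\mathbb{B}(\varphi\rw{C_i})_{/\sim}\right\vert\le\left(2^{2^{n+m}}\right)^{2^m}=2^{2^{n+m}\cdot 2^m}=2^{2^{n+2m}},
\end{equation*}
the desired doubly-exponential bound. The proof is a counting argument and I do not foresee a genuine obstacle; the points needing the most care are the injection in the middle paragraph — that $\sim$ is fine enough to separate formulae denoting distinct Boolean functions over $\sff(\varphi\rw{C_i})$ — and the observation that $\placeholder\rw{C_i}$, the only formula operation appearing here that could conceivably enlarge a subformula set, in fact cannot.
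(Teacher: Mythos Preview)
Your proposal is correct and follows essentially the same approach as the paper's own proof: bound each factor $\vert\mathbb{B}(\varphi\rw{C_i})_{/\sim}\vert$ by $2^{2^{n+m}}$ via the identification of $\sim$-classes with Boolean functions over $\sff(\varphi\rw{C_i})$, then use $k\le 2^m$ and multiply. If anything, you spell out more carefully than the paper why $\vert\sff(\varphi\rw{C_i})\vert\le n+m$ (the rewrite preserves the syntax-tree shape) and why the injection into Boolean functions is legitimate.
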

We now turn to the set $\mathbb{B}^\lor(\varphi)_{/\sim}$: the quotient set of disjunctive formulae such that there for each disjunct $\psi$ exists an index $i$ satisfying $\psi \in \mathbb{B}(\varphi\rw{C_i})$.
\begin{lemma}
The quotient set $\mathbb{B}^\lor(\varphi)_{/\sim}$ is closed under $\af$ lifted to propositional equivalence classes.
\begin{proof}
Let $\sigma$ be a letter and $[\psi]_\sim \in \mathbb{B}^\lor(\varphi)_{/\sim}$. By definition $\psi$ is propositionally equivalent to some formula $\xi_1 \lor \xi_2 \lor \dots \lor \xi_n$ such that there for each $\xi_i$ exists a $j$ satisfying $\xi_i \in \mathbb{B}(\varphi\rw{C_j})$. From the definition of the after-function the formula $\af(\psi, \sigma)$ is thus propositionally equivalent to a formula of the form $\zeta_1 \lor \zeta_2 \lor \dots \lor \zeta_m$ such that there for each $\zeta_i$ exists a $j$ and a $C \subseteq \psf(\varphi\rw{C_j})$ satisfying $\zeta_i \in \mathbb{B}(\varphi\rw{C_j}\rw{C})$. However, for each such disjunct $\zeta_i$, index $j$, and set $C$ we may choose the index $k$ such that $C_k = \{ \psi \in \psf(\varphi) \mid \psi\rw{C_j}\rw{C} = (\psi\rw{C_j}\rw{C})_\mathcal{W} \}$ to obtain $\varphi\rw{C_j}\rw{C} = \varphi\rw{C_k}$. It follows that $\af(\psi, \sigma)$ is propositionally equivalent to a formula in $\mathbb{B}^\lor(\varphi)_{/\sim}$.
\end{proof}
\end{lemma}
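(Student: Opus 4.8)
The plan is to push the statement down to individual disjuncts and to the \emph{local} after-function $\afc$, where the handling of past subformulae is explicit, and then to collapse the resulting two-stage rewrite $\placeholder\rw{\placeholder}$ into a single rewrite by a set drawn from $2^{\psf(\varphi)}$, using the composition identity of Section~\ref{section_encodingthepast}. Since the lifting of $\af$ to $\sim$-classes is already justified by Lemma~\ref{lemma_propequiv_preserved}, it suffices to argue about one representative. So fix a letter $\sigma$ and $\psi \in \mathbb{B}^\lor(\varphi)$; by definition we may take $\psi = \psi_1 \lor \dots \lor \psi_n$ with, for each $\ell$, an index $j_\ell$ such that $\psi_\ell \in \mathbb{B}(\varphi\rw{C_{j_\ell}})$. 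Because $\afc$ distributes over $\lor$ in its first argument, and because $\afc(\psi_\ell, \sigma, C)$ depends only on $C \cap \psf(\psi_\ell)$, the formula $\af(\psi, \sigma)$ is propositionally equivalent to $\bigvee_{\ell} \bigvee_{C \subseteq \psf(\psi_\ell)} \afc(\psi_\ell, \sigma, C)$. As $\mathbb{B}^\lor(\varphi)$ is visibly closed under disjunction and under $\sim$, it is enough to show that each single summand $\afc(\psi_\ell, \sigma, C)$ lies in $\mathbb{B}(\varphi\rw{C_i})$ for some $i$.

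Fix such an $\ell$ and $C$ and write $\chi_0 \coloneqq \varphi\rw{C_{j_\ell}}$, so that $C \subseteq \psf(\psi_\ell) \subseteq \psf(\chi_0)$ and $\psi_\ell$ is a positive Boolean combination of elements of $\sff(\chi_0) \cup \{\top, \bot\}$. The core of the argument is the structural claim that, for every subformula $\chi$ of $\chi_0$, the formula $\afc(\chi, \sigma, C)$ is a positive Boolean combination of $\top$, $\bot$, and formulae of the form $\rho\rw{C}$ with $\rho \in \sff(\chi_0)$. I would prove this by a routine mutual induction on $\afc$ and $\puc$ over the subformula structure of $\chi_0$; the points that need care are (a) that $\afc$ distributes over $\land$ and $\lor$, so all recursive calls land on strictly smaller subformulae (or on $\wc(\placeholder)$ thereof, which is again a Boolean combination of direct subformulae); (b) that in the $\ltlX$-clause the term $\puc(\chi, \sigma, C)$ contributes the conjunct $\chi\rw{C}$, which is itself of the allowed shape since $\chi \in \sff(\chi_0)$; and (c) that every past subformula ever encountered stays inside $\psf(\chi_0)$, so $C$ is always consulted on a formula in its domain. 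Granting the claim, and noting that $\rw{C}$ preserves the syntax-tree shape and the propositional/temporal classification of nodes, so that $\{\, \rho\rw{C} \mid \rho \in \sff(\chi_0) \,\} = \sff(\chi_0\rw{C})$, we obtain $\afc(\psi_\ell, \sigma, C) \in \mathbb{B}(\chi_0\rw{C}) = \mathbb{B}(\varphi\rw{C_{j_\ell}}\rw{C})$.

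It remains only to absorb the two-stage rewrite into one. Applying the composition fact of Section~\ref{section_encodingthepast} to the length-two sequence $C_{j_\ell}, C$, there is a set $C' \coloneqq \circ\,(C_{j_\ell}, C) \subseteq \psf(\varphi)$ with $\varphi\rw{C_{j_\ell}}\rw{C} = \varphi\rw{C'}$; since $C' \in 2^{\psf(\varphi)}$, it is one of $C_1, \dots, C_k$, and hence $\afc(\psi_\ell, \sigma, C) \in \mathbb{B}(\varphi\rw{C_i})$ for the corresponding $i$. Ranging over all $\ell$ and all $C$ shows $\af(\psi, \sigma) \in \mathbb{B}^\lor(\varphi)$, whence $[\af(\psi, \sigma)]_\sim \in \mathbb{B}^\lor(\varphi)_{/\sim}$.

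I expect the main obstacle to be the structural claim of the second paragraph: one must run through every clause of the mutually recursive definitions of $\afc$ and $\puc$ and maintain the invariant that the guessing set $C$ is propagated unchanged and that neither the embedded rewrite $\chi\rw{C}$ inside $\puc$ nor the weakening conditions $\wc(\placeholder)$ ever produce a subformula outside $\{\, \rho\rw{C} \mid \rho \in \sff(\chi_0)\,\} \cup \{\top, \bot\}$. Everything else --- the reduction to single disjuncts, the harmless restriction of $C$ to $\psf(\psi_\ell)$, and the collapse of the two rewrites via $\circ$ --- is bookkeeping.
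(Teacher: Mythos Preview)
Your proposal is correct and follows essentially the same approach as the paper: decompose into disjuncts, observe that applying $\afc(\,\cdot\,,\sigma,C)$ to a member of $\mathbb{B}(\varphi\rw{C_j})$ yields a member of $\mathbb{B}(\varphi\rw{C_j}\rw{C})$, and then collapse the two-stage rewrite into a single one via the composition identity of Section~\ref{section_encodingthepast}. The paper's proof is terser --- it simply asserts the second step ``from the definition of the after-function'' and writes out the composed set $C_k = \{\psi \in \psf(\varphi) \mid \psi\rw{C_j}\rw{C} = (\psi\rw{C_j}\rw{C})_\mathcal{W}\}$ explicitly rather than invoking $\circ$ --- whereas you spell out the structural induction on $\afc$/$\puc$ that underlies that step; but the argument is the same.
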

\begin{lemma}
\label{lemma_reach_bound}
Let the size of $\varphi$ be $n + m$, where $n$ is the number of future- and propositional nodes in the syntax tree of $\varphi$, and $m$ is the number of past nodes. Then
\begin{equation*}
    \left\vert \mathbb{B}^\lor(\varphi)_{/\sim} \right\vert \leq 2^{2^{n + 2m}}.
\end{equation*}
\begin{proof}
Since rearranging disjuncts does not affect propositional equivalence, every equivalence class of $\mathbb{B}^\lor(\varphi)_{/\sim}$ can be represented by a formula $\psi_1 \lor \psi_2 \lor \dots \lor \psi_k$, where $\psi_i \in \mathbb{B}(\varphi\rw{C_i})_{/\sim}$ for each $i \in [1..k]$. As in the proof of Lemma~\ref{lemma_wc_size}, we see that there are $2^{2^{n+m}}$ possible formulae for each $\psi_i$, up to propositional equivalence. Hence
\begin{equation*}
    \left\vert \mathbb{B}^\lor(\varphi)_{/\sim} \right\vert =
    \left\vert \prod_{i=1}^k \, \mathbb{B}(\varphi\rw{C_i})_{/\sim} \right\vert \leq 2^{2^{n + 2m}}.
\end{equation*}
\end{proof}
\end{lemma}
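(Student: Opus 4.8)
The plan is to reduce the claim to Lemma~\ref{lemma_wc_size}, which already bounds the cardinality of the product $\prod_{i=1}^k \mathbb{B}(\varphi\rw{C_i})_{/\sim}$, by exhibiting a surjection from this product onto $\mathbb{B}^\lor(\varphi)_{/\sim}$. The guiding observation is that an element of $\mathbb{B}^\lor(\varphi)$ is, by definition, a disjunction whose every disjunct lies in some $\mathbb{B}(\varphi\rw{C_j})$; grouping the disjuncts according to the index $j$ of the rewrite set they belong to yields a canonical representative indexed by a $k$-tuple.

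First I would record two closure facts about each set $\mathbb{B}(\varphi\rw{C_i})$: it contains $\bot$, and it is closed under $\lor$ (a disjunction of two positive Boolean combinations of subformulae of $\varphi\rw{C_i}$ is again such a combination, since $\sff(\varphi\rw{C_i})$ is unchanged). Given any $[\psi]_\sim \in \mathbb{B}^\lor(\varphi)_{/\sim}$, I would then use associativity and commutativity of $\lor$ \textendash{} which preserve $\sim$ \textendash{} to rewrite $\psi$ as $\psi_1 \lor \dots \lor \psi_k$, where $\psi_i$ is the disjunction of all of the original disjuncts that lie in $\mathbb{B}(\varphi\rw{C_i})$, and $\psi_i \coloneqq \bot$ if there are none. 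The two closure facts guarantee $\psi_i \in \mathbb{B}(\varphi\rw{C_i})$ for every $i$.

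This grouping defines a map $\Phi\colon \prod_{i=1}^k \mathbb{B}(\varphi\rw{C_i})_{/\sim} \to \mathbb{B}^\lor(\varphi)_{/\sim}$ sending $([\psi_1]_\sim, \dots, [\psi_k]_\sim)$ to $[\psi_1 \lor \dots \lor \psi_k]_\sim$. It is well defined because $\sim$ is preserved by $\lor$ (Lemma~\ref{lemma_propequiv_preserved}), and surjective by the previous paragraph. Consequently $\left\vert \mathbb{B}^\lor(\varphi)_{/\sim} \right\vert \leq \left\vert \prod_{i=1}^k \mathbb{B}(\varphi\rw{C_i})_{/\sim} \right\vert$. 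Since $\vert \psf(\varphi) \vert = m$ we have $k = 2^m$, and each factor is bounded by the number of Boolean functions over the $n + m$ elements of $\sff(\varphi\rw{C_i})$, namely $2^{2^{n+m}}$; Lemma~\ref{lemma_wc_size} assembles these into the desired bound $\bigl(2^{2^{n+m}}\bigr)^{2^m} = 2^{2^{n+2m}}$.

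The only real content lies in the grouping step: one must check that merging disjuncts of equal index and padding absent indices with $\bot$ keeps each component inside the correct set $\mathbb{B}(\varphi\rw{C_i})$, which is exactly what the two closure facts supply. Everything downstream is bookkeeping, and the quantitative heart of the argument \textendash{} counting propositional equivalence classes of Boolean combinations of $n + m$ subformulae \textendash{} is inherited wholesale from Lemma~\ref{lemma_wc_size}.
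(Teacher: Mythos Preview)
Your proposal is correct and follows essentially the same approach as the paper: both represent each class in $\mathbb{B}^\lor(\varphi)_{/\sim}$ by a disjunction $\psi_1 \lor \dots \lor \psi_k$ with $\psi_i \in \mathbb{B}(\varphi\rw{C_i})$ and then invoke the count from Lemma~\ref{lemma_wc_size}. You are simply more explicit than the paper about the closure facts, the padding by $\bot$, and the surjection that justifies the inequality, all of which the paper leaves implicit.
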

\begin{remark}
The bounds established in Lemma~\ref{lemma_wc_size} and Lemma~\ref{lemma_reach_bound} can easily be improved to $2^{2^{n + m}}$ by modifying the local after-function. This is done by noting that each maximal past subformula (i.e. past subformula that appears only under $\land$ or $\lor$ in the syntax tree of $\varphi$) that appears in an application of $\afc$ can immediately be evaluated and thus removed. This is because formulae rooted with one of $\ltlSsymb, \ltlWSsymb, \ltlNWSsymb$, or $\ltlNSsymb$ are initially equivalent to their weakening conditions, while formulae rooted with $\ltlY$ or $\ltlWY$ are initially equivalent to $\bot$ or $\top$, respectively.
Applying this optimization complicates notations considerably and we delay a formal treatment of it. 
\end{remark}

\section{Omitted Proofs and Related Definitions}
\subsection{Section \ref{section_preliminaries}}
\lemmapropequivpreserved*
\begin{proof}
    The proof by Esparza et al.~\cite{esparza_unified_translation} immediately generalizes to our extended definition, since all temporal subformulae are treated equally in the propositional semantics of $\mathit{pLTL}$.
\end{proof}

\subsection{Section \ref{section_encodingthepast}}

In order to prove Lemma~\ref{lemma_entailed_sequence} we state some intermediary required propositions. 
First, the formula $\varphi\rw{C}$ weakens all the past subformulae of $\varphi$ that are also in $C$. For every formula $\psi$, if $\psi_\mathcal{W}$ holds then so does $\psi_\mathcal{S}$. It follows that if $C\subseteq C'$ then weakening by $C'$  holds whenever weakening by $C$ does. 

\begin{proposition}
\label{proposition_rewrite_subset}
Let $\varphi$ be a formula, $w$ a word, and $C$ and $C'$ two sets of past formulae such that $C \subseteq C'$. Then, for arbitrary $t \in \mathbb{N}$,
\begin{equation*}
    (w, t) \models \varphi\rw{C} \Rightarrow (w, t) \models \varphi\rw{C'}.
\end{equation*}
\end{proposition}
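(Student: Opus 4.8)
The plan is to prove the statement by structural induction on $\varphi$, strengthening the induction hypothesis so that it is also parametrized by the index $t$ (i.e., the statement is proved for all $t \in \mathbb{N}$ simultaneously, for each subformula). The atomic and Boolean cases are immediate: for $a$ atomic, $a\rw{C} = a\rw{C'} = a$; and if $\varphi = \psi_1 \land \psi_2$ (resp.\ $\psi_1 \lor \psi_2$), then $\varphi\rw{C} = \psi_1\rw{C} \land \psi_2\rw{C}$ (resp.\ with $\lor$), and the implication follows from the two inductive hypotheses together with the monotonicity of $\land$ and $\lor$ under semantic entailment. The interesting cases are the temporal ones, and these split into two kinds: the operators that never get weakened/strengthened by $\rw{\cdot}$ (namely $\ltlXsymb$, $\ltlUsymb$, $\ltlWsymb$, $\ltlRsymb$, $\ltlMsymb$), and the past operators $\ltlY, \ltlWY, \ltlSsymb, \ltlWSsymb, \ltlNWSsymb, \ltlNSsymb$, which are the heart of the matter.

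For an operator that is never itself weakened, say $\varphi = \psi_1 \ltlU \psi_2$, we have $\varphi\rw{C} = \psi_1\rw{C} \ltlU \psi_2\rw{C}$ and $\varphi\rw{C'} = \psi_1\rw{C'} \ltlU \psi_2\rw{C'}$; from $(w,t) \models \psi_1\rw{C} \ltlU \psi_2\rw{C}$ we get witnesses $r \geq t$ and the range $[t,r)$, and applying the inductive hypotheses to $\psi_1$ and $\psi_2$ at each relevant index upgrades each $\rw{C}$-satisfaction to a $\rw{C'}$-satisfaction, giving $(w,t) \models \varphi\rw{C'}$; the cases $\ltlXsymb, \ltlWsymb, \ltlRsymb, \ltlMsymb$ are analogous, and $\ltlWsymb, \ltlRsymb$ in particular also need the (trivial) observation that this operator is not touched by $\rw{\cdot}$ and so the ``$\ltlG$''-disjunct is handled by the same pointwise argument. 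For a past operator $\varphi$, the key case distinction is whether $\varphi \in C'$. If $\varphi \notin C'$ then also $\varphi \notin C$ (since $C \subseteq C'$), so $\varphi\rw{C} = (\ldots)_\mathcal{S}$ and $\varphi\rw{C'} = (\ldots)_\mathcal{S}$ with the same strong operator at the root, and we reduce to the pointwise inductive argument on the immediate subformulae exactly as in the previous case. If $\varphi \in C'$, then $\varphi\rw{C'} = (\ldots)_\mathcal{W}$ with the weak operator at the root, while $\varphi\rw{C}$ has either the strong or the weak operator at the root depending on whether $\varphi \in C$; in either sub-case we invoke the general principle, stated in the text just before the proposition, that ``if $\psi_\mathcal{W}$ holds then so does $\psi_\mathcal{S}$'' read in the contrapositive-friendly direction — i.e., the strong version entails the weak version — together with the inductive hypotheses on the immediate subformulae applied at index $t$ (for $\ltlY/\ltlWY$) or over the appropriate past range (for the binary since-like operators).

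The main obstacle I anticipate is packaging the past-operator case cleanly: one must verify, for each of the six past operators, that the strong form semantically entails the weak form at every index, and that this entailment composes correctly with the pointwise upgrade of the immediate subformulae from $\rw{C}$ to $\rw{C'}$. Concretely, for $\ltlSsymb$ one checks $\psi_1 \ltlS \psi_2 \models \psi_1 \ltlWS \psi_2$ (true since $\ltlWSsymb$ adds the $\ltlH$-disjunct), for $\ltlNWSsymb$ one checks $\psi_1 \ltlNWS \psi_2 \models \psi_1 \ltlNS \psi_2$, and for $\ltlY$ one checks $\ltlY \psi \models \ltlWY \psi$; all are routine from the semantics, but they must be combined with the inductive hypothesis in the right order, namely: first replace the strong root by the weak root (entailment, no change to subformulae), then upgrade the subformulae from $C$ to $C'$ (inductive hypothesis), observing that these two steps commute because weakening the root operator does not alter which subtree occurrences the rewrite touches. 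A clean way to present this is to prove first the auxiliary sublemma ``$\psi_\mathcal{S} \models \psi_\mathcal{W}$ for every $\psi$'' (which the surrounding text already asserts informally), and then do the induction using that sublemma as a black box.
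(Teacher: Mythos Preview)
Your proposal is correct and follows essentially the same approach as the paper: structural induction on $\varphi$ (over all $t$), using negation normal form for monotonicity of the connectives and the fact that $\psi_\mathcal{S}$ semantically entails $\psi_\mathcal{W}$ at every index. The paper's proof is a one-sentence sketch of precisely this argument; your plan simply unpacks the case analysis (future operators untouched by $\rw{\cdot}$; past operators split on membership in $C'$) that the paper leaves implicit.
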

\begin{proof}
    Using the facts that $(\sigma, t) \models \psi \Rightarrow (\sigma, t) \models \psi_\mathcal{W}$ for all subformulae $\psi$ of $\varphi$, and that $\varphi$ is in negation normal form, an inductive argument proves the proposition.
\end{proof}

We now prove that weakening indeed preserves the information that is lost by removing the first letter of the word $w$.
Namely, if the formula weakened by the correct set of formulae holds at a certain time on the shorter word (without the first letter), then the original formula holds in the same place in the full word. 

\begin{proposition}
\label{proposition_weakening_correct}
Let $\varphi$ and $\zeta$ be formulae such that $\varphi$ is a subformula of $\zeta$. Let $w$ be a word and $t \in \mathbb{N}$. Then $(w, t + 1) \models \varphi$ iff $(w_1, t) \models \varphi\rw{\entcon{\zeta}{1}}$.
\end{proposition}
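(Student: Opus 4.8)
The plan is to proceed by structural induction on the subformula $\varphi$ of $\zeta$. The Boolean cases ($\top$, $\bot$, atoms, negated atoms, $\land$, $\lor$) are immediate: the rewrite $\placeholder\rw{\placeholder}$ commutes with Boolean connectives and leaves atoms untouched, so the claim reduces to the inductive hypotheses for the operands, and the semantics of $\land$ and $\lor$ at index $t$ on $w_1$ versus index $t+1$ on $w$ matches directly. The pure-future cases ($\ltlX\psi$, $\psi\ltlU\xi$, $\psi\ltlW\xi$, $\psi\ltlR\xi$, $\psi\ltlM\xi$) are also routine once one observes that $\entcon{\zeta}{1}$ only weakens \emph{past} subformulae, and that for any $r \geq t$ the semantics of a future subformula at index $r+1$ on $w$ coincides with its semantics at index $r$ on $w_1$; the inductive hypothesis then applies to each future subformula occurring inside, so e.g. $(w,t+1)\models\psi\ltlU\xi$ unfolds via the semantic clause into a statement about indices $\geq t+1$, each of which is transported to $w_1$ by IH on $\psi$ and $\xi$.

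The heart of the argument is the past cases. Take first $\varphi = \ltlY\psi$. Then $(\ltlY\psi)\rw{\entcon{\zeta}{1}}$ is $\ltlWY(\psi\rw{\entcon{\zeta}{1}})$ if $\ltlY\psi \in \entcon{\zeta}{1}$ and $\ltlY(\psi\rw{\entcon{\zeta}{1}})$ otherwise. By Definition~\ref{definition_set_entailed}, $\ltlY\psi\rw{C^w_{\zeta,0}} \in \psf(\zeta\rw{C^w_{\zeta,0}})$ lies in $\entcon{\zeta}{1}$ exactly when $w_0 \models \wc(\ltlY\psi\rw{\cdots}) = \psi\rw{\cdots}$, i.e.\ when $(w,0)\models\psi$. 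Now $(w_1,t)\models\ltlY(\cdots)$ means $t>0$ and $(w_1,t-1)\models\psi\rw{\cdots}$, while $(w_1,t)\models\ltlWY(\cdots)$ means $t=0$ or $(w_1,t-1)\models\psi\rw{\cdots}$. On the other hand $(w,t+1)\models\ltlY\psi$ means $(w,t)\models\psi$. For $t>0$ both the weak and strong rewrites give the condition $(w_1,t-1)\models\psi\rw{\cdots}$, which by the inductive hypothesis (applied to the subformula $\psi$, still a subformula of $\zeta$, at index $t-1$, noting $(w_1,t-1)=(w,t)$ shifted correctly) is equivalent to $(w,t)\models\psi$. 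For $t=0$, $(w,1)\models\ltlY\psi$ iff $(w,0)\models\psi$ iff $\ltlY\psi$ was weakened iff the rewritten formula is $\ltlWY(\cdots)$, which is satisfied at index $0$ of $w_1$; and if it was not weakened the rewritten formula is $\ltlY(\cdots)$, false at index $0$, matching $(w,0)\not\models\psi$. The cases $\varphi = \psi\ltlS\xi$, $\psi\ltlNWS\xi$ and their weak variants follow the same pattern but require unfolding the $\ltlSsymb$-semantics: one splits on whether the witness $r$ for the ``since'' on $w$ at index $t+1$ equals $0$ or is positive; if $r = 0$ this is precisely what the weakening condition $\wc$ records (and what membership in $\entcon{\zeta}{1}$ encodes, so the weak operator $\ltlWS$/$\ltlNS$ carries it), while if $r > 0$ the whole thing shifts down to $w_1$ at index $t$ and the inductive hypotheses on $\psi$ and $\xi$ close the case; here Proposition~\ref{proposition_rewrite_subset} is useful to handle the monotonicity between the weak and strong readings.

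The main obstacle I anticipate is bookkeeping in the past cases: one must carefully track that $\entcon{\zeta}{1}$ is defined via $\psf(\zeta\rw{C^w_{\zeta,0}})$ rather than $\psf(\zeta)$ directly, so the rewrite $\placeholder\rw{\entcon{\zeta}{1}}$ is really being applied after the index-$0$ rewrite; one should either observe that for the \emph{top-level} subformula $\varphi$ the relevant membership question is about $\varphi\rw{C^w_{\zeta,0}}$, and that $\wc$ of a weakened operator equals the corresponding weakened/strengthened $\wc$ of the original (which follows by inspecting the $\wc$ table against the weakening/strengthening table), or set this up cleanly by a preliminary remark. A secondary subtlety is that the inductive hypothesis as stated is about an arbitrary subformula of $\zeta$ at an arbitrary index $t$, so it applies uniformly to the operands at shifted indices without needing a stronger statement; but one must confirm that the operands of a rewritten operator are themselves rewritten by the same set, which is exactly the recursive clause in the definition of $\placeholder\rw{\placeholder}$. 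Once these alignments are in place, the induction goes through mechanically.
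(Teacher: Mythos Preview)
Your proposal is correct and follows essentially the same structural-induction approach as the paper, with the same case splits for the past operators (on $t=0$ versus $t>0$ for $\ltlY$, and on whether the ``since'' witness is $0$ or positive for $\ltlSsymb$). The bookkeeping obstacle you anticipate dissolves more simply than you expect: since $\entcon{\zeta}{0}$ is by definition the set of already-weak past subformulae of $\zeta$, the rewrite $\zeta\rw{\entcon{\zeta}{0}}$ leaves $\zeta$ unchanged, so $\psf(\zeta\rw{\entcon{\zeta}{0}}) = \psf(\zeta)$ and membership of a past subformula $\varphi$ in $\entcon{\zeta}{1}$ reduces directly to $w_0 \models \wc(\varphi)$; in particular your parenthetical claim that $\wc$ of a weakened operator equals the weakened $\wc$ of the original is false (e.g.\ $\wc(\psi \ltlS \xi) = \xi$ while $\wc(\psi \ltlWS \xi) = \psi \lor \xi$), but fortunately it is not needed.
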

\begin{proof}
We prove this by induction on the structure of the formula $\varphi$. The base cases are trivial, as they involve no past operators, as are the cases for Boolean connectives in the inductive step. We thus focus on cases in the inductive step where the top-level operator of $\varphi$ is temporal.
\begin{itemize}
    \item Case $\varphi = \psi \ltlU \xi$:
    \begin{equation*}
    \begin{aligned}
        (w, t + 1) &\models \psi \ltlU \xi
        &\Leftrightarrow &&(\text{Def. of $\models$}) \\
        \exists r > t \, . \, ((w, r) &\models \xi \mbox{ and } \forall s \in (t, r) \, . \, (w, s) \models \psi)
        &\Leftrightarrow &&(\text{IH}) \\
        \exists r \geq t \, . \, ((w_1, r) &\models \xi\rw{\entcon{\zeta}{1}} \mbox{ and } \forall s \in [t, r) \, . \, (w_1, s) \models \psi\rw{\entcon{\zeta}{1}})
        &\Leftrightarrow &&(\text{Def. of $\models$}) \\
        (w_1, t) &\models (\psi\rw{\entcon{\zeta}{1}}) \ltlU (\psi\rw{\entcon{\zeta}{1}})
        &\Leftrightarrow &&(\text{Def. of $\placeholder\rw{\placeholder}$}) \\
        (w_1, t) &\models (\psi \ltlU \xi)\rw{\entcon{\zeta}{1}}.
    \end{aligned}
    \end{equation*}

        \item Case $\varphi = \ltlY \psi:$ Assume $t = 0$. Then,
        \begin{equation*}
        \begin{aligned}
            (w, 1) &\models \ltlY \psi  &\Leftrightarrow &&(\text{Def. of $\models$}) \\
            w &\models \psi &\Leftrightarrow &&(\text{Def. of } \entcon{\zeta}{1}) \\
            \ltlY \psi &\in \entcon{\zeta}{1} &\Leftrightarrow &&(\text{Def. of} \placeholder \rw{\placeholder}) \\
            (\ltlY \psi) &\rw{\entcon{\zeta}{1}} = \ltlWY \psi &\Leftrightarrow &&(\text{Def. of $\models$}) \\
            w_1 &\models (\ltlY \psi)\rw{\entcon{\zeta}{1}}.
        \end{aligned}
        \end{equation*}
        Assume instead that $t > 0$. Then,
        \begin{equation*}
        \begin{aligned}
            (w, t + 1) &\models \ltlY \psi &\Leftrightarrow &&(\text{Def. of $\models$}) \\
            (w, t) &\models \psi &\Leftrightarrow &&(\text{IH, $t > 0$}) \\
            (w_1, t - 1) &\models \psi\rw{\entcon{\zeta}{1}} &\Leftrightarrow &&(\text{Def. of $\models$}) \\
            (w_1, t) &\models \ltlY(\psi\rw{\entcon{\zeta}{1}}) &\Leftrightarrow &&(\text{Def. of} \placeholder \rw{\placeholder}, t > 0) \\
            (w_1, t) &\models (\ltlY \psi)\rw{\entcon{\zeta}{1}}.
        \end{aligned}
        \end{equation*}
        \item Case $\varphi = \psi \ltlS \xi$: The inductive hypothesis gives us for all $s$, $r$, and $t$,
        \begin{equation*}
        \begin{aligned}
            (w, s + 1) &\models \xi \mbox{ and } \forall r \in (s + 1, t] \, . \, (w, r) \models \psi
            &\Leftrightarrow &&(\text{IH}) \\
            (w, s) &\models \xi\rw{\entcon{\zeta}{1}} \mbox{ and } \forall r \in (s, t) \, . \, (w, r) \models \psi\rw{\entcon{\zeta}{1}}.
        \end{aligned}
        \end{equation*}
        This shows that,
        \begin{equation}
        \label{since_induction}
            (w, t + 1) \models \psi \ltlS \xi \Leftrightarrow (w, t) \models (\psi\rw{\entcon{\zeta}{1}}) \ltlS (\xi\rw{\entcon{\zeta}{1}}).
        \end{equation}
        In particular, it proves the $(\Rightarrow)$-direction and both directions if $\varphi \notin \entcon{\zeta}{1}$. It remains to prove the $(\Leftarrow)$-direction given $\varphi \in \entcon{\zeta}{1}$. It suffices to assume $\xi \in \entcon{\zeta}{1}$ and $(w_1, t) \models \ltlH (\psi\rw{\entcon{\zeta}{1}})$. Then the induction hypothesis gives us $(w, r) \models \psi$, for all $r \leq t + 1$. Since $\varphi \in \entcon{\zeta}{1}$, we know that $w \models \xi$, showing that $(w, t + 1) \models \psi \ltlS \xi$.
        
        \item Case $\varphi = \ltlX \psi$:
        \begin{equation*}
        \begin{aligned}
            (w, t + 1) &\models \ltlX \psi &\Leftrightarrow &&(\text{Def. of $\models$}) \\
            (w, t + 2) &\models \psi &\Leftrightarrow &&(\text{IH}) \\
            (w_1, t + 1) &\models \psi\rw{\entcon{\zeta}{1}} &\Leftrightarrow &&(\text{Def. of} \models) \\
            (w_1, t) &\models \ltlX(\psi\rw{\entcon{\zeta}{1}}) &\Leftrightarrow &&(\text{Def. of} \placeholder \rw{\placeholder}) \\
            (w_1, t) &\models (\ltlX \psi) \rw{\entcon{\zeta}{1}}.
        \end{aligned}
        \end{equation*}
        \item Case $\varphi = \psi \ltlW \xi$:
        \begin{equation*}
        \begin{aligned}
            (w, t + 1) &\models \psi \ltlW \xi
            &\Leftrightarrow &&(\text{Def. of $\models$}) \\
            \forall r > t \, . \, ((w, r) &\models \psi \mbox{ or } \exists s \in (t, r) \, . \, (w, s) \models \xi)
            &\Leftrightarrow &&(\text{IH}) \\
            \forall r \geq t \, . \, ((w_1, r) &\models \psi\rw{\entcon{\zeta}{1}} \mbox{ or } \exists s \in [t, r) \, . \, (w_1, s) \models \xi\rw{\entcon{\zeta}{1}})
            &\Leftrightarrow &&(\text{Def. of $\models$}) \\
            (w_1, t) &\models (\psi\rw{\entcon{\zeta}{1}}) \ltlW (\xi\rw{\entcon{\zeta}{1}})
            &\Leftrightarrow &&(\text{Def. of $\placeholder\rw{\placeholder}$}) \\
            (w_1, t) &\models (\psi \ltlW \xi)\rw{\entcon{\zeta}{1}}.
        \end{aligned}
        \end{equation*}

        \item Case $\varphi = \ltlWY \psi$: The proof is identical to that of the case $\varphi = \ltlY \psi$.
        \item Case $\varphi = \psi \ltlWS \xi:$ Assume $(w, t + 1) \models \ltlH \psi$. Then $(w, t) \models \ltlH (\psi\rw{\entcon{\zeta}{1}}$, by the induction hypothesis. In particular, $w \models \psi$, so that $(\psi \ltlWS \xi) \in \entcon{\zeta}{1}$. Hence $(w_1, t) \models (\psi \ltlWS \xi)\rw{\entcon{\zeta}{1}}$.
        
        Assume $(\psi \ltlWS \xi) \in \entcon{\zeta}{1}$ and $(w_1, t) \models \ltlH (\psi\rw{\entcon{\zeta}{1}})$. The former implies that $w \models \psi \lor \xi$, while the latter together with the induction hypothesis imply that $(w, t + 1) \models \ltlH \psi$. Hence $(w, t + 1) \models \psi \ltlWS \xi$.
        The other possibilities are covered by \eqref{since_induction}.
\end{itemize}
    Since the semantics of $\varphi = \psi \ltlM \xi$, $\varphi = \psi \ltlR \xi$, $\varphi = \psi \ltlNWS \xi$, and $\varphi = \psi \ltlNS \xi$ differ only from the semantics of the already handled dual versions by the appearance of a conjunction of proper subformulae in the right operand, their proofs are omitted. 
\end{proof}

Lemma~\ref{lemma_entailed_sequence} now follows by induction: recursive weakening by the correct sets preserves the truth value of the formula under removal of a longer prefix. 

\lemmaentailedsequence*
\begin{proof}
If $t = 0$ the result is immediate since $\varphi\rw{\compseq{\varphi}{0}} = \varphi\rw{\entcon{\varphi}{0}} = \varphi$ and $w_0 = w$. If $t > 0$ then the result follows by repeated applications of Lemma~\ref{lemma_entailment_correct}.
\end{proof}

\subsection{Section \ref{section_afterfunction}}

We prove the correctness of Theorem~\ref{theorem_af_correct}. 
We state and prove intermediate results. 
The after-function tries out all possible sets of past subformulae as the sets according to which the formula is weakened.
We first show that for the disjuncts that choose the correct weakening the after-function works as expected.
That is, if at every letter read we choose the correct set of formulae to weaken the formula by, then the local after-function preserves the satisfaction of the formula. 

\begin{lemma}
\label{lemma_entailment_correct}
    Let $\varphi$ and $\zeta$ be two formulae such that $\varphi$ is a subformula of $\zeta$, $w$ be a word, and $t \in \mathbb{N}$. Then $w \models \varphi$ iff $w_t \models \afc(\varphi, w_{0t}, \entseq{\zeta}{t})$.
\end{lemma}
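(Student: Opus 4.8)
The statement to be proven, Lemma~\ref{lemma_entailment_correct}, asserts that $w \models \varphi$ iff $w_t \models \afc(\varphi, w_{0t}, \entseq{\zeta}{t})$, where $\varphi$ is a subformula of $\zeta$. The plan is to proceed by induction on $t$. The base case $t = 0$ is immediate: by the definition of the extended local after-function, $\afc(\varphi, \epsilon, \vec{C}_{01}) = \varphi$ and $w_0 = w$, so the claim is trivial. For the inductive step, suppose the lemma holds for $t$; we prove it for $t + 1$. Unfolding the definition of the extended local after-function, $\afc(\varphi, w_{0(t+1)}, \entseq{\zeta}{t+1}) = \afc(\varphi_t, w_{t(t+1)}, \entconw{\zeta}{t+1}{})$ where $\varphi_t = \afc(\varphi, w_{0t}, \entseq{\zeta}{t})$ — i.e., we peel off the last letter. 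By the induction hypothesis, $w \models \varphi$ iff $w_t \models \varphi_t$. So it suffices to establish a \emph{one-step} correctness property: for any formula $\psi$ (here $\psi = \varphi_t$, which is a Boolean combination of appropriately-weakened subformulae of $\zeta$), and any word $v$ (here $v = w_t$) and letter read, $v \models \psi$ iff $v_1 \models \afc(\psi, v_{01}, \langle \entconw{\zeta}{t}{}, \entconw{\zeta}{t+1}{} \rangle)$ — more precisely $v_1 \models \afc(\psi, \sigma_t, \entconw{\zeta}{t+1}{})$ using the set of entailed subformulae at step $t+1$.

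This reduces everything to a \textbf{single-step lemma}, which I would prove by structural induction on $\psi$. The cases for $\top$, $\bot$, literals, and the Boolean connectives $\land, \lor$ are routine (they follow directly from the local semantics of $\mathit{pLTL}$ and the homomorphic behavior of $\afc$). The cases for $\ltlX$, $\ltlU$, $\ltlW$, $\ltlR$, $\ltlM$ are essentially the standard $\mathit{LTL}$ after-function argument, but with the crucial twist that the recursive calls pass a set $C$ and that $\puc$ both performs the rewrite $\placeholder\rw{C}$ and adds the conjuncts $\bigwedge_{\psi' \in \psf(\cdot) \cap C} \afc(\wc(\psi'), \sigma, C)$. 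Here I would lean heavily on the already-established Proposition~\ref{proposition_weakening_correct}, which is precisely the statement that weakening by the correct set of entailed subformulae $\entcon{\zeta}{1}$ preserves truth under removal of the first letter, together with the stated semantic intuition that if $w \models \wc(\psi')$ then $(w,1)\models\psi'$ follows from $w_1 \models \psi'_\mathcal{W}$. The past-operator cases ($\ltlY, \ltlWY$, and $\ltlSsymb, \ltlWSsymb, \ltlNWSsymb, \ltlNSsymb$) are where the real content lies: $\afc(\ltlY\psi, \sigma, C) = \bot$ and $\afc(\ltlWY\psi,\sigma,C) = \top$ reflect that after reading one letter, "yesterday $\psi$" at the new position means "$\psi$ held at the old position", whose truth value is already baked into whether $\ltlY\psi$ was weakened; and $\afc(\psi' \ltlBINOP \xi, \sigma, C) = \afc(\wc(\psi' \ltlBINOP \xi), \sigma, C)$ for the four "since"-family operators reflects that such a formula is, at the current moment, equivalent to its weakening condition (after the appropriate rewrite has correctly recorded the past). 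The key correctness fact to invoke is the interplay: $\entconw{\zeta}{t+1}{}$ is by definition exactly $\{\psi' \in \psf(\zeta\rw{\entconw{\zeta}{t}{}}) \mid w_t \models \wc(\psi')\}$, so the "guess" $C$ appearing in $\puc$, when it equals $\entconw{\zeta}{t+1}{}$, is precisely the set of past subformulae whose weakening conditions genuinely hold — which is what makes the added conjuncts in $\puc$ exactly satisfiable and the rewrite $\placeholder\rw{C}$ exactly right by Proposition~\ref{proposition_weakening_correct} (applied along the full prefix, via Lemma~\ref{lemma_entailed_sequence}).

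The \textbf{main obstacle} I anticipate is bookkeeping with the interleaving of the two inductions (on $t$ and on formula structure) and making the recursive structure of $\afc$/$\puc$ line up cleanly with the recursive structure of $\entcon{\cdot}{\cdot}$ — in particular, ensuring that when we peel off a letter, the set $\entconw{\zeta}{t+1}{}$ really does act on the already-transformed formula $\varphi_t$ the way it should, given that $\varphi_t$ contains subformulae of $\zeta$ that have themselves been weakened/strengthened during the first $t$ steps. The device that makes this work is the composed-rewrite set $\circ\,\vec{C}$ and Lemma~\ref{lemma_entailed_sequence}: the cumulative effect of $t+1$ successive rewrites equals a single rewrite by $\compseqw{\varphi}{t+1}{}$, so one can always reason "globally" about the net weakening rather than tracking each step. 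A secondary delicacy is that the weakening conditions $\wc(\psi')$ may themselves contain future operators, so the conjuncts added by $\puc$ are not immediately resolved but are further processed by $\afc$; the structural induction hypothesis must be strong enough (stated for arbitrary $\psi$ that is a subformula of $\zeta$, or more carefully, a Boolean combination of rewrites of such) to cover these recursive calls. I would phrase the structural induction hypothesis on $\psi$ uniformly over all positions, exactly as Proposition~\ref{proposition_weakening_correct} and the surrounding propositions are phrased, so that it composes.
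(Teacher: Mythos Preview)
Your approach is essentially the paper's: induction on $t$ reducing to a one-step claim, and that one-step claim by structural induction on the formula using Proposition~\ref{proposition_weakening_correct} for the $\puc$-cases. The paper writes out only the $t=1$ step in detail (including the key observation that the conjunction $\bigwedge_{\xi \in \psf(\psi)\cap\entcon{\zeta}{1}}\afc(\wc(\xi),\sigma,\entcon{\zeta}{1})$ is automatically satisfied because, by the structural IH, $w\models\wc(\xi)$ for every $\xi\in\entcon{\zeta}{1}$) and then defers to ``induction on $t$'' for the rest --- exactly your plan.

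One caution: you propose to lean on Lemma~\ref{lemma_entailed_sequence} to handle the bookkeeping between steps. In the paper's dependency order, Lemma~\ref{lemma_entailed_sequence} is itself derived \emph{after} (and, in the appendix, nominally \emph{from}) Lemma~\ref{lemma_entailment_correct}, so invoking it here risks circularity. You don't actually need it: the one-step argument only requires Proposition~\ref{proposition_weakening_correct}, and the passage from $t$ to $t+1$ works because $\varphi_t$ is a positive Boolean combination of formulae that are subformulae of $\zeta\rw{\compseq{\zeta}{t}}$, to which the $t=1$ argument applies with $\zeta$ replaced by this rewritten $\zeta$ and $w$ by $w_t$ (the set $\entcon{\zeta}{t+1}$ is, by definition, precisely the step-$1$ entailed set for that pair). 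So your identified obstacle is real but dissolves without Lemma~\ref{lemma_entailed_sequence}; just phrase the one-step lemma relative to an arbitrary ``ambient'' formula and word, as the paper implicitly does.
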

\begin{proof}
We prove this for $t = 1$ ($\varphi\rw{\entcon{\zeta}{1}} = \varphi\rw{\entseq{\zeta}{1}}$) by induction on the structure of $\varphi$. The full proof follows by induction on $t$. We concentrate on the inductive step for the cases where $\varphi$ is a temporal formula. By definition, $w$ satisfies the weakening conditions of all formulae in $\entcon{\zeta}{1}$. As part of the inductive hypothesis, we thus have for all subformulae $\psi$ of $\varphi$ that
\begin{equation*}
    w_1 \models \bigwedge_{\xi \in \psf(\psi) \cap  \entcon{\zeta}{1}} \afc(\wc(\xi), \sigma, \entcon{\zeta}{1}).
\end{equation*}
For this reason, we omit this component of $\puc$ in the below proof. We proceed by case distinction on $\varphi$.
\begin{itemize}
    \item Case $\varphi = \ltlX \psi$:
    \begin{equation*}
    \begin{aligned}
        w &\models \ltlX \psi &\Leftrightarrow &&(\text{Def. of } \models) \\
        (w, 1) &\models \psi &\Leftrightarrow &&(\text{Prop.}~\ref{proposition_weakening_correct}) \\
        w_1 &\models \psi\rw{\entcon{\zeta}{1}} &\Leftrightarrow &&(\text{Def. of } \afc) \\
        w_1 &\models \afc(\ltlX \psi, w_{01}, \entcon{\zeta}{1})
    \end{aligned}
    \end{equation*}
        \item Case $\varphi = \psi \ltlU \xi$: We use the identity $\psi \ltlU \xi \equiv \xi \lor \psi \land \ltlX (\psi \ltlU \xi)$. From the inductive hypothesis we get $w \models \psi \Leftrightarrow w_1 \models \afc(\psi, w_{01}, \entcon{\zeta}{1})$ and $w \models \xi \Leftrightarrow w_1 \models \afc(\xi, w_{01}, \entcon{\zeta}{1})$. It remains to show the following:
        \begin{equation*}
        \begin{aligned}
            (w, 1) &\models \psi \ltlU \xi
            &\Leftrightarrow &&(\text{Def. of} \models) \\
            \exists k \geq 1 \, . \, ((w, k) &\models \xi \mbox{ and } \forall j \in (0, k) \, . \, (w, j) \models \psi)
            &\Leftrightarrow &&(\text{Prop.~\ref{proposition_weakening_correct}}) \\
            \exists k \geq 0 \, . \, ((w_1, k) &\models \xi\rw{\entcon{\zeta}{1}} \mbox{ and } \forall j \in [0, k) \, . \, (w_1, j) \models \psi\rw{\entcon{\zeta}{1}})
            &\Leftrightarrow &&(\text{Def. of} \models) \\
            w_1 &\models (\psi\rw{\entcon{\zeta}{1}}) \ltlU (\xi\rw{\entcon{\zeta}{1}})
            &\Leftrightarrow &&(\text{Def. of} \placeholder \rw{\placeholder}) \\
            w_1 &\models (\psi \ltlU \xi)\rw{\entcon{\zeta}{1}}.
        \end{aligned}
        \end{equation*}
        \item Case $\varphi = \psi \ltlW \xi$: We use the identity $\ltlG \psi \equiv \psi \land \ltlX (\ltlG \psi)$. By the induction hypothesis we have $w \models \psi \Leftrightarrow w_1 \models \afc(\psi, w_{01}, \entcon{\zeta}{1})$. We also have
        \begin{equation*}
        \begin{aligned}
            (w, 1) &\models \ltlG \psi &\Leftrightarrow &&(\text{Prop.~\ref{proposition_weakening_correct}}) \\
            w_1 &\models \ltlG (\psi \rw{\entcon{\zeta}{1}}) &\Leftrightarrow &&(\text{Def. of} \placeholder \rw{\placeholder}) \\
            w_1 &\models (\ltlG \psi) \rw{\entcon{\zeta}{1}}.
        \end{aligned}
        \end{equation*}
        This together with the proof for $\varphi = \psi \ltlU \xi$ completes the proof.
        \item Case $\varphi = \ltlY \psi$: Immediate, as $w \not\models \ltlY \psi$.
        \item Case $\varphi = \psi \ltlS \xi$:
        \begin{equation*}
        \begin{aligned}
            w &\models \psi \ltlS \xi &\Leftrightarrow &&(\text{Def. of } \models) \\
            w &\models \xi &\Leftrightarrow &&(\text{IH}) \\
            w_1 &\models \afc(\xi, w_{01}, \entcon{\zeta}{1}) &\Leftrightarrow &&(\text{Def. of } \afc) \\
            w_1 &\models \afc(\psi \ltlS \xi, w_{01}, \entcon{\zeta}{1}).
        \end{aligned}
        \end{equation*}
        \item Case $\varphi = \psi \ltlWS \xi$:
        \begin{equation*}
        \begin{aligned}
            w &\models \psi \ltlWS \xi &\Leftrightarrow &&(\text{Def. of } \models) \\
            w &\models \psi \lor \xi &\Leftrightarrow &&(\text{IH}) \\
            w_1 &\models \afc(\psi, w_{01}, \entcon{\zeta}{1}) \lor \afc(\xi, w_{01}, \entcon{\zeta}{1}) &\Leftrightarrow &&(\text{Def. of } \afc) \\
            w_1 &\models \afc(\psi \ltlWS \xi, w_{01}, \entcon{\zeta}{1}).
        \end{aligned}
        \end{equation*}
        \end{itemize}
        As in the proof of Proposition~\ref{proposition_weakening_correct}, and for the same reason, we omit the cases dual to the ones handled above.
\end{proof}

We now show that the correct weakening at time $t$ can be found by taking a sequence of weakenings and updating the formula recursively by every letter consumed by the after-function in turn. 

\begin{lemma}
\label{lemma_entailment}
    Let $\varphi$ and $\zeta$ be two formulae such that $\varphi$ is a subformula of $\zeta$, $w$ a word, and $t \in \mathbb{N}$. Then $w_t \models \afc(\varphi, w_{0t}, \entseq{\zeta}{t})$ iff there exists a sequence of sets of past formulae $\vec{C}$ of length $t + 1$ such that $w_t \models \afc(\varphi, w_{0t}, \vec{C})$.
\begin{proof}
We prove this for $t = 1$. The full proof follows by an inductive argument.
The ``only if''-direction is immediate, since $\entcon{\zeta}{1} \in 2^{\psf(\zeta)}$. We will prove the ``if''-direction by induction on the size of $\varphi$. The base cases are immediate, as is the inductive step if $\varphi$ is a Boolean- or past formula.
    
Assume a set $C \in 2^{\psf(\zeta)}$. Given a subformula $\psi$ of $\varphi$, we have
\begin{equation*}
\begin{aligned}
    &w_1 \models \bigwedge_{\xi \in \psf(\psi) \cap C}\afc(\wc(\xi), w_{01}, C) &\Rightarrow &&(\text{IH}) \\
    &w_1 \models \bigwedge_{\xi \in \psf(\psi) \cap C} \afc(\wc(\xi), w_{01}, \entcon{\zeta}{1}) &\Rightarrow &&(\text{Lemma \ref{lemma_entailment_correct}}) \\
    &w \models \bigwedge_{\xi \in \psf(\psi) \cap C} \wc(\xi),
\end{aligned}
\end{equation*}
which in turn implies that $\psf(\psi) \cap C \subseteq \entcon{\zeta}{1}$. From this and Proposition~\ref{proposition_rewrite_subset}, we conclude that
\begin{equation}
\label{pu_implied}
    w_1 \models \puc(\psi, w_{01}, C) \Rightarrow w_1 \models \puc(\psi, w_{01}, \entcon{\zeta}{1}).
\end{equation}
The top-level structure of $\afc(\varphi, w_{01}, C)$ and $\afc(\varphi, w_{01}, \entcon{\zeta}{1})$ are identical with respect to applications of $\afc$ and $\puc$. Since these are applied to proper subformulae of $\varphi$, the inductive hypothesis together with \eqref{pu_implied} finishes the proof.
\end{proof}
\end{lemma}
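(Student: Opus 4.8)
The plan is to handle the two implications separately. The left-to-right direction is immediate: $\entseq{\varphi}{t}$ --- read for $\zeta$, i.e. $\entseq{\zeta}{t}$ --- is itself a sequence of subsets of $\psf(\zeta)$ of length $t+1$, so it already witnesses the existential on the right. All the content lies in the right-to-left direction, which I would prove by induction on $t$. For $t = 0$ there is nothing to prove, since the extended local after-function returns its formula argument unchanged on the empty word. For the step, I would peel off the last input letter via the recursive clause of the extended $\afc$, reducing the statement for length $t+1$ to the one-letter statement --- that replacing the set used at the final step by $\entcon{\zeta}{t+1}$ cannot destroy satisfaction --- applied to the formula produced after $t$ steps, whose past subformulae are still among those of $\zeta$ by Lemma~\ref{lemma_entailed_sequence} and the shape of $\afc$. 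So it suffices to treat $t = 1$.

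For $t = 1$ the goal is: for an arbitrary $C \in 2^{\psf(\zeta)}$, $w_1 \models \afc(\varphi, w_{01}, C)$ implies $w_1 \models \afc(\varphi, w_{01}, \entcon{\zeta}{1})$; I would prove this by induction on the size $\vert\varphi\vert$. The base cases, and the cases where $\varphi$ is rooted with a Boolean connective or a past operator, are direct, because the set argument is consulted only inside the subroutine $\puc$, which is invoked only for $\ltlX\psi$ and the genuinely future binary operators, and there only on proper subformulae of $\varphi$. Hence the crux is a single sub-claim $(\star)$: for every subformula $\psi$ of $\varphi$, $w_1 \models \puc(\psi, w_{01}, C)$ implies $w_1 \models \puc(\psi, w_{01}, \entcon{\zeta}{1})$. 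Granting $(\star)$, the structural induction closes immediately: $\afc(\varphi, w_{01}, C)$ and $\afc(\varphi, w_{01}, \entcon{\zeta}{1})$ share the same top-level shape --- a positive Boolean combination of $\afc$'s and $\puc$'s applied to proper subformulae --- so the structural hypothesis on the $\afc$-parts and $(\star)$ on the $\puc$-parts propagate satisfaction from the $C$-version to the $\entcon{\zeta}{1}$-version; the release-type operators are symmetric to the cases already treated and can be omitted.

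To prove $(\star)$, I would unfold $\puc(\psi, w_{01}, C) = \psi\rw{C} \land \bigwedge_{\xi \in \psf(\psi)\cap C}\afc(\wc(\xi), w_{01}, C)$. From the conjunction holding at $w_1$ I obtain, for each $\xi \in \psf(\psi)\cap C$, that $w_1 \models \afc(\wc(\xi), w_{01}, C)$; applying the structural hypothesis to $\wc(\xi)$ --- a positive Boolean combination of proper subformulae of $\xi$, all of whose past subformulae lie in $\psf(\zeta)$ --- and then Lemma~\ref{lemma_entailment_correct} gives $w \models \wc(\xi)$. Since $\xi \in \psf(\zeta)$, the definition of $\entcon{\zeta}{1}$ forces $\xi \in \entcon{\zeta}{1}$, hence $\psf(\psi)\cap C \subseteq \entcon{\zeta}{1}$. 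As the rewrite $\psi\rw{\placeholder}$ depends only on the past subformulae of $\psi$, Proposition~\ref{proposition_rewrite_subset} now upgrades $w_1 \models \psi\rw{C}$ to $w_1 \models \psi\rw{\entcon{\zeta}{1}}$; the same reasoning upgrades each $w_1 \models \afc(\wc(\xi), w_{01}, C)$ to $w_1 \models \afc(\wc(\xi), w_{01}, \entcon{\zeta}{1})$, using $w \models \wc(\xi)$ and Lemma~\ref{lemma_entailment_correct} in the other direction. Conjoining the upgraded conjuncts yields $(\star)$.

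The one genuine obstacle is $(\star)$, and within it the fact that a weakening condition $\wc(\xi)$ may itself refer to the future, so ``$w \models \wc(\xi)$'' cannot be read off directly but must be recovered from ``$w_1 \models \afc(\wc(\xi), w_{01}, C)$'' through Lemma~\ref{lemma_entailment_correct} --- which is precisely why the one-letter case has to be an induction on formula size rather than a one-line unfolding, and why the present lemma leans on the earlier correctness lemma for the canonical sequence of entailed sets. Everything else --- the outer induction on $t$, the Boolean and past cases of the inner induction, and the omitted release-type cases --- is routine bookkeeping of the kind already carried out for Lemma~\ref{lemma_entailment_correct} and Proposition~\ref{proposition_weakening_correct}.
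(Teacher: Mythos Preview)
Your proposal is correct and follows essentially the same approach as the paper: reduce to $t=1$, handle the ``only if'' direction trivially, and for the ``if'' direction induct on the size of $\varphi$, with the key step being exactly your claim~$(\star)$, established via the inductive hypothesis on $\wc(\xi)$, Lemma~\ref{lemma_entailment_correct}, and Proposition~\ref{proposition_rewrite_subset}. One small point to tighten: in the final step of $(\star)$ the target conjunction in $\puc(\psi,w_{01},\entcon{\zeta}{1})$ ranges over $\psf(\psi)\cap\entcon{\zeta}{1}$, which may strictly contain $\psf(\psi)\cap C$; the extra conjuncts are handled because $\xi\in\entcon{\zeta}{1}$ gives $w\models\wc(\xi)$ by definition, whence $w_1\models\afc(\wc(\xi),w_{01},\entcon{\zeta}{1})$ by Lemma~\ref{lemma_entailment_correct} --- the paper leaves this implicit too.
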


Based on these two results on the local after-function and the choice of the correct weakenings we can show that the (total) after-function is indeed correct.
That is, if the formula holds initially, then applying $\af$ to it for every letter read leads to a formula that holds over the shortened suffix (and the other direction holds as well).

\theoremafcorrect*
\begin{proof}
This follows directly from the above two lemmata:
\begin{equation*}
\begin{aligned}
    w &\models \varphi &\Leftrightarrow &&(\text{Lemma~\ref{lemma_entailment_correct}}) \\
    w_t &\models \afc(\varphi, w_{0t}, \entseq{\varphi}{t}) &\Leftrightarrow &&(\text{Lemma~\ref{lemma_entailment}}) \\
    \exists \vec{C} \, . \, w_t &\models \afc(\varphi, w_{0t}, \vec{C}) &\Leftrightarrow &&(\text{Def. of} \models \text{ and } \af) \\
    w_t &\models \af(\varphi, w_{0t}).
\end{aligned}
\end{equation*}
\end{proof}

\subsection{Section \ref{section_stability}}

In order to show that a word $w$ eventually becomes stable with respect to $\varphi$ (Lemma~\ref{lemma_stability_two}), we first show that there is stabilization with respect to single formulae. 
\begin{lemma}
\label{lemma_stability_one}
    Let $\varphi$ be a formula and $w$ a word. Then
    \begin{equation*}
    \begin{aligned}
        &1) \ \exists r \, . \, \forall t \geq r \, . \, (w, t) \models \ltlF \varphi \Leftrightarrow (w, t) \models \ltlG \ltlF \varphi \\
        &2) \ \exists r \, . \, \forall t \geq r \, . \, (w, t) \models \ltlG \varphi \Leftrightarrow (w, t) \models \ltlF \ltlG \varphi.
    \end{aligned}
    \end{equation*}
\end{lemma}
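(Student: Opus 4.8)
The plan is to reduce both statements to an elementary observation about the \emph{truth set} of $\varphi$ along $w$. Fix $\varphi$ and $w$ and let $T \coloneqq \{ t \in \mathbb{N} \mid (w, t) \models \varphi \}$. Although $\varphi$ may contain past operators, this plays no role here: for a fixed formula $(w, r) \models \varphi$ depends on the whole prefix $\sigma_0 \dots \sigma_r$, but everything below is phrased purely in terms of the set $T \subseteq \mathbb{N}$. Unwinding the semantics of the derived operators, one has $(w, t) \models \ltlF \varphi$ iff $T$ contains some index $\geq t$, and $(w, t) \models \ltlG \varphi$ iff $[t, \infty) \subseteq T$. Consequently $(w, t) \models \ltlG \ltlF \varphi$ iff $T$ is infinite, and $(w, t) \models \ltlF \ltlG \varphi$ iff $\mathbb{N} \setminus T$ is finite; in particular, each of these two conditions is independent of $t$.

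For statement 1) I would argue by a case distinction on whether $T$ is infinite. If $T$ is infinite, then $(w, t) \models \ltlF \varphi$ and $(w, t) \models \ltlG \ltlF \varphi$ for \emph{every} $t$, so the biconditional holds at all indices and $r = 0$ works. If $T$ is finite, set $r \coloneqq 1 + \max T$ when $T \neq \varnothing$ and $r \coloneqq 0$ otherwise; then for all $t \geq r$ we have that $T$ has no index $\geq t$, so neither $\ltlF \varphi$ nor $\ltlG \ltlF \varphi$ holds at $t$, and the biconditional again holds. Statement 2) is entirely dual, case-splitting on whether $\mathbb{N} \setminus T$ is infinite: if it is, then $\ltlG \varphi$ and $\ltlF \ltlG \varphi$ both fail at every $t$, so $r = 0$ suffices; if it is finite, take $r$ one greater than its maximum (or $0$ if it is empty), so that $[t, \infty) \subseteq T$ and hence both $\ltlG \varphi$ and $\ltlF \ltlG \varphi$ hold at every $t \geq r$.

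There is no real obstacle here. The only point worth stating carefully is that one implication of each biconditional — namely $(w,t) \models \ltlG \ltlF \varphi \Rightarrow (w,t) \models \ltlF \varphi$ and $(w,t) \models \ltlG \varphi \Rightarrow (w,t) \models \ltlF \ltlG \varphi$ — already holds at \emph{every} index (instantiate the outermost quantifier at $t$ itself), so the genuine content of the lemma is only the eventual validity of the converse directions, which is exactly what the case analyses above supply. This lemma then feeds into the proof of Lemma~\ref{lemma_stability_two} by applying it to each of the finitely many subformulae in $\mu(\varphi) \cup \nu(\varphi)$ and taking the maximum of the finitely many indices obtained.
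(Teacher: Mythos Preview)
Your proof is correct and follows essentially the same approach as the paper: a case distinction on whether $\varphi$ holds infinitely often along $w$ (for part 1), respectively whether it fails infinitely often (for part 2), with the observation that one direction of each biconditional is immediate. The paper phrases the case split directly in terms of $w \models \ltlG\ltlF\varphi$ versus its negation rather than via the truth set $T$, and is terser, but the underlying argument is the same.
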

\begin{proof}
    1) We prove the $(\Rightarrow)$-direction, as the other is obvious from the semantics of $\mathit{pLTL}$. Assume $w \models \ltlF \varphi$. If $w \models \ltlG \ltlF \varphi$ then we are done. If $w \not\models \ltlG \ltlF \varphi$ there exists a $r \geq 0$ such that $\forall t \geq r \, . \, (w, t) \not\models \ltlF \varphi$, which implies that $\forall t \geq r \, . \, (w, t) \not\models \ltlG \ltlF \varphi$.

    2) We prove the $(\Leftarrow)$-direction, as the other is obvious from the semantics of $\mathit{pLTL}$. Assume $w \models \ltlF \ltlG \varphi$. Then there exists an $r \geq 0$ such that $\forall t \geq r \, . \, (w, t) \models \ltlG \varphi$, which implies that $\forall t \geq r \, . \, (w, t) \models \ltlF \ltlG \varphi$.
\end{proof}

Based on stability with respect to an individual formula, we can show stability with respect to a set by taking the maximum.

\lemmastabilitytwo*
\begin{proof}
    By Lemma~\ref{lemma_stability_one}, there exist for every subformula of $\varphi$ indices $k_1$ and $k_2$ satisfying premise 1) and 2) of its statement. The desired index is the maximum of all such indices for all subformulae of $\varphi$.
\end{proof}

On the way to proving the Master Theorem (Theorem~\ref{theorem_master}), we first establish properties of the rewriting of formulae by sets of $\mu$- and $\nu$-formulae. The following lemma states when rewriting is possible and \textendash{} depending on whether the formula we rewrite by is satisfied finitely often, infinitely often, eventually always, or always \textendash{} what it means regarding the preservation of the truth value of the formula. 

\begin{lemma}
\label{lemma_MN_properties}
    Let $\varphi$ be a formula, $w$ a word, and $t \in \mathbb{N}$. Let $M$ and $N$ be sets of formulae. Then
    \begin{equation*}
    \begin{aligned}
        &1) \mbox{ If } \F{\varphi}{w}{0} \subseteq M \mbox{ and } (w, t) \models \varphi, \mbox{then } (w, t) \models \varphi[M]_\nu. \\
        &2) \mbox{ If } M \subseteq \GF{\varphi}{w}{0} \mbox{ and } (w, t) \models \varphi[M]_\nu, \mbox{then } (w, t) \models \varphi. \\
        &3) \mbox{ If } \FG{\varphi}{w}{0} \subseteq N \mbox{ and } (w, t) \models \varphi, \mbox{then } (w, t) \models \varphi[N]_\mu. \\
        &4) \mbox{ If } N \subseteq \G{\varphi}{w}{0} \mbox{ and } (w, t) \models \varphi[N]_\mu, \mbox{then } (w, t) \models \varphi.
    \end{aligned}
    \end{equation*}
\end{lemma}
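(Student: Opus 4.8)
The plan is to prove all four claims by a single structural induction on $\varphi$, carried out uniformly for every $t\in\mathbb{N}$ at once (the induction is on formula structure, not on $t$; the $\ltlXsymb$ case and the past operators shift the index of evaluation, so uniformity in $t$ is essential). The four statements come in two dual pairs: claims 1 and 2 concern $\placeholder[M]_\nu$, claims 3 and 4 concern $\placeholder[N]_\mu$, and the latter pair is the mirror image of the former under the $\mu\leftrightarrow\nu$, $\ltlUsymb\leftrightarrow\ltlWsymb$, $\ltlMsymb\leftrightarrow\ltlRsymb$ symmetry; I would prove 1 and 2 in full and indicate the mirrored argument for 3 and 4. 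A preliminary observation sets up the induction: for a subformula $\psi$ of $\varphi$ one has $\mu(\psi)\subseteq\mu(\varphi)$ and $\nu(\psi)\subseteq\nu(\varphi)$, and $\F{\psi}{w}{0}$, $\GF{\psi}{w}{0}$, $\G{\psi}{w}{0}$, $\FG{\psi}{w}{0}$ are just the restrictions of $\F{\varphi}{w}{0}$, $\GF{\varphi}{w}{0}$, $\G{\varphi}{w}{0}$, $\FG{\varphi}{w}{0}$ to those smaller index sets, so every side hypothesis of the lemma descends intact to subformulae. The atomic and propositional base cases are trivial ($a[M]_\nu=a$), the Boolean cases $\land,\lor$ are immediate from the induction hypothesis, and since $\placeholder[M]_\nu$ and $\placeholder[N]_\mu$ leave $\ltlXsymb$, all past operators, and whichever of $\ltlUsymb,\ltlWsymb,\ltlMsymb,\ltlRsymb$ they do not rewrite untouched, those cases simply unfold the semantics, apply the induction hypothesis at the relevant indices (available everywhere since it is proved uniformly in $t$), and reassemble the same operator.

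The real work is at the operators the rewrite actually modifies: $\ltlUsymb$ and $\ltlMsymb$ for claims 1 and 2. For claim 1 with $\varphi=\psi\ltlU\xi$: from $(w,t)\models\psi\ltlU\xi$ we get $(w,0)\models\ltlF(\psi\ltlU\xi)$ (use $t$ as the witness), so $\psi\ltlU\xi\in\F{\varphi}{w}{0}\subseteq M$, hence $(\psi\ltlU\xi)[M]_\nu=(\psi[M]_\nu)\ltlW(\xi[M]_\nu)$; applying the induction hypothesis to the $\xi$-witness and to every $\psi$-point gives $(w,t)\models(\psi[M]_\nu)\ltlU(\xi[M]_\nu)$, and a fortiori $(w,t)\models(\psi[M]_\nu)\ltlW(\xi[M]_\nu)$. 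The $\ltlMsymb$ case is identical with the extra conjunct carried along. The forward direction of the dual claim 3 is symmetric but with a twist: for $\psi\ltlW\xi$, the rewrite $[N]_\mu$ sends it to $\top$ when $\psi\ltlW\xi\in N$ (nothing to prove) and otherwise strengthens it to $(\psi[N]_\mu)\ltlU(\xi[N]_\mu)$; in the latter case $\psi\ltlW\xi\notin N\supseteq\FG{\varphi}{w}{0}$, i.e. $w\not\models\ltlF\ltlG(\psi\ltlW\xi)$, lets us rule out $(w,t)\models\ltlG\psi$ (which would force $\ltlG(\psi\ltlW\xi)$ from $t$ on), so $(w,t)\models\psi\ltlU\xi$ and the induction hypothesis concludes.

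The main obstacle is the backward directions, claims 2 and 4, exactly at the weakened/strengthened operators. For claim 2 with $\varphi=\psi\ltlU\xi$: if $\psi\ltlU\xi\notin M$ the target is $\bot$ and the hypothesis $(w,t)\models\bot$ is vacuous; if $\psi\ltlU\xi\in M$ the target is $(\psi[M]_\nu)\ltlW(\xi[M]_\nu)$, and by $\chi_1\ltlW\chi_2\equiv\chi_1\ltlU\chi_2\lor\ltlG\chi_1$ the hypothesis splits into the benign subcase $(w,t)\models(\psi[M]_\nu)\ltlU(\xi[M]_\nu)$, dispatched by the induction hypothesis, and the subcase $(w,t)\models\ltlG(\psi[M]_\nu)$, where the induction hypothesis only yields $(w,t)\models\ltlG\psi$ and we still lack an occurrence of $\xi$. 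This is precisely where $M\subseteq\GF{\varphi}{w}{0}$ is used: $\psi\ltlU\xi\in M$ gives $w\models\ltlG\ltlF(\psi\ltlU\xi)$, hence $(w,t)\models\ltlF(\psi\ltlU\xi)$, so there is an $r\geq t$ with $(w,r)\models\psi\ltlU\xi$; unrolling that and splicing $(w,t)\models\ltlG\psi$ across $[t,r)$ gives $(w,t)\models\psi\ltlU\xi$. Claim 4 is the mirror: when $\psi\ltlW\xi$ was sent to $\top$, the hypothesis $\psi\ltlW\xi\in N\subseteq\G{\varphi}{w}{0}$ directly yields $w\models\ltlG(\psi\ltlW\xi)$, so $(w,t)\models\psi\ltlW\xi$; otherwise the target is $(\psi[N]_\mu)\ltlU(\xi[N]_\mu)$ and the induction hypothesis together with the fact that $\psi\ltlU\xi$ entails $\psi\ltlW\xi$ finishes it. The $\ltlMsymb$ and $\ltlRsymb$ cases differ only by the additional conjunct in the right operand, and I expect no real difficulty from the past operators themselves — the only care needed is to keep the induction hypothesis uniform in $t$ and to remember that the various $\mathcal{F}$/$\mathcal{G}$-sets restrict monotonically along subformulae.
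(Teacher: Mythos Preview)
Your proposal is correct and follows essentially the same structural-induction approach as the paper, including the key uses of $M\subseteq\GF{\varphi}{w}{0}$ in claim~2 and $\psi\ltlW\xi\notin\FG{\varphi}{w}{0}$ in claim~3 at precisely the same points. The only difference is presentational: for claim~2 you split $\ltlWsymb$ into $\ltlUsymb\lor\ltlG$ before invoking the induction hypothesis, whereas the paper applies it directly to the $\ltlWsymb$-formula and then combines with $(w,t)\models\ltlF\xi$.
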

\begin{proof}
We will prove this using induction on the structure of $\varphi$. We only consider the inductive step, and only a representative for the cases where $\placeholder[M]_\nu$ or $\placeholder[N]_\mu$ affect the operator at the root of the syntax tree of $\varphi$; the other cases are proven by straightforward induction and the base steps are trivial.

    1) Assume $\F{\varphi}{w}{0} \subseteq M$. Let $\varphi = \psi \ltlU \xi$ and assume $(w, t) \models \psi \ltlU \xi$. Then $\psi \ltlU \xi \in M$. We have
    \begin{equation*}
    \begin{aligned}
        (w, t) &\models \psi \ltlU \xi &\Rightarrow &&(\text{IH}) \\
        (w, t) &\models (\psi[M]_\nu) \ltlU (\xi[M]_\nu) &\Rightarrow &&(\text{Def. of} \models) \\
        (w, t) &\models (\psi[M]_\nu) \ltlW (\xi[M]_\nu) &\Rightarrow &&(\text{Def. of $M$, $\varphi \in M$}) \\
        (w, t) &\models (\psi \ltlU \xi)[M]_\nu.
    \end{aligned}
    \end{equation*}

    2) Assume $M \subseteq \GF{\varphi}{w}{0}$. Let $\varphi = \psi \ltlU \xi$ and assume $(w, t) \models (\psi \ltlU \xi)[M]_\nu$. Then $\psi \ltlU \xi \in M$, since $(w, t) \not\models \bot$. Since $M \subseteq \GF{\varphi}{w}{0}$, we have $w \models \ltlG \ltlF (\psi \ltlU \xi)$, and in particular $(w, t) \models \ltlF \xi$. Hence
    \begin{equation*}
    \begin{aligned}
        (w, t) &\models (\psi \ltlU \xi)[M]_\nu &\Rightarrow &&(\text{Def. of $M$, $\varphi \in M$}) \\
        (w, t) &\models (\psi[M]_\nu) \ltlW (\xi[M]_\nu) &\Rightarrow &&(\text{IH}) \\
        (w, t) &\models \psi \ltlW \xi &\Rightarrow &&((w, t) \models \ltlF \xi) \\
        (w, t) &\models \psi \ltlU \xi.
    \end{aligned}
    \end{equation*}

    3) Assume $\FG{\varphi}{w}{0} \subseteq N$. Let $\varphi = \psi \ltlW \xi$ and assume $(w, t) \models \psi \ltlW \xi$. If $\psi \ltlW \xi \in N$, then $(\psi \ltlW \xi)[N]_\mu = \top$, and we are done. Assume $\psi \ltlW \xi \notin N$. Since $\FG{\varphi}{w}{0} \subseteq N$, this means that $w \not\models \ltlF \ltlG (\psi \ltlW \xi)$. In particular, it means that $w \not\models \ltlF \ltlG (\psi \lor \xi)$, implying that $(w, t) \not\models \ltlG \psi$. Thus, by using the fact that $\psi \ltlW \xi \equiv \psi \ltlU \xi \lor \ltlG \psi$, we get
    \begin{equation*}
    \begin{aligned}
        (w, t) &\models \psi \ltlW \xi &\Rightarrow &&((w, t) \not\models \ltlG \psi) \\
        (w, t) &\models \psi \ltlU \xi &\Rightarrow &&(\text{IH}) \\
        (w, t) &\models (\psi[N]_\mu) \ltlU (\xi[N]_\mu) &\Rightarrow &&(\text{Def. of $N$, $\varphi \notin M$}) \\
        (w, t) &\models (\psi \ltlW \xi)[N]_\mu.
    \end{aligned}
    \end{equation*}

    4) Assume $N \subseteq \G{\varphi}{w}{0}$. Let $\varphi = \psi \ltlW \xi$ and assume $(w, t) \models (\psi \ltlW \xi)[N]_\mu$. If $\psi \ltlW \xi \in N$, then $w \models \ltlG (\psi \ltlW \xi)$, so that $(w, t) \models \ltlG (\psi \ltlW \xi)$, as required. If $\psi \ltlW \xi \notin N$, then
    \begin{equation*}
    \begin{aligned}       
        (w, t) &\models (\psi \ltlW \xi)[N]_\mu &\Rightarrow &&(\text{Def. of $N$, $\varphi \notin M$}) \\
        (w, t) &\models (\psi[N]_\mu) \ltlU (\xi[N]_\mu) &\Rightarrow &&(IH) \\
        (w, t) &\models \psi \ltlU \xi &\Rightarrow &&(\text{Def. of} \models) \\
        (w, t) &\models \psi \ltlW \xi.
    \end{aligned}
    \end{equation*}
\end{proof}

We are ready to prove the if-side of the Master Theorem.
We prove a stronger claim, which replaces existential quantifications by universal quantifications.
Namely, for the right sets of $\mu$- and $\nu$-formulae that indeed occur infinitely often and eventually always, it is possible to rewrite formulae to safety formulae and guarantee formulae and combine this with the application of weakening.

\begin{lemma}
\label{lemma_MN_=>}
    Let $\varphi$ be a formula and $w$ a word that is stable with respect to $\varphi$ at index $r$. If $M = \GF{\varphi}{w}{0}$ and $N = \FG{\varphi}{w}{0}$, then
    \begin{gather*}
        \forall \psi \in M \, . \, \forall t \geq 0 \, . \, w_t \models \ltlF (\psi\rw{\compseq{\varphi}{t}}[N\rw{\compseq{\varphi}{t}}]_\mu) \\
        \mbox{and} \\
        \forall \psi \in N \, . \, \forall t \geq r \, . \, w_t \models \ltlG (\psi\rw{\compseq{\varphi}{t}}[M\rw{\compseq{\varphi}{t}}]_\nu).
    \end{gather*}
\end{lemma}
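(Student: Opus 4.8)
The plan is to prove the two conjuncts separately, using in both cases the same device. Fix the index $t$ and abbreviate $\hat\psi \coloneqq \psi\rw{\compseq{\varphi}{t}}$, $\hat M \coloneqq M\rw{\compseq{\varphi}{t}}$, $\hat N \coloneqq N\rw{\compseq{\varphi}{t}}$. The idea is to transport Lemma~\ref{lemma_MN_properties} — part 3 for the $\ltlF$-statement, part 1 for the $\ltlG$-statement — from the pair $(\varphi, w)$ to the pair $(\hat\psi, w_t)$. This requires three standard auxiliary facts. First, the subformula- and offset-parametrized form of Lemma~\ref{lemma_entailed_sequence}: for every subformula $\chi$ of $\varphi$ and all $t, u \in \mathbb{N}$, $(w, t + u) \models \chi$ iff $(w_t, u) \models \chi\rw{\compseq{\varphi}{t}}$ (obtained by iterating the single-step Proposition~\ref{proposition_weakening_correct} along $\entseq{\varphi}{t}$, carrying the position $u$ along). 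Second, the purely syntactic observation that a rewrite $\placeholder\rw{C}$ leaves the root operator of every $\ltlW$-, $\ltlR$-, $\ltlU$- or $\ltlM$-rooted subformula intact, so that $\nu(\chi\rw{C}) = \{\eta\rw{C} \mid \eta \in \nu(\chi)\}$ and $\mu(\chi\rw{C}) = \{\eta\rw{C} \mid \eta \in \mu(\chi)\}$. Third, the semantic facts that $\ltlG\ltlF$ and $\ltlF\ltlG$ are tail properties (true at some index iff true at index $0$), that $\F{\varphi}{w}{\placeholder}$ is non-increasing and $\G{\varphi}{w}{\placeholder}$ non-decreasing while $\GF{\varphi}{w}{\placeholder}$ and $\FG{\varphi}{w}{\placeholder}$ are constant, and hence that stability at $r$ implies stability at every $t \geq r$.

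For the first conjunct, fix $\psi \in M$ and $t \geq 0$. Since $\psi \in M = \GF{\varphi}{w}{0}$ we have $w \models \ltlG\ltlF\psi$, so there is some $t' \geq t$ with $(w, t') \models \psi$, and the offset form of Proposition~\ref{proposition_weakening_correct} gives $(w_t, t' - t) \models \hat\psi$. Next I verify $\FG{\hat\psi}{w_t}{0} \subseteq \hat N$: a $\nu$-subformula of $\hat\psi$ has the form $\eta\rw{\compseq{\varphi}{t}}$ with $\eta \in \nu(\psi) \subseteq \nu(\varphi)$, and if $(w_t, 0) \models \ltlF\ltlG(\eta\rw{\compseq{\varphi}{t}})$ then, since the offset form of Proposition~\ref{proposition_weakening_correct} is uniform in the position, $(w, 0) \models \ltlF\ltlG\eta$, so $\eta \in \FG{\varphi}{w}{0} = N$ and thus $\eta\rw{\compseq{\varphi}{t}} \in \hat N$. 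Applying Lemma~\ref{lemma_MN_properties}(3) with $(\hat\psi, w_t, \hat N)$ in place of $(\varphi, w, N)$, at index $t' - t$, yields $(w_t, t' - t) \models \hat\psi[\hat N]_\mu$, hence $(w_t, 0) \models \ltlF(\hat\psi[\hat N]_\mu)$, which is exactly $w_t \models \ltlF(\psi\rw{\compseq{\varphi}{t}}[N\rw{\compseq{\varphi}{t}}]_\mu)$. No use of stability is made here, matching the quantifier $\forall t \geq 0$.

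For the second conjunct, fix $\psi \in N$ and $t \geq r$; by the monotonicity facts above, $w$ is stable with respect to $\varphi$ at $t$. From $\psi \in N = \FG{\varphi}{w}{0}$ and the tail property of $\ltlF\ltlG$ we get $\psi \in \FG{\varphi}{w}{t}$, hence $\psi \in \G{\varphi}{w}{t}$ by $\nu$-stability at $t$, i.e. $(w, t) \models \ltlG\psi$; the offset form of Proposition~\ref{proposition_weakening_correct} then gives $(w_t, s) \models \hat\psi$ for all $s \geq 0$. For the side condition, a $\mu$-subformula of $\hat\psi$ is $\eta\rw{\compseq{\varphi}{t}}$ with $\eta \in \mu(\psi) \subseteq \mu(\varphi)$, and $(w_t, 0) \models \ltlF(\eta\rw{\compseq{\varphi}{t}})$ gives $(w, t) \models \ltlF\eta$, so $\eta \in \F{\varphi}{w}{t} = \GF{\varphi}{w}{t} = \GF{\varphi}{w}{0} = M$ by $\mu$-stability at $t$, whence $\eta\rw{\compseq{\varphi}{t}} \in \hat M$; thus $\F{\hat\psi}{w_t}{0} \subseteq \hat M$. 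Applying Lemma~\ref{lemma_MN_properties}(1) with $(\hat\psi, w_t, \hat M)$ in place of $(\varphi, w, M)$, at index $s$, gives $(w_t, s) \models \hat\psi[\hat M]_\nu$ for every $s \geq 0$, i.e. $(w_t, 0) \models \ltlG(\hat\psi[\hat M]_\nu)$, which is $w_t \models \ltlG(\psi\rw{\compseq{\varphi}{t}}[M\rw{\compseq{\varphi}{t}}]_\nu)$.

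The routine parts are the tail-property and monotonicity manipulations on $\ltlG\ltlF$, $\ltlF\ltlG$, $\F{\varphi}{w}{\placeholder}$ and $\G{\varphi}{w}{\placeholder}$, all immediate from the semantics, together with the direct invocations of Lemma~\ref{lemma_MN_properties}. The main obstacle is the transport device itself: the cleanest route is to first record, as an explicit auxiliary lemma, the subformula- and offset-parametrized form of Lemma~\ref{lemma_entailed_sequence} stated above. This requires the bookkeeping that iterating the single-step weakening of Proposition~\ref{proposition_weakening_correct} along $\entseq{\varphi}{t}$ yields precisely the composite rewrite $\rw{\compseq{\varphi}{t}}$ — the composition identity already noted after Definition~\ref{definition_set_entailed} — and that the resulting equivalence is uniform in the position, which is exactly what lets an $\ltlF\ltlG$-, $\ltlG$- or $\ltlF$-prefix be pulled through the translation between $(w, t + \placeholder)$ and $(w_t, \placeholder)$. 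Once that lemma is in hand, the remainder is the direct transport sketched above.
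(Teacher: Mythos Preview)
Your proposal is correct and follows essentially the same approach as the paper's proof: both transport satisfaction from $(w,t)$ to $(w_t,0)$ via Lemma~\ref{lemma_entailed_sequence} (your ``offset form'' of Proposition~\ref{proposition_weakening_correct}), establish the inclusion $\FG{\,\cdot\,}{w_t}{0}\subseteq \hat N$ (resp.\ $\F{\,\cdot\,}{w_t}{0}\subseteq \hat M$, using stability at $t\geq r$), and then apply Lemma~\ref{lemma_MN_properties}(3) (resp.\ (1)). Your write-up is somewhat more explicit about the auxiliary bookkeeping (the syntactic preservation of $\mu$/$\nu$-roots under $\rw{\placeholder}$ and the tail-property manipulations), but the structure and key lemmata invoked are the same.
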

\begin{proof}
    Let $\psi \in M$, so that $w \models \ltlG \ltlF \psi$, and let $t \geq 0$. Since $\FG{\varphi}{w}{t} = \FG{\varphi}{w}{0} \subseteq N$ we have $\FG{\varphi\rw{\compseq{\varphi}{t}}}{w_t}{0} \subseteq N\rw{\compseq{\varphi}{t}}$. By Lemma~\ref{lemma_entailed_sequence}, it holds that $w_t \models \ltlG \ltlF (\psi\rw{\compseq{\varphi}{t}})$, and we have in particular that $w_t \models \ltlF (\psi\rw{\compseq{\varphi}{t}})$. Applying Lemma~\ref{lemma_MN_properties}.3 we get $w_t \models \ltlF (\psi\rw{\compseq{\varphi}{t}}[N\rw{\compseq{\varphi}{t}}]_\mu)$.

    Let $\psi \in N$, so that $w \models \ltlF \ltlG \psi$, and let $t \geq r$. Then $(w, t) \models \ltlG \psi$, since $w$ is stable at index $t$, and so $w_t \models \ltlG (\psi\rw{\compseq{\varphi}{t}})$, by Lemma~\ref{lemma_entailed_sequence}. By assumption, we have $\F{\varphi}{w}{t} = \GF{\varphi}{w}{t} \subseteq \F{\varphi}{w}{0} \subseteq M$, which implies that $\F{\varphi\rw{\compseq{\varphi}{t}}}{w_t}{0} \subseteq M\rw{\compseq{\varphi}{t}}$. By Lemma~\ref{lemma_MN_properties}.1, we then get $w_t \models \ltlG (\psi\rw{\compseq{\varphi}{t}}[M\rw{\compseq{\varphi}{t}}]_\nu)$. 
\end{proof}

We prove the only-if-direction of the Master Theorem.
If we are able to prove that the rewrite of formulae to safety formulae and guarantee formulae combined with the application of weakening, then it is indeed the case that the original formulae is satisfied infinitely often or eventually always.

\begin{lemma}
\label{lemma_MN_<=}
    Let $\varphi$ be a formula and $w$ a word. Let $M \subseteq \mu(\varphi)$ and $N \subseteq \nu(\varphi)$. If
    \begin{gather*}
        \forall \psi \in M \, . \, \forall s \, . \, \exists t \geq s \, . \, w_t \models \ltlF (\psi\rw{\compseq{\varphi}{t}}[N\rw{\compseq{\varphi}{t}}]_\mu) \\
        \mbox{and} \\
        \forall \psi \in N \, . \, \exists t \geq 0 \, . \, w_t \models \ltlG (\psi\rw{\compseq{\varphi}{t}}[M\rw{\compseq{\varphi}{t}}]_\nu),
    \end{gather*}
    then $M \subseteq \GF{\varphi}{w}{0}$ and $N \subseteq \FG{\varphi}{w}{0}$,
\end{lemma}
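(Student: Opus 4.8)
The plan is to establish, by strong induction on $|\psi|$, the following statement for every $\psi \in M \cup N$: if $\psi \in M$ then $w \models \ltlG\ltlF\psi$, and if $\psi \in N$ then $w \models \ltlF\ltlG\psi$. Because $M \subseteq \mu(\varphi)$, $N \subseteq \nu(\varphi)$, and $\ltlG\ltlF$ and $\ltlF\ltlG$ are suffix-invariant, the statement over all $\psi \in M$ is exactly $M \subseteq \GF{\varphi}{w}{0}$ and over all $\psi \in N$ exactly $N \subseteq \FG{\varphi}{w}{0}$; the two sets are disjoint since their members are rooted with different operators. The measure $|\psi|$ works because the certification of a formula $\psi \in M$ in the hypothesis is expressed via $\placeholder[N\rw{\compseq{\varphi}{t}}]_\mu$, which touches only the $\nu$-subformulae of $\psi$ — proper subformulae, as $\psi$ is itself a $\mu$-formula — and symmetrically for $\psi \in N$; moreover $\placeholder\rw{\compseq{\varphi}{t}}$ only flips weak and strong past operators, hence preserves node counts, so whenever a $\mu$- (resp.\ $\nu$-) subformula of the rewritten $\psi$ happens to equal $\chi\rw{\compseq{\varphi}{t}}$ for some $\chi \in M$ (resp.\ $\chi \in N$), that $\chi$ is still strictly smaller than $\psi$ and the induction hypothesis applies to it.

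For the inductive step with $\psi \in N$, the hypothesis yields a $t_0$ such that $w_{t_0} \models \ltlG\bigl(\psi\rw{\compseq{\varphi}{t_0}}[M\rw{\compseq{\varphi}{t_0}}]_\nu\bigr)$, i.e.\ $(w_{t_0}, k) \models \psi\rw{\compseq{\varphi}{t_0}}[M\rw{\compseq{\varphi}{t_0}}]_\nu$ for all $k$. I would invoke Lemma~\ref{lemma_MN_properties}.2 with the formula $\psi\rw{\compseq{\varphi}{t_0}}$, the word $w_{t_0}$, the index $k$, and the set $M\rw{\compseq{\varphi}{t_0}}$ (harmlessly restricted to the $\mu$-subformulae occurring in $\psi\rw{\compseq{\varphi}{t_0}}$). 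Its hypothesis reduces, via the arbitrary-position strengthening of Lemma~\ref{lemma_entailed_sequence} (immediate from the fact that Proposition~\ref{proposition_weakening_correct} is stated at an arbitrary position) and the suffix-invariance of $\ltlG\ltlF$, to $w \models \ltlG\ltlF\chi$ for each $\chi \in M$ whose rewrite is a $\mu$-subformula of $\psi\rw{\compseq{\varphi}{t_0}}$ — delivered by the induction hypothesis. Thus $(w_{t_0}, k) \models \psi\rw{\compseq{\varphi}{t_0}}$ for every $k$, so $(w, t) \models \psi$ for all $t \geq t_0$, that is, $w \models \ltlF\ltlG\psi$.

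For the inductive step with $\psi \in M$, the crux is to use the $\forall s \, \exists t \geq s$ form of the hypothesis. By the induction hypothesis every $\chi \in N$ satisfies $w \models \ltlF\ltlG\chi$; let $\hat{s}$ be a position from which $\ltlG\chi$ holds for all of the (finitely many) such $\chi$ simultaneously. For an arbitrary target $s_0$, instantiate the hypothesis at $s = \max(s_0, \hat{s})$ to obtain $t \geq s$ and, from $w_t \models \ltlF\bigl(\psi\rw{\compseq{\varphi}{t}}[N\rw{\compseq{\varphi}{t}}]_\mu\bigr)$, an index $k$ with $(w_t, k) \models \psi\rw{\compseq{\varphi}{t}}[N\rw{\compseq{\varphi}{t}}]_\mu$. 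Apply Lemma~\ref{lemma_MN_properties}.4 with the formula $\psi\rw{\compseq{\varphi}{t}}$, word $w_t$, index $k$, and set $N\rw{\compseq{\varphi}{t}}$; its hypothesis reduces to $(w, t) \models \ltlG\chi$ for each $\chi \in N$ whose rewrite is a $\nu$-subformula of $\psi\rw{\compseq{\varphi}{t}}$, which holds because $t \geq \hat{s}$. Hence $(w_t, k) \models \psi\rw{\compseq{\varphi}{t}}$, so $(w, t+k) \models \psi$ with $t + k \geq s_0$; as $s_0$ was arbitrary, $\psi$ holds at unboundedly many positions, i.e.\ $w \models \ltlG\ltlF\psi$.

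I expect the main obstacle to be the routine-but-fiddly bookkeeping around $\placeholder\rw{\compseq{\varphi}{t}}$: verifying that the identities of $\mu$- and $\nu$-subformulae, and the side conditions of Lemma~\ref{lemma_MN_properties}, are transported faithfully through both the past-rewrite and the suffix shift — which rests on $\compseq{\varphi}{t}$ being the ``correct'' rewrite set at position $t$ and on the subformula/arbitrary-position form of Lemma~\ref{lemma_entailed_sequence}. The single load-bearing idea is in the $\psi \in M$ case: the witnessing position $t$ must be pushed out far enough that every $\nu$-subformula in $N$ has already stabilised, which is precisely why that hypothesis is phrased with $\forall s \, \exists t \geq s$ rather than position-by-position, and is where the $N$-part of the induction hypothesis is consumed.
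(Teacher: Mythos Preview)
Your proposal is correct and follows essentially the same approach as the paper: both argue by induction along the subformula ordering of $M \cup N$ (you via $|\psi|$, the paper via an explicit enumeration $\psi_1,\dots,\psi_n$ with accumulating sets $M_i,N_i$), and both discharge the $M$- and $N$-cases by invoking Lemma~\ref{lemma_MN_properties}.4 and~.2 respectively together with Lemma~\ref{lemma_entailed_sequence}, pushing the witnessing index for the $M$-case past a point where the relevant $\nu$-formulae have stabilised (you do this directly via $\hat{s}$, the paper via Lemma~\ref{lemma_stability_two}). One small wording slip: when you write ``by the induction hypothesis every $\chi \in N$ satisfies $w \models \ltlF\ltlG\chi$'' you of course only get this for those $\chi \in N$ with $|\chi| < |\psi|$, but since $\psi$ is a $\mu$-formula every $\chi \in N$ actually needed in the application of Lemma~\ref{lemma_MN_properties}.4 is a proper subformula of $\psi$, so the argument goes through as you later make explicit.
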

\begin{proof}
Fix an enumeration $\psi_1, \psi_2, \dots, \psi_n$ of the subformulae of $M \cup N$ such that for all $i, j \in [1..n]$, if $\psi_i$ is a proper subformula of $\psi_j$, then $i < j$. We define the (unique) sequences
    \begin{equation*}
    \begin{aligned}
        M_0 &\coloneqq \varnothing & 
        M_i &\coloneqq
        \begin{cases}
            M_{i-1} \cup \{ \psi_i \} &\mbox{ if } \psi_i \in \mu(\varphi) \\
            M_{i-1} &\mbox{ if } \psi_i \in \nu(\varphi)
        \end{cases}
        \\
        N_0 &\coloneqq \varnothing & 
        N_i &\coloneqq
        \begin{cases}
            N_{i-1} &\mbox{ if } \psi_i \in \mu(\varphi) \\
            N_{i-1} \cup \{ \psi_i \} &\mbox{ if } \psi_i \in \nu(\varphi).
        \end{cases}
    \end{aligned}
    \end{equation*}
Note that $M_n = M$ and $N_n = N$, and for all $i \in [1..n]$ exactly one of $\psi_i \in M_i$ and $\psi_i \in N_i$ holds. We prove the lemma by induction on $n$. The base case with $n = 0$ is vacuously true, so we proceed by proving the inductive step with $n = i > 0$.

Assume $\psi_i \in M$. By Lemma~\ref{lemma_stability_two} and the assumption of $M$, there exists an infinite number of indices $r$ such that $\G{\varphi}{w}{r} = \FG{\varphi}{w}{r} = \FG{\varphi}{w}{0}$ and $w_r \models \ltlF (\psi_i\rw{\compseq{\varphi}{r}}[N_i\rw{\compseq{\varphi}{r}}]_\mu)$. For such an index $r$, we have by the inductive hypothesis that $\G{\varphi}{w}{r} = \FG{\varphi}{w}{0} \supseteq N_{i-1} = N_i$, so that $N_i\rw{\compseq{\varphi}{r}} \subseteq \G{\varphi\rw{\compseq{\varphi}{r}}}{w_r}{0}$. Lemma~\ref{lemma_MN_properties}.4 establishes that $w_r \models \ltlF (\psi_i\rw{\compseq{\varphi}{r}})$. By Lemma~\ref{lemma_entailed_sequence}, this means that $(w, r) \models \ltlF \psi_i$. Since the index $r$ is one of infinitely many, we have $\psi_i \in \GF{\varphi}{w}{0}$.

Assume $\psi_i \in N$. Then $w_r \models \ltlG (\psi_i\rw{\compseq{\varphi}{r}}[M_i\rw{\compseq{\varphi}{r}}]_\nu)$ for some index $r$, by assumption of $N$. By the inductive hypothesis, we have $\GF{\varphi}{w}{r} = \GF{\varphi}{w}{0} \supseteq M_{i-1} = M_i$, so that $M_i\rw{\compseq{\varphi}{r}} \subseteq \GF{\varphi\rw{\compseq{\varphi}{r}}}{w_r}{0}$. Lemma~\ref{lemma_MN_properties}.2 establishes that $w_r \models \ltlG (\psi_i\rw{\compseq{\varphi}{r}})$. By Lemma~\ref{lemma_entailed_sequence}, this means that $w \models \ltlF \ltlG \psi_i$, or $\psi_i \in \FG{\varphi}{w}{0}$.
\end{proof}

The Master Theorem follows from the two lemmata.
It complements the results about the formulae that hold infinitely often and eventually always by stating the connection between formulae holding and being able to prove that the rewrite of the main formula to a safety formula combined with the application of weakening on it.

\theoremmaster*
\begin{proof}
    $(\Rightarrow)$: Let $M = \GF{\varphi}{w}{0}$ and $N = \FG{\varphi}{w}{0}$. Properties 2) and 3) hold by Lemma~\ref{lemma_MN_=>}. By assumption $w \models \varphi$, which implies that $w_r \models \afc(\varphi, w_{0r}, \entseq{\varphi}{r})$ by Lemma~\ref{lemma_entailment_correct}. Since $\F{\varphi}{w}{r} = \GF{\varphi}{w}{r} = \GF{\varphi}{w}{0} \subseteq M$, we get $\F{\varphi\rw{\compseq{\varphi}{r}}}{w_r}{0} \subseteq M\rw{\compseq{\varphi}{r}}$. Applying Lemma~\ref{lemma_MN_properties}.1, we get $w_r \models \afc(\varphi, w_{0r}, \entseq{\varphi}{r})[M\rw{{\compseq{\varphi}{r}}}]_\nu$.

    $(\Leftarrow)$: Assume properties 1) through 3) hold. That $M \subseteq \GF{\varphi}{w}{0} = \GF{\varphi}{w}{r}$ follows from Lemma~\ref{lemma_MN_<=}, which implies that $M\rw{\compseq{\varphi}{r}} \subseteq \GF{\varphi\rw{\compseq{\varphi}{r}}}{w_r}{0}$. 
    Applying Lemma~\ref{lemma_MN_properties}.2 to the assumption 
    that $w_r \models \afc(\varphi, w_{0r}, \entseq{\varphi}{r})[M\rw{\compseq{\varphi}{r}}]_\nu$ implies $w_r \models \afc(\varphi, w_{0r}, \entseq{\varphi}{r})$. Finally, from Lemma~\ref{lemma_entailment_correct} we get $w \models \varphi$.
\end{proof}

\subsection{Section \ref{section_toautomata}}

We now prove the correctness of the construction of the DRA.

We start by stating properties of the after-function when applied to safety formulae and guarantee formulae.
Namely, a guarantee formula is eventually rewritten by the after-function to $\top$.
Dually, a safety formula is never rewritten by the after-function to $\bot$. 
This is a result of Esparza et al.~\cite{esparza_unified_translation}, which immediately extends to the case of formulae including past. 

\begin{lemma}
\label{lemma_mu_nu_limits}
    Let $\varphi$ be a formula and $w$ a word. Then
    \begin{equation*}
    \begin{aligned}
        &1) \mbox{ If } \varphi \in \mu\mbox{-}\mathit{pLTL} \mbox{ then } w \models \varphi \Leftrightarrow \exists t \, . \, \af(\varphi, w_{0t}) \sim \top. \\
        &2) \mbox{ If } \varphi \in \nu\mbox{-}\mathit{pLTL} \mbox{ then } w \models \varphi \Leftrightarrow \forall t \, . \, \af(\varphi, w_{0t}) \not\sim \bot.
    \end{aligned}
    \end{equation*}
\end{lemma}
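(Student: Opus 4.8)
The plan is to split each biconditional into an easy half, which holds for all formulae and follows at once from the correctness of the after-function, and a hard half, which is where the fragment restriction is essential; the two easy halves are the single observation ``$\af(\varphi,w_{0t})$ semantically equivalent to a constant $\Rightarrow$ corresponding membership of $w$'', and the two hard halves are the heart of the matter and are handled by symmetric arguments. For the easy halves I would first record that propositional equivalence refines semantic equivalence: given a word $u$ and $s$, letting $\mathcal{I}_{u,s}$ be the set of formulae satisfied by $u$ at $s$, a one-line induction on formula structure gives $(u,s) \models \chi \Leftrightarrow \mathcal{I}_{u,s} \models_p \chi$, hence $\varphi \sim \psi$ implies $\mathcal{L}(\varphi) = \mathcal{L}(\psi)$. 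With this, $\af(\varphi, w_{0t}) \sim \top$ forces $w_t \models \af(\varphi, w_{0t})$, so $w \models \varphi$ by Theorem~\ref{theorem_af_correct}; dually $w \models \varphi$ forces $w_t \models \af(\varphi, w_{0t})$, which is incompatible with $\af(\varphi, w_{0t}) \sim \bot$.

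For the hard halves, the claim is exactly the corresponding statement of Esparza et al.~\cite{esparza_unified_translation} for future $\mathit{LTL}$, and my plan is to redo their structural induction and verify it absorbs the past. For 1) I would prove, by induction on $\varphi \in \mu$-$\mathit{pLTL}$, that $w \models \varphi$ implies $\af(\varphi, w_{0t}) \sim \top$ for some $t$, using throughout that $\top$ is absorbing for $\lor$ and $\af(\top,\sigma)=\top$ --- so a single disjunct reaching $\top$ suffices and $\{t : \af(\psi, w_{0t}) \sim \top\}$ is upward closed --- and treating $\land$ by taking the maximum of the two indices from the induction hypothesis. The case $\ltlX\psi$ shifts to $w_1$. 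For $\psi \ltlU \xi$ and $\psi \ltlM \xi$, satisfaction supplies a finite witness $r$ with $(w,r)\models\xi$, and unfolding the operator through the after-function exposes at step $r$ a disjunct that is a rewrite of $\af(\xi,\cdot)$ on the suffix $w_r$, which the induction hypothesis on $\xi$ drives to $\top$; absorption then lifts this to $\af(\varphi, w_{0t})$ for large $t$. The genuinely new cases are the easiest: $\afc$ eliminates any surface past operator in one step, sending $\ltlY\psi$ to $\bot$, $\ltlWY\psi$ to $\top$, and a binary past formula to $\afc(\wc(\cdot),\sigma,C)$ with $\wc(\cdot)$ a Boolean combination of proper subformulae; since $\psi \ltlS \xi$, $\psi \ltlWS \xi$, $\psi \ltlNWS \xi$, and $\psi \ltlNS \xi$ are all initially equivalent to their weakening conditions, index-$0$ satisfaction of such a formula coincides with satisfaction of $\wc(\cdot)$, a strictly smaller formula of the same (subformula-closed) fragment, and the induction hypothesis applies. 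Part 2) is the dual argument: from $w \not\models \varphi$ with $\varphi \in \nu$-$\mathit{pLTL}$, extract a failure witness and drive every disjunct accumulated by unfolding $\ltlWsymb$/$\ltlRsymb$ down to $\bot$ by choosing $t$ beyond all the collapse points supplied by the induction hypotheses, using $\af(\bot,\sigma)=\bot$, while the past operators again collapse immediately.

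The step I expect to be the real obstacle is the interaction of the after-function's disjunction over $C \in 2^{\psf(\varphi)}$ and the past-update $\puc$ with this unfolding: when unfolding $\ltlUsymb$ or $\ltlWsymb$ reintroduces the past subformulae of a proper subformula, it does so under some weakening and conjoins extra weakening conditions. I would need to argue (i) that along the actual word the weakenings occurring are precisely those dictated by the sequence of entailed sets --- invoking Lemma~\ref{lemma_entailment_correct}, Lemma~\ref{lemma_entailment}, and the identities around $\varphi\rw{\compseq{\varphi}{t}}$ --- so that the distinguished disjunct genuinely behaves like a fresh application of $\af$ to the subformula on the suffix, enabling the induction hypothesis; and (ii) that the weakening conditions that $\puc$ conjoins, being subformulae of $\varphi$ and hence in the same fragment, are themselves driven to $\top$ (respectively $\bot$) by the induction hypothesis, so they neither block a disjunct's collapse to $\top$ nor obstruct the collapse to $\bot$. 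Once this synchronisation with the entailed-sets machinery is in place, the remainder is the Esparza et al.\ argument essentially verbatim.
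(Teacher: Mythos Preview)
Your plan is correct and follows the approach the paper has in mind, but note that the paper's own ``proof'' of this lemma consists of a single sentence: it asserts that the argument is very similar to the corresponding one in Esparza et al.\ and omits all details. In that sense you have already written considerably more than the paper does, and your outline---split into easy halves via Theorem~\ref{theorem_af_correct}, then redo the Esparza et al.\ structural induction for the hard halves, with the observation that surface past operators collapse after a single application of $\afc$ to either a constant or to $\afc$ applied to a strictly smaller weakening condition---is exactly the extension the paper is gesturing at.

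One technical point worth flagging for when you actually carry this out: your treatment of $\land$ (``take the maximum of the two indices'') implicitly uses that $\af(\psi\land\xi,w_{0t})$ behaves like $\af(\psi,w_{0t})\land\af(\xi,w_{0t})$, but the disjunction over $C\in 2^{\psf(\varphi)}$ in the definition of $\af$ couples the two conjuncts, so this identity does not hold propositionally in general. The clean way around this---which you essentially anticipate in your discussion of the ``distinguished disjunct''---is to run the entire structural induction for $\afc(\placeholder,w_{0t},\entseq{\varphi}{t})$ rather than for $\af$ directly. Since $\afc$ with a fixed sequence genuinely distributes over $\land$ and $\lor$, the Esparza et al.\ induction goes through verbatim there; the added weakening-condition conjuncts from $\puc$ are, as you note, satisfied along the entailed sequence (this is precisely the content of Lemma~\ref{lemma_entailment_correct}), so they become $\top$ and are harmless. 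Then pass from $\afc$ to $\af$ at the very end: for part~1), $\afc(\varphi,w_{0t},\entseq{\varphi}{t})\sim\top$ appears as one disjunct when $\af(\varphi,w_{0t})$ is fully expanded over all sequences of sets, forcing $\af(\varphi,w_{0t})\sim\top$; for part~2), Lemma~\ref{lemma_entailment} gives $w_t\models\af(\varphi,w_{0t})\Leftrightarrow w_t\models\afc(\varphi,w_{0t},\entseq{\varphi}{t})$, so if $\af$ ever collapsed to $\bot$ so would the $\afc$ with the entailed sequence. With that adjustment your proposal is complete.
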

\begin{proof}
The proof of these properties is very similar to that in Esparza et al.~\cite{esparza_unified_translation} and is omitted.
\end{proof}

The automata that we are going to construct are going to apply consecutive weakenings.
Our next steps is to delve into the workings of repeated weakenings and how they combine with the conversion of formulae to safety and guarantee formulae.
Before that, as a technicality, we extend the rewriting of conditions that are used as part of the WC automaton to finite words.

\begin{definition}[The extended rewrite condition function]
We extend the definition of $\cwc$ to finite words of length $n$ in the following way:
\begin{equation*}
\begin{aligned}
    \cwc(\psi^\times, \epsilon) &\coloneqq \psi^\times &&(n = 0) \\
    \cwc(\psi^\times, w_{0n}) &\coloneqq \cwc(\cwc(\psi^\times, w_{0(n-1)}), w_{(n-1)n)}) &&(n > 0).
\end{aligned}
\end{equation*}
\end{definition}

We turn now to the properties of weakenings.
We show that the sets of possible weakenings that we consider are rich enough to develop the correct weakening step by step.
That is, given a current weakening, there is a way to use it on top of one of a basic set of weakenings to get to the next correct weakening. 

\begin{lemma}
\label{lemma_double_context}
    Let $w$ be a word and $t \in \mathbb{N}$. Then
    \begin{equation*}
        \compseq{\varphi}{t + 1}\rw{\compseq{\varphi}{t}} = \entcon{\varphi}{t + 1}.
    \end{equation*}
\end{lemma}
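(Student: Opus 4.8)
The plan is to unfold both sides of the claimed identity and verify that they are the same set of past formulae, leaning on two facts recorded earlier in the text. The first is the ``same effect'' property of $\circ$: since $\varphi\rw{\circ\vec{C}} = \varphi\rw{C_0}\rw{C_1}\dots\rw{C_t}$ holds for every sequence $\vec{C}$, and since the membership tests performed by $\rw{\placeholder}$ are local to each subformula (so the rewrite of a subtree does not depend on its surrounding context), restricting both sides to the subtree rooted at any subformula $\chi$ yields $\chi\rw{\compseq{\varphi}{s}} = \chi\rw{\entcon{\varphi}{0}}\rw{\entcon{\varphi}{1}}\dots\rw{\entcon{\varphi}{s}}$ for every $\chi \in \psf(\varphi)$ and every $s \in \mathbb{N}$. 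The second is that a rewrite preserves the shape of the syntax tree and the past/non-past status of every node, so that $\psf(\varphi\rw{C}) = \{ \chi\rw{C} \mid \chi \in \psf(\varphi) \}$ for every set of past formulae $C$; in particular, the elements of $\entcon{\varphi}{t+1} \subseteq \psf(\varphi\rw{\compseq{\varphi}{t}})$ are exactly of the form $\psi\rw{\compseq{\varphi}{t}}$ with $\psi \in \psf(\varphi)$.

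The crux is a membership characterization of $\compseq{\varphi}{t+1}$: for $\psi \in \psf(\varphi)$, I claim $\psi \in \compseq{\varphi}{t+1}$ iff $\psi\rw{\compseq{\varphi}{t}} \in \entcon{\varphi}{t+1}$. By definition of $\circ$, $\psi \in \compseq{\varphi}{t+1}$ iff $\psi\rw{\entcon{\varphi}{0}}\dots\rw{\entcon{\varphi}{t+1}}$ equals its own weakening, i.e.\ has a weak past operator at its root; by the first fact this formula equals $(\psi\rw{\compseq{\varphi}{t}})\rw{\entcon{\varphi}{t+1}}$. Writing $\zeta \coloneqq \psi\rw{\compseq{\varphi}{t}}$, which is again a past formula, a one-line case split on the definitions of weakening and strengthening gives the general fact: for any past formula $\zeta$ and any set $C$, the root of $\zeta\rw{C}$ is weak exactly when $\zeta \in C$ (if $\zeta \in C$ the outermost step applies $(\cdot)_\mathcal{W}$, which always yields a weak root, whether or not the root was already weak; if $\zeta \notin C$ it applies $(\cdot)_\mathcal{S}$, which always yields a strong root). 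Applying this with $C = \entcon{\varphi}{t+1}$ yields the claimed characterization.

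With the characterization in hand the lemma is immediate. Unfolding the left-hand side, $\compseq{\varphi}{t+1}\rw{\compseq{\varphi}{t}} = \{ \psi\rw{\compseq{\varphi}{t}} \mid \psi \in \psf(\varphi) \text{ and } \psi\rw{\compseq{\varphi}{t}} \in \entcon{\varphi}{t+1} \}$, which is contained in $\entcon{\varphi}{t+1}$ by inspection. Conversely, any $\zeta \in \entcon{\varphi}{t+1}$ lies in $\psf(\varphi\rw{\compseq{\varphi}{t}})$ by Definition~\ref{definition_set_entailed}, hence equals $\psi\rw{\compseq{\varphi}{t}}$ for some $\psi \in \psf(\varphi)$ by the second fact, and then this $\psi$ witnesses $\zeta \in \compseq{\varphi}{t+1}\rw{\compseq{\varphi}{t}}$. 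The argument is uniform in $t$, so no separate treatment of $t = 0$ is needed.

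The only real subtlety I anticipate is exactly the non-locality that motivates the saturation notion of Appendix~\ref{appendix_reach}: whether a node is weakened at step $t+1$ depends on the already-transformed formula $\psi\rw{\compseq{\varphi}{t}}$ currently sitting there, not on $\psi$ itself, so one must resist reasoning ``positionwise on $\varphi$ directly''. The ``same effect'' property of $\circ$ is precisely the device that lets one switch between the iterated-rewrite view and the single-set view, and it is also what keeps the proof from needing an induction on $t$; everything else is routine unfolding of definitions.
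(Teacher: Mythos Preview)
Your argument is correct and follows essentially the same route as the paper's proof. Both hinge on the characterization that, for $\psi \in \psf(\varphi)$, one has $\psi \in \compseq{\varphi}{t+1}$ precisely when $\psi\rw{\compseq{\varphi}{t}} \in \entcon{\varphi}{t+1}$ (equivalently, when $w_t \models \wc(\psi\rw{\compseq{\varphi}{t}})$), after which the equality reduces to a change of variables via the bijection $\psf(\varphi) \to \psf(\varphi\rw{\compseq{\varphi}{t}})$. The paper compresses all of this into a four-line chain of set equalities; you make the intermediate facts explicit---in particular the ``root of $\zeta\rw{C}$ is weak iff $\zeta \in C$'' observation and the two-sided inclusion---but the underlying idea is identical.
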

\begin{proof}
We have,
\begin{equation*}
\begin{aligned}
    \compseq{\varphi}{t + 1}\rw{\compseq{\varphi}{t}} &= 
    \{ \psi \mid \psi \in \psf(\varphi) \mbox{ and } w_t \models \wc(\psi\rw{\compseq{\varphi}{t}})
    \}\rw{\compseq{\varphi}{t}} \\
    &= \{ \psi\rw{\compseq{\varphi}{t}} \mid \psi \in \psf(\varphi) \mbox{ and } w_t \models \wc(\psi\rw{\compseq{\varphi}{t}})
    \} \\
    &= \{ \psi \mid \psi \in \psf(\varphi\rw{\compseq{\varphi}{t}}) \mbox{ and } w_t \models \wc(\psi)
    \} \\
    &= \entcon{\varphi}{t + 1}.
\end{aligned}
\end{equation*}
\end{proof}

We continue with our quest for defining the correct weakening in terms of basic weakenings.
This will be required when the automata we construct blindly try out all possible weakenings (but add conditions that make sure that only correct weakenings will be used).

\begin{lemma}
\label{lemma_context_sequence_exists}
    Let $w$ be a word. Then there exists a sequence of indices $i_0, i_1, \dots$ such that $C_{i_t} = \compseq{\varphi}{t}$ for all $t \geq 0$.
\end{lemma}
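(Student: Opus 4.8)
The plan is to simply unwind the notation $\compseq{\varphi}{t}$ and observe that it is literally an element of the power set that is being enumerated. Recall from Section~\ref{section_encodingthepast} that, for the fixed word $w$, the sequence $\entcon{\varphi}{0}, \entcon{\varphi}{1}, \dots$ is well defined, and that $\compseq{\varphi}{t}$ denotes the set $\circ\,\entseq{\varphi}{t}$, which by definition equals
$\{ \psi \in \psf(\varphi) \mid \psi\rw{\entcon{\varphi}{0}}\rw{\entcon{\varphi}{1}}\cdots\rw{\entcon{\varphi}{t}} = (\psi\rw{\entcon{\varphi}{0}}\rw{\entcon{\varphi}{1}}\cdots\rw{\entcon{\varphi}{t}})_\mathcal{W} \}$.
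The first step is thus to record the (syntactic) observation that $\compseq{\varphi}{t} \subseteq \psf(\varphi)$, i.e.\ $\compseq{\varphi}{t} \in 2^{\psf(\varphi)}$, for every $t \in \mathbb{N}$.

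The second and final step invokes the standing assumption of this section that $C_1, C_2, \dots, C_k$ is an enumeration of the finite set $2^{\psf(\varphi)}$. Since every element of a set occurs in any enumeration of it, for each $t \geq 0$ there is some index $i_t \in [1..k]$ with $C_{i_t} = \compseq{\varphi}{t}$; selecting one such index for each $t$ yields the desired sequence $i_0, i_1, i_2, \dots$. (For $t = 0$ the index is in fact forced: by the assumed convention $C_1 = \{\psi \in \psf(\varphi) \mid \psi = \psi_\mathcal{W}\} = \entcon{\varphi}{0} = \compseq{\varphi}{0}$, so one may take $i_0 = 1$.)

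There is no real obstacle here: the lemma is essentially a bookkeeping statement that the blind, exhaustive enumeration of candidate weakening sets used by the automata of Section~\ref{section_toautomata} does contain, at every time step, the ``correct'' set $\compseq{\varphi}{t}$ produced by the word. The only point requiring a moment's care is to expand the macro $\compseq{\varphi}{t}$ far enough to see that it is genuinely a member of $2^{\psf(\varphi)}$ rather than, say, a set of rewritten formulae; once that is noted, neither induction on $t$ nor any case analysis is needed.
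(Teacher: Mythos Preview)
Your proof is correct. Both your argument and the paper's establish the lemma, but by slightly different routes.

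You proceed directly: unfold the definition of $\compseq{\varphi}{t}$ as the set $\{\psi \in \psf(\varphi) \mid \psi\rw{\entcon{\varphi}{0}}\cdots\rw{\entcon{\varphi}{t}} = (\psi\rw{\entcon{\varphi}{0}}\cdots\rw{\entcon{\varphi}{t}})_\mathcal{W}\}$, observe that this is manifestly an element of $2^{\psf(\varphi)}$, and conclude that the fixed enumeration $C_1,\dots,C_k$ must hit it. No induction is needed.

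The paper instead gives an explicit inductive construction: it sets $i_0 = 1$ and, for $t \geq 0$, defines $i_{t+1}$ to be the index of the set $\{\psi \in \psf(\varphi) \mid w_t \models \wc(\psi\rw{C_{i_t}})\}$, then appeals to a ``simple inductive argument'' to verify that this recurrence indeed yields $C_{i_t} = \compseq{\varphi}{t}$. This is more constructive---it exhibits a one-step recurrence for computing the correct weakening set from the previous one---but for the bare existence claim of the lemma it is strictly more work than your approach, since the inductive verification is left implicit and your argument needs no such step. The recurrence the paper writes down is morally the content of Lemma~\ref{lemma_double_context}, so the paper's proof quietly re-derives that connection; your proof simply sidesteps it.
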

\begin{proof}
    Let $i_0 = 1$, so that $C_{i_0} = \{ \psi \mid \psi \in \psf(\varphi) \mbox{ and } \psi = \psi_\mathcal{W} \} = \compseq{\varphi}{0}$. For $t \geq 0$, we inductively define $i_{t + 1}$ to be the index such that
    \begin{equation*}
        C_{i_{t + 1}} \coloneqq \{
        \psi \mid \psi \in \psf(\varphi) \mbox{ and } w_t \models \wc(\psi\rw{C_{i_t}}) 
        \}.
    \end{equation*}
    A simple inductive argument shows that $i_0, i_1, \dots$ is the sought sequence.
\end{proof}

Getting close to constructs that appear in the automata, we show that the information maintained by the WC automaton interacts well with the conversion to safety and guarantee.
Namely, we are able to take the weakening conditions maintained by the WC automaton and turn them into safety or guarantee formula.
In order to do that, we update the original $\mu$- and $\nu$-formulae that are subformulae of $\varphi$ by weakening them.
If we manage to prove that the weakened safety/guarantee weakening condition holds, then the weakenings we used are correct.
That is, we try all possible weakenings and check when applying one that its weakening condition actually works and interacts well with the conversion to safety and guarantee automata. 

\begin{lemma}
\label{lemma_rc_correct}
    Let $w = \sigma_0, \sigma_1, \dots$ be a word and $i_0, i_1, \dots$ a sequence of indices such that $C_{i_t} = \compseq{\varphi}{t}$ for all $t \in \mathbb{N}$, where $i_0 = 1$. Let $\zeta = \langle \top,\bot,\ldots, \bot\rangle$. Assume $\cwc(\zeta, w_{0t}) = \langle \psi^t_1, \psi^t_2, \dots, \psi^t_k \rangle$. Then
    \begin{equation*}
        1) \ w_t \models \psi^t_{i_t},
    \end{equation*}
    for all $t \in \mathbb{N}$. Moreover, if $M \subseteq \mu(\varphi)$, then
    \begin{equation*}
        2) \ \mbox{If } M \subseteq \GF{\varphi}{w}{0} \mbox{ then } w_t \models \psi^t_i[M\rw{C_i}]_\nu \Rightarrow C_i \subseteq \compseq{\varphi}{t},
    \end{equation*}
    for all $t$ and $i \in [1..k]$. Finally, if $N \subseteq \nu(\varphi)$ and $w$ is stable with respect to $\varphi$ at time $r$,  then
    \begin{equation*}
        3) \ \mbox {If } N \subseteq \FG{\varphi}{w}{0} \mbox{ then } w_t \models \psi^t_i[N\rw{C_i}]_\mu \Rightarrow C_i \subseteq \compseq{\varphi}{t},
    \end{equation*}
    for all $t \geq r$ and $i \in [1..k]$.
\end{lemma}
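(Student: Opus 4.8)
The plan is to prove the three parts by induction on $t$, with part~1 proved on its own and parts~2 and~3 building on it (part~3 only for $t \geq r$). Throughout, the two workhorses are: (i) the single-step correctness of the (local) after-function, i.e.\ the $t=1$ instance of Lemma~\ref{lemma_entailment_correct}/Theorem~\ref{theorem_af_correct} applied to a suffix $w_t$; and (ii) the observation---which follows by unfolding Definition~\ref{definition_set_entailed} and using Lemma~\ref{lemma_double_context}---that a past subformula $\xi \in \psf(\varphi)$ lies in $\compseq{\varphi}{t+1}$ iff $w_t \models \wc(\xi\rw{\compseq{\varphi}{t}})$. For part~1 the base case is immediate ($\cwc(\zeta,\epsilon)=\zeta$, $i_0=1$, so $\psi^0_{i_0}=\top$). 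In the step I expand $\langle\psi^{t+1}_1,\dots,\psi^{t+1}_k\rangle = \cwc(\langle\psi^t_1,\dots,\psi^t_k\rangle,\sigma_t)$ and, to witness $w_{t+1}\models\psi^{t+1}_{i_{t+1}}$, select the disjunct indexed by $j = i_t$; this is legal because $C_{i_t} = \compseq{\varphi}{t} \preceq \compseq{\varphi}{t+1} = C_{i_{t+1}}$, so $i_t \in J_{i_{t+1}}$, and Lemma~\ref{lemma_double_context} gives $C_{i_{t+1}}\rw{C_{i_t}} = \entcon{\varphi}{t+1}$. The first conjunct $\afc(\psi^t_{i_t},\sigma_t,\entcon{\varphi}{t+1})$ is satisfied at $w_{t+1}$ by the induction hypothesis $w_t\models\psi^t_{i_t}$ together with (i); each conjunct $\afc(\wc(\xi\rw{C_{i_t}}),\sigma_t,\entcon{\varphi}{t+1})$ with $\xi\in C_{i_{t+1}}$ is satisfied because $\xi \in \compseq{\varphi}{t+1}$ gives $w_t\models\wc(\xi\rw{\compseq{\varphi}{t}})$ by (ii), and then (i) again.

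For part~2 I induct on $t$. The base case mirrors part~1, using that $[\placeholder]_\nu$ fixes $\top$ and $\bot$ and that $C_1 = \compseq{\varphi}{0}$. In the step, assume $w_{t+1}\models\psi^{t+1}_i[M\rw{C_i}]_\nu$; since $[\placeholder]_\nu$ distributes over $\land$ and $\lor$, some disjunct index $j_0 \in J_i$ satisfies both $w_{t+1}\models\afc(\psi^t_{j_0},\sigma_t,C_i\rw{C_{j_0}})[M\rw{C_i}]_\nu$ and $w_{t+1}\models\afc(\wc(\xi\rw{C_{j_0}}),\sigma_t,C_i\rw{C_{j_0}})[M\rw{C_i}]_\nu$ for every $\xi\in C_i$. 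From the former I extract $C_{j_0}\subseteq\compseq{\varphi}{t}$ via the induction hypothesis: this requires transporting that statement to $w_t\models\psi^t_{j_0}[M\rw{C_{j_0}}]_\nu$, using that $\afc(\placeholder,\sigma,C)$ occurs as a disjunct of $\af(\placeholder,\sigma)$, the correctness of $\af$, and a commutation of the after-function with $[\placeholder]_\nu$ and with the nested rewrite $\rw{\placeholder}$ that is valid precisely because $j_0\in J_i$ means $C_{j_0}\preceq C_i$ (so $\varphi\rw{C_{j_0}}\rw{C_i\rw{C_{j_0}}} = \varphi\rw{C_i}$ by Lemma~\ref{lemma_saturation}). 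Given $C_{j_0}\subseteq\compseq{\varphi}{t}$, I treat the weakening-condition conjuncts: each is a disjunct of $\af(\wc(\xi\rw{C_{j_0}}),\sigma_t)[M\rw{C_i}]_\nu$, I remove $[M\rw{C_i}]_\nu$ using Lemma~\ref{lemma_MN_properties}.2 (its hypothesis following from $M\subseteq\GF{\varphi}{w}{0}$ together with Lemma~\ref{lemma_entailed_sequence} and the suffix-invariance of $\ltlG\ltlF$), apply correctness of $\af$ to get $w_t\models\wc(\xi\rw{C_{j_0}})$, and finish with the monotonicity fact that $C_{j_0}\subseteq C$ implies $\wc(\xi\rw{C_{j_0}})$ entails $\wc(\xi\rw{C})$ (a short case analysis on the root of $\xi$, using Proposition~\ref{proposition_rewrite_subset} on the operands). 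This gives $w_t\models\wc(\xi\rw{\compseq{\varphi}{t}})$, i.e.\ $\xi\in\compseq{\varphi}{t+1}$ by (ii); since $\xi\in C_i$ was arbitrary, $C_i\subseteq\compseq{\varphi}{t+1}$.

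Part~3 is the dual of part~2: induct on $t \geq r$, replacing $[\placeholder]_\nu$ by $[\placeholder]_\mu$, $M$ by $N$, $\GF{\placeholder}{\placeholder}{\placeholder}$ by $\G{\placeholder}{\placeholder}{\placeholder}$, and Lemma~\ref{lemma_MN_properties}.2 by Lemma~\ref{lemma_MN_properties}.4; the hypothesis of the latter is available for $t\geq r$ because stability at $r$ gives $N\subseteq\FG{\varphi}{w}{0} = \G{\varphi}{w}{t}$, hence $N\rw{\compseq{\varphi}{t}}\subseteq\G{\varphi\rw{\compseq{\varphi}{t}}}{w_t}{0}$ by Lemma~\ref{lemma_entailed_sequence}. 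The only step that is not a verbatim transcription of part~2 is the base case $t=r$, where $\psi^r_i$ is not a fresh tuple entry; there one argues directly---via stability, part~1, and Lemma~\ref{lemma_MN_properties}.4---that satisfaction of $\psi^r_i[N\rw{C_i}]_\mu$ still forces $C_i\subseteq\compseq{\varphi}{r}$.

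The principal obstacle is the commutation machinery used in the inductive steps of parts~2 and~3. The $[M\rw{C_i}]_\nu$ and $[N\rw{C_i}]_\mu$ wrappers cannot be stripped off at the outset---doing so would require knowing $C_i\subseteq\compseq{\varphi}{t}$, which is the very conclusion---so they must be carried through the induction, and one then has to reconcile the formula $\afc(\psi^t_{j_0},\sigma_t,C_i\rw{C_{j_0}})[M\rw{C_i}]_\nu$, which lives in the $\varphi\rw{C_i}$-world, with $\psi^t_{j_0}[M\rw{C_{j_0}}]_\nu$, which lives in the $\varphi\rw{C_{j_0}}$-world. This amounts to establishing that the after-function commutes with $[\placeholder]_\nu$, $[\placeholder]_\mu$ and $\rw{\placeholder}$, and to carefully tracking which of $C_i$, $C_{j_0}$, $C_i\rw{C_{j_0}}$, $\compseq{\varphi}{t}$, $\entcon{\varphi}{t+1}$ is the correct context for each subformula under consideration; the saturation requirement $C_{j_0}\preceq C_i$ built into $J_i$ is exactly the condition that makes all these reconciliations go through. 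A secondary obstacle is the separate treatment of the base case $t=r$ in part~3.
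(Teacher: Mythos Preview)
Your treatment of part~1 is correct and essentially matches the paper's proof.

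For parts~2 and~3, however, the induction-on-$t$ strategy has a genuine gap in both of the places you flagged as delicate. First, the ``commutation of the after-function with $[\placeholder]_\nu$'' that you need to transport $w_{t+1}\models\afc(\psi^t_{j_0},\sigma_t,C_i\rw{C_{j_0}})[M\rw{C_i}]_\nu$ back to $w_t\models\psi^t_{j_0}[M\rw{C_{j_0}}]_\nu$ does not hold, even semantically. Take $\psi = p \ltlU q$ with $p\ltlU q \notin M$ (so $M = \varnothing$ works, and then $M\subseteq\GF{\varphi}{w}{0}$ trivially). If $q\in\sigma_t$ then $\afc(\psi,\sigma_t,C)\sim\top$, whence $\afc(\psi,\sigma_t,C)[M']_\nu\sim\top$, yet $\psi[M]_\nu=\bot$. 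Such $\ltlU$-subformulae do occur inside $\psi^t_{j_0}$ whenever a weakening condition involves a future operator, so this is not an artificial obstruction. Second, your removal of the wrapper $[M\rw{C_i}]_\nu$ via Lemma~\ref{lemma_MN_properties}.2 is circular: the hypothesis of that lemma, applied at $w_{t+1}$ with the set $M\rw{C_i}$, demands $w_{t+1}\models\ltlG\ltlF(\psi\rw{C_i})$ for each $\psi\in M$. From $M\subseteq\GF{\varphi}{w}{0}$ and Lemma~\ref{lemma_entailed_sequence} you only get $w_{t+1}\models\ltlG\ltlF(\psi\rw{\compseq{\varphi}{t+1}})$; replacing $\compseq{\varphi}{t+1}$ by $C_i$ requires precisely the containment $C_i\subseteq\compseq{\varphi}{t+1}$ you are trying to prove.

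The paper avoids both issues by abandoning the induction on $t$ for parts~2 and~3. Instead, it unfolds the entire recursion defining $\psi^t_i$: satisfying $\psi^t_i[M\rw{C_i}]_\nu$ at $w_t$ picks out a full witnessing sequence $j_0=1,j_1,\dots,j_t=i$ of indices, and the weakening-condition conjuncts along this sequence, pushed forward by $\afc$ to time $t$, are all satisfied (still under the outer $[M\rw{C_{j_t}}]_\nu$). The key implication ``$w_t\models f(\xi\rw{C_{j_s}},s+1)[M\rw{C_{j_t}}]_\nu \Rightarrow w_s\models\xi\rw{\compseq{\varphi}{s}}$'' is then proved by \emph{structural} induction on $\xi\in\sff(\varphi)$, with an inner induction on $t-s$. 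The structural hypothesis gives, before touching $\xi$ itself, the partial containment $C_{j_t}\cap\psf(\xi)\subseteq\compseq{\varphi}{t}$ (from the weakening-condition conjuncts for the \emph{proper} past subformulae of $\xi$), and this is exactly what is needed to strip the $[M]_\nu$ wrapper in the base case $s=t$ via Lemma~\ref{lemma_MN_properties}.2 without circularity. The operator-by-operator case analysis then replaces the global commutation lemma you were reaching for.
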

    \begin{proof}
        1) We prove this by induction on $t$. By definition we have $\psi^0_{i_0} = \top$, so that $w \models \psi^0_{i_0}$, proving the base case. For the inductive step, assume $t \geq 0$. Consider the following disjunct of $\psi^{t+1}_{i_{t+1}}$: 
        \begin{equation*}
            \afc(\psi^t_{i_t}, \sigma_t, C_{i_{t+1}}\rw{C_{i_t}}) \land \bigwedge_{\xi \in C_{i_{t+1}}} \afc(\wc(\xi\rw{C_{i_t}}), \sigma_t, C_{i_{t + 1}}\rw{C_{i_t}}).
        \end{equation*}
        By our choice of indices and Lemma~\ref{lemma_double_context} we have $C_{i_{t+1}}\rw{C_{i_t}} = \entcon{\varphi}{t+1}$. The inductive hypothesis establishes that $w_t \models \psi^t_{i_t}$ and by definition we have that $w_t \models \wc(\xi)$ for all $\xi \in \entcon{\varphi}{t+1}$. By applying Lemma~\ref{lemma_entailment_correct} we conclude that $w_{t+1}$ satisfies the above disjunct, so that $w_{t + 1} \models \psi^{t+1}_{i_{t+1}}$.

        2) If $t = 0$ then the proof is immediate. Assume $t > 0$. By assumption there exists a sequence of indices $j_0, j_1, \dots, j_t$, with $j_0 = 1$, such that
\begin{equation}
    w_t \models \bigwedge_{\xi \in C_{j_s}} f(\wc(\xi\rw{C_{j_{s-1}}}), s)[M\rw{C_{j_t}}]_\nu,
    \label{equ:clause 2 main assumption}
\end{equation}
for $s \in [1..t]$, where $f$ is inductively defined as
\begin{equation*}
\begin{aligned}
    &f(\xi, t + 1) = \xi \\
    &f(\xi, s) = f(\afc(\xi, \sigma_{s-1}, C_{j_s}\rw{C_{j_{s-1}}}), s + 1) &&(s \in [1..t]).
\end{aligned}
\end{equation*}
Let $\xi \in \sff(\varphi)$ be an arbitrary formula. We will prove that
\begin{equation*}
w_t \models f(\xi\rw{C_{j_s}}, s + 1)[M\rw{C_{j_t}}]_\nu \Rightarrow w_s \models \xi\rw{\compseq{\varphi}{s}},
\end{equation*}
for all $s \leq t$ by induction on the structure of $\xi$. 
By Equation~\ref{equ:clause 2 main assumption}, we have 
$w_t \models f(\xi\rw{C_{j_s}}, s + 1)[M\rw{C_{j_t}}]_\nu$.

The base case is trivial since propositional formulae are neither affected by rewrites of $\placeholder\rw{\placeholder}$ nor $\placeholder[\placeholder]_\nu$. 

For the inductive step, 
we assume that 
$w_t \models f(\zeta\rw{C_{j_s}}, s + 1)[M\rw{C_{j_t}}]_\nu \Rightarrow w_s \models \zeta\rw{\compseq{\varphi}{s}}$ for all $s\leq t$ and $\zeta$ proper subformula of $\xi$.
By the structure of Equation~\ref{equ:clause 2 main assumption},  we have $w_t \models f(\wc(\zeta\rw{C_{j_{t-1}}}), t)[M\rw{C_{j_{t-1}}}]_\nu$ and hence $w_{t-1} \models \wc(\zeta\rw{\compseq{\varphi}{t-1}})$.
So by the induction we have $C_{j_t} \cap \psf(\xi) \subseteq \compseq{\varphi}{t}$.

We prove that $w_t \models f(\xi\rw{C_{j_s}}, s + 1)[M\rw{C_{j_t}}]_\nu \Rightarrow w_s \models \xi\rw{\compseq{\varphi}{s}}$ by further induction on $t - s$. 
The base case is when $s = t > 0$. Then as $f(\alpha,t+1)=\alpha$ we have that the assumption
$w_t \models f(\xi\rw{C_{j_t}}, t + 1)[M\rw{C_{j_t}}]_\nu$ is rewritten to $
w_t \models \xi\rw{C_{j_t}}[M\rw{C_{j_t}}]_\nu$.
As mentioned, $C_{j_t} \cap \psf(\xi) \subseteq \compseq{\varphi}{t}$, so that $\xi\rw{\compseq{\varphi}{t}}[M\rw{\compseq{\varphi}{t}}]_\nu$ is weaker than $\xi\rw{C_{j_t}}[M\rw{C_{j_t}}]_\nu$.
Hence $
w_t \models \xi\rw{C_{j_t}}[M\rw{C_{j_t}}]_\nu$ implies 
$w_t \models \xi\rw{\compseq{\varphi}{t}}[M\rw{\compseq{\varphi}{t}}]_\nu$.
From the assumption $M \subseteq \GF{\varphi}{w}{0}$ we get $M\rw{\compseq{\varphi}{t}} \subseteq \GF{\varphi\rw{\compseq{\varphi}{t}}}{w_t}{0}$. Finally, from Lemma~\ref{lemma_MN_properties}.2 we get that $w_t \models \xi\rw{\compseq{\varphi}{t}}$, that is, $w_t \models \xi\rw{\compseq{\varphi}{s}}$.

We now turn to the inductive step with $s < t$ for some representative cases:
\begin{itemize}
\item Case $\xi = \ltlX \zeta$: We have
\begin{equation*}
\begin{aligned}
w_t &\models f((\ltlX \zeta)\rw{C_{j_s}}, s + 1)[M\rw{C_{j_t}}]_\nu \Rightarrow \\
w_t &\models f(\afc((\ltlX \zeta)\rw{C_{j_s}}, \sigma_s, C_{j_{s+1}}\rw{C_{j_s}}), s + 2)[M\rw{C_{j_t}}]_\nu \\
w_t &\models f(\puc(\zeta\rw{C_{j_s}}, \sigma_s, C_{j_{s+1}}\rw{C_{j_s}}), s + 2)[M\rw{C_{j_t}}]_\nu.
\end{aligned}
\end{equation*}
In particular we have $w_t \models f(\zeta\rw{C_{j_{s + 1}}}, s + 2)[M\rw{C_{j_t}}]_\nu$. The inductive hypothesis gives us that $w_{s + 1} \models \zeta\rw{C_{j_{s + 1}}}$. Since $C_{j_{s + 1}} \cap \psf(\zeta) \subseteq \compseq{\varphi}{s + 1}$ we get $w_{s + 1} \models \zeta\rw{\compseq{\varphi}{s + 1}}$ and thus $w_s \models (\ltlX \zeta)\rw{\compseq{\varphi}{s}}$.
\item Case $\xi = \zeta \ltlU \gamma$: We have
\begin{equation*}
\begin{aligned}
w_t &\models &&f((\zeta \ltlU \gamma)\rw{C_{j_s}}, s + 1)[M\rw{C_{j_t}}]_\nu \Rightarrow \\
w_t &\models &&f(\gamma\rw{C_{j_s}}, s + 1)[M\rw{C_{j_t}}]_\nu \lor f(\zeta\rw{C_{j_s}}, s + 1)[M\rw{C_{j_t}}]_\nu \ \land \\
&&&f(\puc((\zeta \ltlU \gamma)\rw{C_{j_s}}, \sigma_s, C_{j_{s+1}}\rw{C_{j_s}}), s + 2)[M\rw{C_{j_t}}]_\nu.
\end{aligned}
\end{equation*}
If $w_t \models f(\gamma\rw{C_{j_s}}, s + 1)[M\rw{C_{j_t}}]_\nu$ then the inductive hypothesis gives us $w_s \models \gamma\rw{\compseq{\varphi}{s}}$, so that $w_s \models (\zeta \ltlU \gamma)\rw{\compseq{\varphi}{s}}$. If instead
\begin{equation*}
w_t \models f(\zeta\rw{C_{j_s}}, s + 1)[M\rw{C_{j_t}}]_\nu \ \land \ f(\puc((\zeta \ltlU \gamma)\rw{C_{j_s}}, \sigma_s, C_{j_{s+1}}\rw{C_{j_s}}), s + 2)[M\rw{C_{j_t}}]_\nu,
\end{equation*}
then the inductive hypothesis gives us $w_s \models \zeta\rw{\compseq{\varphi}{s}}$. The satisfaction of the second conjunct implies that $w_t \models f((\zeta \ltlU \gamma)\rw{C_{j_{s + 1}}}, s + 2)[M\rw{C_{j_t}}]_\nu$. From the inductive hypothesis on $t - s$ we get that $w_{s + 1} \models (\zeta \ltlU \gamma)\rw{\compseq{\varphi}{s + 1}}$. Taking these two facts together we get $w_s \models \zeta\rw{\compseq{\varphi}{s}} \land (\ltlX(\zeta \ltlU \gamma))\rw{\compseq{\varphi}{s}}$, so that $w_s \models (\zeta \ltlU \gamma)\rw{\compseq{\varphi}{s}}$.
\item Case $\xi = \ltlY \zeta$: We have
\begin{equation*}
\begin{aligned}
w_t &\models f((\ltlY \zeta)\rw{C_{j_s}}, s + 1)[M\rw{C_{j_t}}]_\nu \Rightarrow \\
w_t &\models f(\afc((\ltlY \zeta)\rw{C_{j_s}}, \sigma_s, C_{j_{s+1}}\rw{C_{j_s}}), s + 2)[M\rw{C_{j_t}}]_\nu.
\end{aligned}
\end{equation*}
This means that $\ltlY \zeta \in C_{j_s}$, or we would have $w_t \models \bot$. Hence $\ltlY \zeta \in \compseq{\varphi}{s}$ and $w_t \models (\ltlY \zeta)\rw{\compseq{\varphi}{s}}$.
\item Case $\xi = \zeta \ltlS \gamma$: We have
\begin{equation*}
w_t \models f((\zeta \ltlS \gamma)\rw{C_{j_s}}, s + 1)[M\rw{C_{j_t}}]_\nu.
\end{equation*}
If $\zeta \ltlS \gamma \in C_{j_s}$ then either $w_t \models f(\zeta\rw{C_{j_s}}, s + 1)[M\rw{C_{j_t}}]_\nu$ or $w_t \models f(\gamma\rw{C_{j_s}}, s + 1)[M\rw{C_{j_t}}]_\nu$. We then get from the inductive hypothesis that either $w_s \models \zeta\rw{\compseq{\varphi}{s}}$ or $w_s \models \gamma\rw{\compseq{\varphi}{s}}$, and also that $\zeta \ltlS \gamma \in \compseq{\varphi}{s}$. We conclude that $w_s \models (\zeta \ltlS \gamma)\rw{\compseq{\varphi}{s}}$. The case when $\zeta \ltlS \gamma \notin C_{j_s}$ is similar.
\end{itemize}
        
3) This is completely analogous to the proof of 2) and omitted. The difference is that stability at $t$ is required in order to apply Lemma~\ref{lemma_MN_properties}.4.
\end{proof}

We are now ready to show that the B\"uchi automata we construct for a $\mu$-formula $\psi$ indeed work correctly.
That is, based on the correctness of the $\nu$-rewritings applied to it, the automaton accepts if and only if the $\mu$-formula $\psi$ holds infinitely often. 

\begin{lemma}
\label{lemma_buchi_automaton}
     Let $\varphi$ be a formula, $w$ a word, and $\psi \in \mu(\varphi)$. Let $\mathcal{H}_\varphi \coloneqq (S, S_0, \delta_{\mathcal{H}})$ be the WC automaton for $\varphi$ as defined in Section~\ref{subsection_wc_automaton}, and $\mathcal{B}^\psi_{2, N} \coloneqq (Q, Q_0, \delta, \alpha)$ the B\"uchi runner automaton as defined in Section~\ref{subsection_verifying_master}, both over $2^{\Var(\varphi)}$. Then
\begin{itemize}
    \item 1) \, If $\FG{\varphi}{w}{0} \subseteq N$ and $w$ satisfies
    \begin{equation*}
    \forall t \geq 0 \, . \, w_t \models \ltlF(\psi\rw{\compseq{\varphi}{t}}[N\rw{\compseq{\varphi}{t}}]_\mu),
    \end{equation*}
    then $\mathcal{H}_\varphi \ltimes \mathcal{B}^\psi_{2, N}$ accepts $w$.
    \item 2) \, If $N \subseteq \FG{\varphi}{w}{0}$ and there exists an accepting run of $\mathcal{H}_\varphi \ltimes \mathcal{B}^\psi_{2, N}$ on $w$ then
    \begin{equation*}
        \forall s \, . \, \exists t \geq s \, . \, w_t \models \ltlF (\psi\rw{\compseq{\varphi}{t}}[N\rw{\compseq{\varphi}{t}}]_\mu).
    \end{equation*}
\end{itemize}
\end{lemma}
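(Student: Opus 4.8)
The plan is to prove the two implications separately; in both, the key dynamical observation is that the runner automaton $\mathcal{B}^\psi_{2,N}$ alternates between two phases. While its state $\zeta$ is not $\sim\top$ it merely applies $\af$ to a formula of the fragment $\mu$-$\mathit{pLTL}$, for which Lemma~\ref{lemma_mu_nu_limits}.1 says that satisfaction of the formula by the current suffix is equivalent to $\af$ eventually rewriting it to $\top$. Whenever $\zeta\sim\top$ it restarts, replacing the state by $\bigvee_{i}\ltlF(\psi\rw{C_i}[N\rw{C_i}]_\mu)\land\xi_i[N\rw{C_i}]_\mu$, where $\xi^\times$ is the state of the bed automaton $\mathcal{H}_\varphi$ at the next step. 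For the restart I would single out the disjunct indexed by $i$ with $C_i=\compseq{\varphi}{t}$ at the relevant time $t$ (such an index exists by Lemma~\ref{lemma_context_sequence_exists}), and control the conjunct $\xi_i[N\rw{C_i}]_\mu$ using Lemma~\ref{lemma_rc_correct} together with Lemma~\ref{lemma_MN_properties}. Since both $\mathcal{H}_\varphi$ and $\mathcal{B}^\psi_{2,N}$ are deterministic, acceptance of $w$ by the composition means that its unique run visits $[\top]_\sim\times S$ infinitely often.

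For 1), assuming $\FG{\varphi}{w}{0}\subseteq N$ and $w_t\models\ltlF(\psi\rw{\compseq{\varphi}{t}}[N\rw{\compseq{\varphi}{t}}]_\mu)$ for all $t$, I would build by induction on $n$ a strictly increasing sequence $u_0=0<u_1<\cdots$ and formulae $\rho_n\in\mu$-$\mathit{pLTL}$ such that the runner state after reading $w_{0u_n}$ is $[\rho_n]_\sim$ and $w_{u_n}\models\rho_n$. The base case uses $\compseq{\varphi}{0}=C_1$ and $\varphi\rw{C_1}=\varphi$, so $\rho_0=\ltlF(\psi[N]_\mu)$, satisfied by $w$ by the hypothesis at $t=0$. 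For the step, Lemma~\ref{lemma_mu_nu_limits}.1 yields the least index $v_n\geq u_n$ at which the runner state is $[\top]_\sim$ (so the run visits the B\"uchi set at $v_n$); the restart then leads at time $u_{n+1}=v_n+1$ to $\rho_{n+1}=\bigvee_{i}\ltlF(\psi\rw{C_i}[N\rw{C_i}]_\mu)\land\xi_i[N\rw{C_i}]_\mu$ with $\xi^\times$ the bed state at $u_{n+1}$. Choosing $i$ with $C_i=\compseq{\varphi}{u_{n+1}}$, the first conjunct holds by the hypothesis at $t=u_{n+1}$; for the second, Lemma~\ref{lemma_rc_correct}.1 gives $w_{u_{n+1}}\models\xi_i$, and Lemma~\ref{lemma_MN_properties}.3 upgrades this to $w_{u_{n+1}}\models\xi_i[N\rw{C_i}]_\mu$, using $\xi_i\in\mathbb{B}(\varphi\rw{C_i})$ together with $\FG{\varphi\rw{C_i}}{w_{u_{n+1}}}{0}\subseteq N\rw{C_i}$, an inclusion that follows from $\FG{\varphi}{w}{0}\subseteq N$, the shift-invariance of $\ltlF\ltlG$, and Lemma~\ref{lemma_entailed_sequence}. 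Hence $w_{u_{n+1}}\models\rho_{n+1}$, closing the induction; as the $v_n$ strictly increase, the run visits $[\top]_\sim\times S$ infinitely often and the composition accepts $w$.

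For 2), I would fix, via Lemma~\ref{lemma_stability_two}, a stability index $r$ for $w$ with respect to $\varphi$, and let $s$ be arbitrary. Given an accepting run, choose a visit to $[\top]_\sim\times S$ at some time $v\geq\max(s,r)$; the restart places the runner at time $u=v+1$ in a state $[\rho]_\sim$ with $\rho=\bigvee_{i}\ltlF(\psi\rw{C_i}[N\rw{C_i}]_\mu)\land\xi_i[N\rw{C_i}]_\mu\in\mu$-$\mathit{pLTL}$, and, the run being accepting, it reaches $[\top]_\sim$ again later, so Lemma~\ref{lemma_mu_nu_limits}.1 applied to $\rho$ over the suffix $w_u$ gives $w_u\models\rho$. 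Fixing a satisfied disjunct $i$: from its second conjunct $w_u\models\xi_i[N\rw{C_i}]_\mu$, the hypothesis $N\subseteq\FG{\varphi}{w}{0}$, stability at $r\leq u$, and Lemma~\ref{lemma_rc_correct}.3 we obtain $C_i\subseteq\compseq{\varphi}{u}$; combining this with the first conjunct and the monotonicity of the transformation $\chi\mapsto\chi\rw{C}[N\rw{C}]_\mu$ in $C$ (a larger $C$ yielding a logically weaker formula, exactly as invoked in the proof of Lemma~\ref{lemma_rc_correct}.2) yields $w_u\models\ltlF(\psi\rw{\compseq{\varphi}{u}}[N\rw{\compseq{\varphi}{u}}]_\mu)$, so $t=u\geq s$ is the desired witness.

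The main obstacle I anticipate is the bookkeeping at the seams of the cascade composition --- matching the bed-automaton tuple visible to the runner at a given step with $\cwc(\langle[\top]_\sim,[\bot]_\sim,\dots,[\bot]_\sim\rangle,w_{0u})$ for the right index $u$ --- together with discharging the precise side conditions feeding Lemma~\ref{lemma_MN_properties} and Lemma~\ref{lemma_rc_correct}: the $\FG$-inclusions for the rewritten formulae and the monotonicity of the combined $\rw{\cdot}[\cdot]_\mu$ transformation. These are exactly the spots where the presence of past operators makes the argument heavier than its pure-future counterpart; the automaton-level reasoning (restarting the $\mu$-check, seeing $\top$ infinitely often) is as in Esparza et al.~\cite{esparza_unified_translation}.
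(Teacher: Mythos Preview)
Your proposal is correct and follows essentially the same approach as the paper's proof: both parts rely on the two-phase dynamics of the runner (apply $\af$ until $\top$, then restart), use Lemma~\ref{lemma_mu_nu_limits}.1 to link satisfaction with reaching $\top$, pick out the disjunct indexed by $C_i=\compseq{\varphi}{t}$ via Lemma~\ref{lemma_context_sequence_exists} and Lemma~\ref{lemma_rc_correct}.1 in part~1), and in part~2) combine stability with Lemma~\ref{lemma_rc_correct}.3 to obtain $C_i\subseteq\compseq{\varphi}{t}$ before invoking the monotonicity of $\chi\mapsto\chi\rw{C}[N\rw{C}]_\mu$ in $C$. If anything, you are more explicit than the paper about this last monotonicity step and about the bed/runner index alignment; the paper merely asserts the final implication and uses Theorem~\ref{theorem_af_correct} where you cite Lemma~\ref{lemma_mu_nu_limits}.1, which amounts to the same thing here.
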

\begin{proof}
1) By assumption $w \models \ltlF (\psi[N]_\mu)$. By Lemma~\ref{lemma_mu_nu_limits}, there exists a smallest index $t$ such that $\af(\ltlF(\psi)[N]_\mu, w_{0t}) \sim \top$, showing that an accepting state will be visited at least once. The subsequent state of $\mathcal{B}^\psi_{2, N}$ visited after reading $w_{t(t+1)}$ is propositionally equivalent to
\begin{equation*}
    \bigvee_{i \in [1..k]} \ltlF(\psi\rw{C_i}[N\rw{C_i}]_\mu) \land \xi^t_i[N\rw{C_i}]_\mu,
\end{equation*}
where $\xi^t_i$ is the $i^{\text{th}}$ component of the bed automaton's state. Again by assumption, we have $w_{t+1} \models \ltlF(\psi\rw{\compseq{\varphi}{t + 1}}[N\rw{\compseq{\varphi}{t + 1}}]_\mu)$. Moreover, from Lemma~\ref{lemma_rc_correct}.1 we get $w_{t+1} \models \xi^t_l$, where $l$ is such that $C_l = \compseq{\varphi}{t}$; such an $l$ exists by Lemma~\ref{lemma_context_sequence_exists}. From the assumption that $\FG{\varphi}{w}{0} \subseteq N$ we get $\FG{\varphi\rw{\compseq{\varphi}{t + 1}}}{w_{t+1}}{0} \subseteq N\rw{\compseq{\varphi}{t + 1}}$, and from an application of Lemma~\ref{lemma_MN_properties}.3 that $w_{t+1} \models \xi^t_l[N\rw{\compseq{\varphi}{t + 1}}]_\mu$, and so
\begin{equation*}
    w_{t+1} \models \bigvee_{i \in [1..k]} \ltlF(\psi\rw{C_i}[N\rw{C_i}]_\mu) \land \xi^t_i[N\rw{C_i}]_\mu.
\end{equation*}
By the same argument as before, we see that an accepting state will again be visited. The proof follows by induction.

2) Assume $u = (\xi_0,s_0), (\xi_1,s_1), \dots$ is an accepting run of $\mathcal{H}_\varphi \ltimes \mathcal{B}^\psi_{2, N}$ on $w$. Let $t$ be an arbitrary index such that $w$ is stable with respect to $\varphi$ and $s_{t-1} \in \alpha$. Then $s_t$ corresponds to the formula
\begin{equation*}
    \eta=\bigvee_{i \in [1..k]} \ltlF(\psi\rw{C_i}[N\rw{C_i}]_\mu) \land \xi^t_i[N\rw{C_i}]_\mu,
\end{equation*}
where $\xi^t_i$ is the $i^{\text{th}}$ component of $\xi_t$, as before. Let $t'$ be the minimal index such that $t'\geq t$ and $s_{t'}\in \alpha$. We then have $w_{t'} \models \af(\eta, w_{tt'})$. From Theorem~\ref{theorem_af_correct} it follows that $w_t\models \eta$. Let $i$ be such that $w_t \models \ltlF(\psi\rw{C_i}[N\rw{C_i}]_\mu \wedge \xi^t_i[N\rw{C_i}]_\mu$. In particular, $w_t\models \xi^t_i[N\rw{C_i}]_\mu$. Since $t$ was chosen such that $w$ is stable with respect to $\varphi$ at $t$ and by the assumption that $N \subseteq \FG{\varphi}{w}{0}$, we can apply Lemma~\ref{lemma_rc_correct}.3 and get $C_i \subseteq \compseq{\varphi}{t}$. Then $w_t\models \ltlF (\psi\rw{\compseq{\varphi}{t}}[N\rw{\compseq{\varphi}{t}}]_\mu)$. As $t$ was chosen as an arbitrary point such that $s_t \in \alpha$ the claim follows.
\end{proof}

We turn to the dual construction of a co-B\"uchi automaton for a $\nu$-formula $\psi$.
Based on the correctness of the $\mu$-rewritings applied to it, the automaton accepts if and only if the $\nu$-formula $\psi$ holds eventually always. 

\begin{lemma}
\label{lemma_cobuchi_automaton}
     Let $\varphi$ be a formula and $\psi \in \nu(\varphi)$. Let $\mathcal{H}_\varphi \coloneqq (S, S_0, \delta_{\mathcal{H}})$ be the WC automaton for $\varphi$ as defined in Section~\ref{subsection_wc_automaton}, and $\mathcal{C}^\psi_{3, M} \coloneqq (Q, Q_0, \delta, \alpha)$ the co-B\"uchi runner automaton as defined in Section~\ref{subsection_verifying_master}, both over $2^{\Var(\varphi)}$. Let $w$ be a word that is stable with respect to $\varphi$ at index $r$. Then
\begin{itemize}
    \item 1) \, If $\GF{\varphi}{w}{0} \subseteq M$ and $w$ satisfies
    \begin{equation*}
    \forall t \geq r \, . \, w_t \models \ltlG (\psi\rw{\compseq{\varphi}{t}}[M\rw{\compseq{\varphi}{t}}]_\nu),
    \end{equation*}
    then $\mathcal{H}_\varphi \ltimes \mathcal{C}^\psi_{3, M}$ accepts $w$.
    \item 2) \, If $M \subseteq \GF{\varphi}{w}{0}$ and there exists an accepting run of $\mathcal{H}_\varphi \ltimes \mathcal{C}^\psi_{3, M}$ on $w$ then
    \begin{equation*}
    \exists t \geq 0 \, . \, w_t \models \ltlG (\psi\rw{\compseq{\varphi}{t}}[M\rw{\compseq{\varphi}{t}}]_\nu)
\end{equation*}
\end{itemize}
\end{lemma}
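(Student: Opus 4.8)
The statement is the exact dual of Lemma~\ref{lemma_buchi_automaton}, and the plan is to run that proof with $\mu$ and $\nu$, $\ltlF$ and $\ltlG$, $M$ and $N$, the restart sink $\top$ and $\bot$, and ``infinitely often'' and ``finitely often'' interchanged. Concretely, the three supporting results change: Lemma~\ref{lemma_mu_nu_limits}.2 replaces Lemma~\ref{lemma_mu_nu_limits}.1, Lemma~\ref{lemma_MN_properties}.1 and~.2 replace Lemma~\ref{lemma_MN_properties}.3 and~.4, and Lemma~\ref{lemma_rc_correct}.2 replaces Lemma~\ref{lemma_rc_correct}.3. Note that Lemma~\ref{lemma_rc_correct}.2 carries no stability hypothesis, so part~2 below will not actually use that $w$ is stable at $r$; part~1 will.

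For part~1, assume $\GF{\varphi}{w}{0}\subseteq M$ and $w_t\models\ltlG(\psi\rw{\compseq{\varphi}{t}}[M\rw{\compseq{\varphi}{t}}]_\nu)$ for all $t\geq r$. The runner $\mathcal{C}^\psi_{3,M}$ starts in $[\ltlG(\psi[M]_\nu)]_\sim$ and evolves by $\af$ as long as its state is not $[\bot]_\sim$; whenever its state is $[\bot]_\sim$, reading the next letter takes the product to some state $s'$ of $\mathcal{H}_\varphi$ and runner state $\bigvee_{i\in[1..k]}\ltlG(\psi\rw{C_i}[M\rw{C_i}]_\nu)\wedge s'_i[M\rw{C_i}]_\nu$. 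I claim that if the runner is $[\bot]_\sim$ at a time $t$ with $t+1\geq r$, then $t$ is its last visit to $[\bot]_\sim$: by Lemma~\ref{lemma_context_sequence_exists} there is an index $\ell$ with $C_\ell=\compseq{\varphi}{t+1}$; Lemma~\ref{lemma_rc_correct}.1 gives $w_{t+1}\models s'_\ell$, and since $s'_\ell\in\mathbb{B}(\varphi\rw{C_\ell})$ and $w$ is stable at $r\leq t+1$ we get $\F{s'_\ell}{w_{t+1}}{0}\subseteq\F{\varphi\rw{C_\ell}}{w_{t+1}}{0}\subseteq M\rw{C_\ell}$, using $\F{\varphi}{w}{t+1}=\GF{\varphi}{w}{t+1}=\GF{\varphi}{w}{0}\subseteq M$ and Lemma~\ref{lemma_entailed_sequence}; then Lemma~\ref{lemma_MN_properties}.1 yields $w_{t+1}\models s'_\ell[M\rw{C_\ell}]_\nu$, and the hypothesis yields $w_{t+1}\models\ltlG(\psi\rw{C_\ell}[M\rw{C_\ell}]_\nu)$, so $w_{t+1}$ satisfies the $\ell$-th disjunct, hence the runner state at $t+1$. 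That state is a $\nu$-$\mathit{pLTL}$ formula (because $\placeholder[\placeholder]_\nu$ produces $\nu$-$\mathit{pLTL}$ formulae and $\ltlG$ is a $\nu$-operator), and from $t+1$ on the runner is obtained from it by repeated $\af$; so by Lemma~\ref{lemma_mu_nu_limits}.2 the runner is never $[\bot]_\sim$ again. Since there are only finitely many times before $r$, the runner component of the run visits $[\bot]_\sim$ finitely often (and trivially so if it never visits $[\bot]_\sim$), hence $\mathcal{H}_\varphi\ltimes\mathcal{C}^\psi_{3,M}$ accepts $w$.

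For part~2, assume $M\subseteq\GF{\varphi}{w}{0}$ and fix an accepting run, so the runner visits $[\bot]_\sim$ finitely often. If it never does, the runner is $\af(\ltlG(\psi[M]_\nu),w_{0s})$ for every $s$, none $\sim\bot$, so Lemma~\ref{lemma_mu_nu_limits}.2 gives $w\models\ltlG(\psi[M]_\nu)$; since $\compseq{\varphi}{0}=C_1$ and $\placeholder\rw{C_1}$ is the identity on subformulae of $\varphi$ (so $\psi\rw{C_1}=\psi$ and $M\rw{C_1}=M$), this is $w_0\models\ltlG(\psi\rw{\compseq{\varphi}{0}}[M\rw{\compseq{\varphi}{0}}]_\nu)$ and $t=0$ witnesses the claim. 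Otherwise let $t_1$ be the last visit of the runner to $[\bot]_\sim$; the runner state at $t_1+1$ is $\eta=\bigvee_{i\in[1..k]}\ltlG(\psi\rw{C_i}[M\rw{C_i}]_\nu)\wedge s'_i[M\rw{C_i}]_\nu$, and from $t_1+1$ on the runner is $\af(\eta,\placeholder)$, never $\sim\bot$, so (as $\eta\in\nu$-$\mathit{pLTL}$) Lemma~\ref{lemma_mu_nu_limits}.2 gives $w_{t_1+1}\models\eta$. Pick $i$ with $w_{t_1+1}\models\ltlG(\psi\rw{C_i}[M\rw{C_i}]_\nu)\wedge s'_i[M\rw{C_i}]_\nu$; from $w_{t_1+1}\models s'_i[M\rw{C_i}]_\nu$ and $M\subseteq\GF{\varphi}{w}{0}$, Lemma~\ref{lemma_rc_correct}.2 gives $C_i\subseteq\compseq{\varphi}{t_1+1}$, and since enlarging the rewrite set weakens a positive formula (Proposition~\ref{proposition_rewrite_subset} together with the monotonicity of $\placeholder[M]_\nu$ under enlargement of the rewrite set already exploited in Lemma~\ref{lemma_rc_correct}.2) we conclude $w_{t_1+1}\models\ltlG(\psi\rw{\compseq{\varphi}{t_1+1}}[M\rw{\compseq{\varphi}{t_1+1}}]_\nu)$, so $t=t_1+1$ works.

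The routine part is the bookkeeping and case split; the delicate point, exactly as in Lemma~\ref{lemma_buchi_automaton}, is lining the three supporting lemmas up on the \emph{right} disjunct of the restart formula --- the one indexed by $C_\ell=\compseq{\varphi}{t+1}$ (resp.\ $C_i\subseteq\compseq{\varphi}{t_1+1}$) --- so that the weakened-past data tracked by $\mathcal{H}_\varphi$ matches the contexts $\compseq{\varphi}{\placeholder}$ under which Lemma~\ref{lemma_entailed_sequence} transfers satisfaction between $w$ and its suffixes. That alignment, rather than any single hard idea, is where the effort lies, and it is the reason part~1 needs stability whereas part~2 does not.
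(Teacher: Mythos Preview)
Your proposal is correct and takes essentially the same approach as the paper, which simply states that the proof is dual to that of Lemma~\ref{lemma_buchi_automaton} and omits the details. You carry out that dualization explicitly, correctly identifying the substitutions (Lemma~\ref{lemma_mu_nu_limits}.2 for .1, Lemma~\ref{lemma_MN_properties}.1 for .3, and Lemma~\ref{lemma_rc_correct}.2 for .3), and your observation that stability shifts from part~2 to part~1 under the duality is exactly right and a point the paper leaves implicit.
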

\begin{proof}
    This is dual to the proof of Lemma~\ref{lemma_buchi_automaton} and omitted.
\end{proof}

The final automaton checks the derivative of the formula itself.
We show that, based on correctness of the sets $M$ and $N$, this automaton accepts exactly the words for which the formula holds. 

\begin{lemma}
\label{lemma_stability_automaton}
Let $\varphi$ be a formula and $w$ a word that is stable with respect to $\varphi$ at index $s$. Let $\mathcal{H}_\varphi \coloneqq (S, S_0, \delta_{\mathcal{H}})$ be the WC automaton for $\varphi$ as defined in Section~\ref{subsection_wc_automaton}, and $\mathcal{C}^\varphi_{1, M} \coloneqq (Q, Q_0, \delta, \alpha)$ the co-B\"uchi automaton as defined in Section~\ref{subsection_verifying_master}, both over $2^{\Var(\varphi)}$. Then,
\begin{itemize}
    \item 1) \, If $M = \GF{\varphi}{w}{0}$ and $w$ satisfies
    \begin{equation*}
    \forall r \geq s \, . \, w_r \models \afc(\varphi, w_{0r}, \entseq{\varphi}{r})[M\rw{\entseq{\varphi}{r}}]_\nu,
    \end{equation*}
    then $\mathcal{H}_\varphi \ltimes \mathcal{C}^\varphi_{1, M}$ accepts $w$.
    \item 2) \, If $M \subseteq \GF{\varphi}{w}{0}$ and there exists an accepting run of $\mathcal{H}_\varphi \ltimes \mathcal{C}^\varphi_{1, M}$ on $w$, then
    \begin{equation*}
    w \models \varphi.
    \end{equation*}
\end{itemize}
\end{lemma}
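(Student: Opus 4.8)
The plan is to read the behaviour of $\mathcal{H}_\varphi \ltimes \mathcal{C}^\varphi_{1,M}$ off the two components of the $\mathcal{C}^\varphi_{1,M}$-part of a state: after reading $w_{0t}$ the first component is $\af(\varphi, w_{0t})$, and the second component $\zeta_t$ is a positive Boolean combination of formulae of the shape $\chi[M\rw{C_i}]_\nu$, hence a $\nu$-formula. Since the acceptance condition is $\alpha = \mathbb{B}^\lor(\varphi)_{/\sim} \times \{ [\bot]_\sim \}$, the composition accepts $w$ iff $\zeta_t \sim \bot$ for only finitely many $t$, i.e.\ iff the automaton ``resets'' the second component only finitely often. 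Two facts from the excerpt will be used throughout: $\af$, $\placeholder[M]_\nu$, and $\placeholder\rw{\placeholder}$ all distribute over $\lor$, so a disjunction stays $\not\sim\bot$ as long as one disjunct does; and by Lemma~\ref{lemma_mu_nu_limits}.2, for a $\nu$-formula $\theta$ one has $w_t \models \theta$ iff $\af(\theta, \placeholder)$ never becomes $\sim\bot$. I will also use that $\mu$- and $\nu$-stability propagate forward, so $w$ is stable at every index $\geq s$.

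For 1), assume $M = \GF{\varphi}{w}{0}$ and $w_r \models \afc(\varphi, w_{0r}, \entseq{\varphi}{r})[M\rw{\compseq{\varphi}{r}}]_\nu$ for all $r \geq s$. If $\zeta_t \not\sim \bot$ for all $t \geq s$ we are done, so let $r \geq s$ be the first index with $\zeta_r \sim \bot$. At step $r$ the automaton resets and $\zeta_{r+1} \sim \bigvee_i \af(\varphi, w_{0(r+1)})[M\rw{C_i}]_\nu \wedge \xi_i[M\rw{C_i}]_\nu$, where $\xi^\times$ is the WC-state after $w_{0(r+1)}$. Pick $i$ with $C_i = \compseq{\varphi}{r+1}$, which exists by Lemma~\ref{lemma_context_sequence_exists}. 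Distributing $\placeholder[M\rw{C_i}]_\nu$ over the disjunction $\af(\varphi, w_{0(r+1)})$ exhibits $\afc(\varphi, w_{0(r+1)}, \entseq{\varphi}{r+1})[M\rw{\compseq{\varphi}{r+1}}]_\nu$ as one of its sub-disjuncts, which $w_{r+1}$ satisfies by hypothesis; and $w_{r+1} \models \xi_i[M\rw{C_i}]_\nu$ follows from Lemma~\ref{lemma_rc_correct}.1 (which gives $w_{r+1} \models \xi_i$) together with Lemma~\ref{lemma_MN_properties}.1, using stability at $r+1$ to match the relevant weakened $\F$-set with $M\rw{\compseq{\varphi}{r+1}}$. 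So $\zeta_{r+1}$ has a satisfied $\nu$-disjunct; by Lemma~\ref{lemma_mu_nu_limits}.2 the $\af$-image of that disjunct is never $\sim\bot$, hence neither is $\zeta_t = \af(\zeta_{r+1}, w_{(r+1)t})$ for $t > r$, so no further reset occurs and the automaton accepts.

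For 2), assume $M \subseteq \GF{\varphi}{w}{0}$ and fix an accepting run; let $r$ be the last index with $\zeta_r \sim \bot$ (and $r = -1$ if there is none, reading $\zeta_{r+1}$ as $\varphi[M]_\nu$). For $t > r$, $\zeta_t = \af(\zeta_{r+1}, w_{(r+1)t}) \not\sim\bot$; since $\zeta_{r+1}$ is a $\nu$-formula, Lemma~\ref{lemma_mu_nu_limits}.2 gives $w_{r+1} \models \zeta_{r+1}$. If $r = -1$ this is $w \models \varphi[M]_\nu$, and Lemma~\ref{lemma_MN_properties}.2 yields $w \models \varphi$. Otherwise some disjunct holds: $w_{r+1} \models \af(\varphi, w_{0(r+1)})[M\rw{C_i}]_\nu$ and $w_{r+1} \models \xi_i[M\rw{C_i}]_\nu$ for some $i$. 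From the latter and $M \subseteq \GF{\varphi}{w}{0}$, Lemma~\ref{lemma_rc_correct}.2 gives $C_i \subseteq \compseq{\varphi}{r+1}$; invoking the monotonicity of $\placeholder\rw{\placeholder}[M\rw{\placeholder}]_\nu$ under enlargement of its set argument — the estimate already used inside the proof of Lemma~\ref{lemma_rc_correct}.2 — lifts the former to $w_{r+1} \models \af(\varphi, w_{0(r+1)})[M\rw{\compseq{\varphi}{r+1}}]_\nu$. Finally, $M \subseteq \GF{\varphi}{w}{0}$ together with Lemma~\ref{lemma_entailed_sequence} applied to $\ltlG\ltlF\psi$ for $\psi \in M$ gives $M\rw{\compseq{\varphi}{r+1}} \subseteq \GF{\af(\varphi, w_{0(r+1)})}{w_{r+1}}{0}$, so Lemma~\ref{lemma_MN_properties}.2 yields $w_{r+1} \models \af(\varphi, w_{0(r+1)})$, whence $w \models \varphi$ by Theorem~\ref{theorem_af_correct}.

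I expect the main obstacle to be the containment gap in 2): Lemma~\ref{lemma_rc_correct}.2 gives only $C_i \subseteq \compseq{\varphi}{r+1}$, not equality, so one must carefully check that enlarging the rewriting set genuinely weakens $\af(\varphi, w_{0(r+1)})[M\rw{\placeholder}]_\nu$ — delicate because $\af(\varphi, w_{0(r+1)})$ is a disjunction over all guesses of entailed sets, so its disjuncts have their past operators weakened in different, not-necessarily-``correct'' ways — and that the $\mu$-subformulae of $\af(\varphi, w_{0(r+1)})$ touched by Lemma~\ref{lemma_MN_properties}.2 all lie in $\GF{\af(\varphi, w_{0(r+1)})}{w_{r+1}}{0}$. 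The remaining ingredients are the homomorphism-plus-$\nu$-limit arguments already deployed for the Büchi and co-Büchi runner automata in Lemmata~\ref{lemma_buchi_automaton} and~\ref{lemma_cobuchi_automaton}, plus routine bookkeeping for the synchronous composition with $\mathcal{H}_\varphi$.
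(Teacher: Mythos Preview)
Your proposal is correct and follows essentially the same route as the paper: for part~1, after any reset at or beyond the stability index you pick the disjunct with $C_i=\compseq{\varphi}{t}$, show it is satisfied via Lemma~\ref{lemma_rc_correct}.1, Lemma~\ref{lemma_MN_properties}.1, and the hypothesis, and then use Lemma~\ref{lemma_mu_nu_limits}.2 to rule out further resets; for part~2, you take the last reset, extract a satisfied disjunct, invoke Lemma~\ref{lemma_rc_correct}.2 to get $C_i\subseteq\compseq{\varphi}{r+1}$, pass to the larger rewrite set, and finish with Lemma~\ref{lemma_MN_properties}.2 and Theorem~\ref{theorem_af_correct}. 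You are in fact slightly more careful than the paper in two places: you treat the no-reset case explicitly, and you flag the monotonicity step (enlarging $C_i$ to $\compseq{\varphi}{r+1}$ inside $[M\rw{\placeholder}]_\nu$) as delicate, whereas the paper simply writes ``and so $w_j \models \af(\varphi, w_{0j})[M\rw{\compseq{\varphi}{j}}]$'' without further comment.
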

\begin{proof}
1) Assume that $M = \GF{\varphi}{w}{0}$ and that
\begin{equation*}
\forall r \geq s \, . \, w_r \models \afc(\varphi, w_{0r}, \entseq{\varphi}{r})[M\rw{\entseq{\varphi}{r}}]_\nu
\end{equation*}
Let $u = (\xi_0, \zeta_0), (\xi_1, \zeta_1), \ldots$ be the run of $\mathcal{H}_\varphi \ltimes \mathcal{C}^\varphi_{1, M}$ on $w$. 
If $u$ does not visit $\alpha$, then we are done. Otherwise, there is $t - 1 \geq s$ such that $\zeta_{t-1}\in \alpha$. Let $i_t$ be the index such that $C_{i_t} = \compseq{\varphi}{t}$; this exists by Lemma~\ref{lemma_context_sequence_exists}. By Lemma~\ref{lemma_rc_correct}.1, we have that $w_t \models \xi_{i_t}$, where $\xi_{i_t}$ is the $i_t^{\text{th}}$ updated component of the bed automaton, satisfying $C_{i_t} = \compseq{\varphi}{t}$. As $w$ is stable with respect to $\varphi$ at index $t$ and $t \geq r$, it is also stable at index $t$.
Thus $M\rw{\compseq{\varphi}{t}} = \GF{\varphi\rw{\compseq{\varphi}{t}}}{w_t}{0} = \F{\varphi\rw{\compseq{\varphi}{t}}}{w_t}{0}$. From Lemma~\ref{lemma_MN_properties}.1 we have $w_t \models \xi_{i_t}[M\rw{\compseq{\varphi}{t}}]_\nu$.

By assumption we have $w_t\models \afc(\varphi, w_{0t}, \entseq{\varphi}{t})[M\rw{\compseq{\varphi}{t}}]_\nu$. Together,
\begin{equation*}
    w_t \models \afc(\varphi, w_{0t}, \entseq{\varphi}{t})[M\rw{\compseq{\varphi}{t}}]_\nu \wedge \xi_{i_t}[M\rw{\compseq{\varphi}{t}}]_\nu,
\end{equation*}
which is a disjunct in the second component of $s_t$. By Lemma~\ref{lemma_mu_nu_limits}, this component never becomes propositionally equivalent to $\bot$ and so $u$ is accepting.

2) Let $w$ be a word accepted by $\mathcal{H}_\varphi \ltimes \mathcal{C}^\varphi_{1, M}$ and $u = (\xi_0, \zeta_0), (\xi_1, \zeta_1), \dots$ an accepting run of $\mathcal{H}_\varphi \ltimes \mathcal{C}^\varphi_{1, M}$ on $w$. 
Let $j-1$ be the final index such that the second component of $\zeta_{j-1}$ is propositionally equivalent to $\bot$. By construction the first component of $\zeta_{j-1}$ is $\af(\varphi,w_{0(j-1)})$. Then the second component of $\zeta_j$ is
\begin{equation*}
    \bigvee_{i \in [1..k]} \af(\varphi,w_{0j})[M\rw{C_i}]_\nu \land \xi^j_i[M\rw{C_i}]_\nu,
\end{equation*}
where $\xi^j_i$ is the $i^{\text{th}}$ updated component of the bed automaton's state $\xi_j$.  As the run is accepting, it follows that for some $C_i$ we have $w_j \models \af(\varphi, w_{0j})[M\rw{C_i}]_\nu$ and $w_j \models \xi^j_i[M\rw{C_i}]_\nu$. By assumption $M \subseteq \GF{\varphi}{w}{0}$ meaning that Lemma~\ref{lemma_rc_correct}.2 can be applied to yield that $C_i \subseteq \compseq{\varphi}{j}$, and so $w_j \models \af(\varphi, w_{0j})[M\rw{\compseq{\varphi}{j}}]$. Moreover, the assumption $M \subseteq \GF{\varphi}{w}{0}$ also gives us that $M\rw{\compseq{\varphi}{j}} \subseteq \GF{\varphi\rw{\compseq{\varphi}{j}}}{w_j}{0}$. Hence $w_j \models \af(\varphi, w_{0j})$ by Lemma~\ref{lemma_MN_properties}.2 and by Theorem~\ref{theorem_af_correct} we get $w \models \varphi$.
\end{proof}

We are finally ready to state the correctness of the full construction.
We show that the different assumptions made by the different automata, when checked together indeed lead to the correct conclusion.
Mostly, the different assumptions made by the different automata on each other are not really circular.

\theoremrabincorrectness*
\begin{proof}
Assume $w$ is a word such that $w \models \varphi$, where $w$ is stable with respect to $\varphi$ at index $s$. Then $w \models \afc(\varphi, w_{0r}, \entseq{\varphi}{r})$ for all $r \geq 0$ by Lemma~\ref{lemma_entailment_correct}. By Theorem~\ref{theorem_master} there exist $M \subseteq \mu(\varphi)$ and $N \subseteq \nu(\varphi)$ such that its premises hold. In particular, for $M = \GF{\varphi}{w}{0}$ and $N = \FG{\varphi}{w}{0}$, the stronger properties below hold due to Lemma~\ref{lemma_MN_properties}.1 and Lemma~\ref{lemma_MN_=>}:
\begin{equation*}
\begin{aligned}
    &1) \ \forall r \geq s \, . \, w_r \models \afc(\varphi, w_{0r}, \entseq{\varphi}{r})[M\rw{\compseq{\varphi}{r}}]_\nu. \\
    &2) \ \forall \psi \in M \, . \, \forall t \geq 0 \, . \, w_t \models \ltlF (\psi\rw{\compseq{\varphi}{t}}[N\rw{\compseq{\varphi}{t}}]_\mu). \\
    &3) \ \forall \psi \in N \, . \, \forall t \geq s \, . \, w_t \models \ltlG (\psi\rw{\compseq{\varphi}{t}}[M\rw{\compseq{\varphi}{t}}]_\nu).
\end{aligned}
\end{equation*}
This means that the premises of Lemma~\ref{lemma_stability_automaton}.1 hold, as well as the premises of Lemma~\ref{lemma_buchi_automaton}.1 for each $\psi \in M$ and of Lemma~\ref{lemma_cobuchi_automaton}.1 for each $\psi \in N$. Hence $w$ is accepted by $\mathcal{H}_\varphi \ltimes \mathcal{C}^1_{\varphi, M}$, by $\mathcal{H}_\varphi \ltimes \mathcal{B}^\psi_{2, N}$ for every $\psi \in M$, and by $\mathcal{H}_\varphi \ltimes \mathcal{C}^\psi_{3, N}$ for every $\psi \in N$. It follows that $w$ is accepted by $\mathcal{R}_{\varphi, M, N}$ for this particular choice of $M$ and $N$, and hence also by $\mathcal{A}_{DRA}(\varphi)$.

Now assume that $w$ is a word accepted by $\mathcal{A}_{DRA}(\varphi)$. This means that there exists an accepting run on $\mathcal{R}_{\varphi, M, N}$ of $w$ for some $M \subseteq \mu(\varphi)$ and $N \subseteq \nu(\varphi)$. Then $w$ is also accepted by $\mathcal{H}_\varphi \ltimes \mathcal{C}^1_{\varphi, M}$, by $\mathcal{H}_\varphi \ltimes \mathcal{B}^\psi_{2, N}$ for every $\psi \in M$, and by $\mathcal{H}_\varphi \ltimes \mathcal{C}^\psi_{3, N}$ for every $\psi \in N$. It remains to show that $M \subseteq \GF{\varphi}{w}{0}$, so that the premises of Lemma~\ref{lemma_stability_automaton}.2 hold, implying that $w \models \varphi$. We prove the stronger statement that both $M \subseteq \GF{\varphi}{w}{0}$ and $N \subseteq \FG{\varphi}{w}{0}$ hold.

The argument is very similar to that used in the proof of Lemma~\ref{lemma_MN_<=}. We again fix an enumeration $\psi_1, \psi_2, \dots, \psi_n$ of the subformulae of $M \cup N$ such that for all $i, j \in [1..n]$, if $\psi_i$ is a proper subformula of $\psi_j$, then $i < j$. We define
\begin{equation*}
\begin{aligned}
    M_0 &\coloneqq \varnothing & 
    M_i &\coloneqq
    \begin{cases}
        M_{i-1} \cup \{ \psi_i \} &\mbox{ if } \psi_i \in \mu(\varphi) \\
        M_{i-1} &\mbox{ if } \psi_i \in \nu(\varphi)
    \end{cases}
    \\
    N_0 &\coloneqq \varnothing & 
    N_i &\coloneqq
    \begin{cases}
        N_{i-1} &\mbox{ if } \psi_i \in \mu(\varphi) \\
        N_{i-1} \cup \{ \psi_i \} &\mbox{ if } \psi_i \in \nu(\varphi),
    \end{cases}
\end{aligned}
\end{equation*}
and proceed by induction on $n$. The base case when $n = 0$ is vacuously true. We proceed with the inductive step with $n = i > 0$. Assume $\psi_i \in M$. By the inductive hypothesis, $\FG{\varphi}{w}{0} \supseteq N_{i - 1} = N_i$. Thus the premises of Lemma~\ref{lemma_buchi_automaton}.2 are satisfied, so that
\begin{equation*}
    \forall r \, . \, \exists t \geq r \, . \, w_t \models \ltlF (\psi_i\rw{\compseq{\varphi}{t}}[N_i\rw{\compseq{\varphi}{t}}]_\mu).
\end{equation*}
This implies that there are infinitely many points $t$ such that $w_t \models \ltlF (\psi_i\rw{\compseq{\varphi}{t}}[N_i\rw{\compseq{\varphi}{t}}]_\mu)$. In particular, for each such $t \geq r$, since $N_i\rw{\compseq{\varphi}{t}} \subseteq \FG{\varphi\rw{\compseq{\varphi}{t}}}{w_t}{0} = \G{\varphi\rw{\compseq{\varphi}{t}}}{w_t}{0}$, we get from Lemma~\ref{lemma_MN_properties}.4 that $w_t \models \ltlF (\psi_i\rw{\compseq{\varphi}{t}})$. It follows that $\psi_i \in \GF{\varphi}{w}{0}$, which together with the inductive hypothesis shows that $M_i \subseteq \GF{\varphi}{w}{0}$. In the same manner, assuming that $\psi_i \in N$, we get
\begin{equation*}
    \exists t \geq 0 \, . \, w_t \models \ltlG (\psi_i\rw{\compseq{\varphi}{t}}[M_i\rw{\compseq{\varphi}{t}}]_\nu).
\end{equation*}
For such an index $t$, we get by Lemma~\ref{lemma_MN_properties}.2 and the same reasoning as above that $w_t \models \ltlG(\psi_i\rw{\compseq{\varphi}{t}})$. Then $\psi_i \in \FG{\varphi}{w}{0}$, which together with the inductive hypothesis establishes that $N_i \subseteq \FG{\varphi}{w}{0}$.
\end{proof}

We now analyze the size of the resulting automaton.

\corollaryfullcomplexity*
\begin{proof}
By Theorem~\ref{theorem_rabin_correctness}, the automaton $\mathcal{A}_{DRA}(\varphi)$ recognizes $\varphi$. That it has at most $2^n$ Rabin pairs is clear, since each component $\mathcal{R}_{\varphi, M, N}$ contributes with one Rabin pair and there are at most $2^n$ choices of $M$ and $N$. We will now establish bounds on the size of the state spaces for the components using the results of Lemma~\ref{lemma_wc_size} and Lemma~\ref{lemma_reach_bound} from Appendix~\ref{appendix_reach}.

Since the WC automaton is the bed automaton for $\mathcal{C}^1_{\varphi, M}$ as well as for the automata $\mathcal{B}^\psi_{2, N}$ for every $\psi \in M$ and $\mathcal{C}^\psi_{3, M}$ for every $\psi \in N$, only a single copy of it is required. As such, we will restrict our attention to the sets $Q$ of the runner automata below.

The automaton $\mathcal{B}^2_{M, N}$ is a product of Büchi automata, each of which has at most $2^{2^{n + 2m + 1}}$ states. Thus this component consists of at most $n(2^{2^{n + 2m + 1}})$ states. Observe that it is not necessary to take the product of the state spaces of the automata $\mathcal{B}^\psi_{2, N}$; they verify that simplified subformulae of $\varphi$ hold infinitely often, and so may be chained in a round-robin manner. The automaton $\mathcal{C}^3_{M, N}$, being a product of co-Büchi automata, consists of at most $(2^{2^{n + 2m + 1}})^n$ states. Finally, the automaton $\mathcal{C}^1_{\varphi, M}$ consists of at most $(2^{2^{n + 2m}})^2 = 2^{2^{n + 2m + 1}}$ states. 

For each choice of $M$ and $N$, ignoring the contribution of the WC automaton, the number of states of $\mathcal{R}_{\varphi, M, N}$ is thus bounded by
\begin{equation*}
\begin{aligned}
    n(2^{2^{n + 2m + 1}}) \cdot (2^{2^{n + 2m + 1}})^n \cdot 2^{2^{n + 2m + 1}} &= 2^{\log_2(n) + 2^{n + 2m + 1}} \cdot 2^{n 2^{n + 2m + 1}} \cdot 2^{2^{n +2m + 1}} \\
    &\leq \left( 2^{2^{2n + 2m + 1}} \right)^3 \\
    &\leq 2^{2^{2n + 2m + 3}}.
\end{aligned}
\end{equation*}

Together, with the WC automaton's contribution of at most $2^{2^{n + 2m}}$ states appearing only once and with at most $2^n$ combinations of sets $M$ and $N$, we get the following bound on the number of states of $\mathcal{A}_{DRA}$:
\begin{equation*}
\begin{aligned}
    2^{2^{n + 2m}} \cdot \left(2^{2^{2n + 2m + 3}}\right)^{2^n} =
    2^{2^{n + 2m}} \cdot 2^{2^{3n + 2m + 3}} \leq 2^{2^{3n + 2m + 4}}.
\end{aligned}
\end{equation*}
It follows that the number of states of the automaton is in $2^{2^{\mathcal{O}(\vert \varphi \vert)}}$.
\end{proof}
\end{document}